\documentclass[letterpaper]{article} 
\usepackage{aaai24}
\usepackage{times}  
\usepackage{helvet}  
\usepackage{courier}  
\usepackage[hyphens]{url}  
\usepackage{graphicx} 
\urlstyle{rm} 
\usepackage{natbib}  
\usepackage{caption} 
\frenchspacing  
\setlength{\pdfpagewidth}{8.5in} 
\setlength{\pdfpageheight}{11in} 
%

%

%
\pdfinfo{
/TemplateVersion (2024.1)
}

\nocopyright 

%
%
%

\usepackage[switch, modulo]{lineno}


\setcounter{secnumdepth}{2} 

\title{A Conflict-Aware Optimal Goal Assignment Algorithm for Multi-Robot Systems}

\author {
    Aakash and Indranil Saha
}
\affiliations {
Department of Computer Science and Engineering\\
Indian Institute of Technology Kanpur\\
\{aakashp, isaha\}@cse.iitk.ac.in
}



\usepackage{comment}  
\usepackage{subcaption} 
\usepackage[ruled,vlined,linesnumbered]{algorithm2e}  
\usepackage{amsmath} 
\usepackage{amsfonts} 
\usepackage{amssymb} 
\usepackage{amsthm} 
\newtheorem{problem}{Problem} 
\newtheorem{theorem}{Theorem} 
\newtheorem*{theorem*}{Theorem} 
\newtheorem{remark}{Remark} 
\newtheorem{lemma}{Lemma} 
\usepackage{multicol} 
\usepackage{multirow} 
\usepackage{soul}
\usepackage{booktabs}
\usepackage{siunitx}
\usepackage[table]{xcolor}

\allowdisplaybreaks

\SetKwInput{KwInput}{Input}                
\SetKwInput{KwOutput}{Output}              
\SetKwInput{KwGlobal}{Global}              
\SetKwProg{myproc}{procedure}{}{end}  

\hyphenpenalty = 10000
\exhyphenpenalty = 10000

\newcommand{\ustar}{\Upsilon^{*}\xspace}
\newcommand{\tstar}{\theta^{*}\xspace}
\newcommand{\unext}{\Upsilon_{next}\xspace}

\begin{document}

\maketitle

\label{sec:abstract}

\begin{abstract}

The fundamental goal assignment problem for a multi-robot application aims to assign a unique goal to each robot while ensuring collision-free paths, minimizing the total movement cost.
A plausible algorithmic solution to this NP-hard problem involves an iterative process that integrates a task planner to compute the goal assignment while ignoring the collision possibilities among the robots and a multi-agent path-finding algorithm to find the collision-free trajectories for a given assignment.
This procedure involves a method for computing the next best assignment given the current best assignment.
A naive way of computing the next best assignment, as done in the state-of-the-art solutions, becomes a roadblock to achieving scalability in solving the overall problem.
To obviate this bottleneck, we propose an efficient conflict-guided method to compute the next best assignment. 
Additionally, we introduce two more optimizations to the algorithm --- first for avoiding the unconstrained path computations between robot-goal pairs wherever possible, and the second to prevent duplicate constrained path computations for multiple robot-goal pairs. 
We extensively evaluate our algorithm for up to a hundred robots on several benchmark workspaces.
The results demonstrate that the proposed algorithm achieves nearly an order of magnitude speedup over the state-of-the-art algorithm, showcasing its efficacy in real-world scenarios.

\end{abstract}
\section{Introduction}
\label{sec-intro}


A fundamental problem related to operating a multi-robot system is the \emph{anonymous multi-agent pathfinding (AMAPF)} problem~\cite{stern2019mapf}. 
In this problem, the initial locations of a set of robots and a set of goal locations are given, and the aim is to assign each robot to a goal such that the assigned paths are collision-free and the total cost or makespan of the trajectories of the robots to their designated goal locations is minimized. 
Though this problem is at the core of many multi-robot applications such as warehouse management~\cite{TKDKK21,Chen21,DasNS21}, disaster response~\cite{Tian09}, precision agriculture~\cite{Gonzalez-de-Santos17}, mail and goods delivery~\cite{GrippaBWB19}, etc., 
its computational intractability~\cite{Yu_LaValle_2013} poses a major hindrance in developing a scalable solution.

To deal with the computational hardness of the AMAPF problem, two main approaches have been studied in the literature. The first approach is a decoupled one, where the task assignment problem and the path planning problem are solved sequentially. These methods~\cite{Turpin-RSS-13, 6630671, DBLP:journals/arobots/TurpinMMK14} are scalable but generally considers only makespan as the objective function and do not provide any guarantee of optimality. The second approach integrates the task planner and the motion planner, and, through their iterative interaction, find the optimal collision-free paths for the robots upon convergence. These methods~\cite{ ma2016optimal,cbsta}, though guarantee optimality, suffer from the lack of scalability. A major outstanding question 
on multi-agent planning is whether there could be a scalable algorithm that can also guarantee the optimality of the collision-free paths for the agents
in terms of the total cost.

In this paper, we present an optimal algorithm for the AMAPF problem with minimization of the total cost as the objective, which is scalable to a large number of robots. 
Our algorithmic solution is inspired by the design of CBS-TA~\cite{cbsta}, which involves an iterative process that integrates a task planner based on Hungarian algorithm~\cite{kuhn1955hungarian} to compute the goal assignment while ignoring the collision possibilities among the robots and the state-of-the-art multi-agent path-finding algorithm CBS~\cite{cbs_journal} to find the collision-free trajectories for a given assignment.
This procedure is based on a method for computing the next best assignment given the current best assignment.
However, CBS-TA's naive way of computing the next best assignment using the standard algorithm~\cite{murty1968algorithm, chegireddy1987algorithms} becomes the major bottleneck to achieving scalability in solving the overall problem.

In this paper, we target this prime roadblock to achieve scalability in solving an AMAPF problem optimally. Towards this goal, we propose a conflict-guided method to compute the next best assignment. Our method keeps track of the partial robot-goal assignments that led to an increase in the cost while resolving robot-robot collisions during the collision-free multi-agent path-finding stage. 
We use these cost-increasing partial robot-goal assignments to formulate constraints and use them to postpone computing new assignments containing the same partial robot-goal assignments. This postponement is enforced until there is a possibility of computing robot-goal assignments with a lower cost. Afterwards, the postponed assignments are reconsidered to find collision-free paths.

While postponement of the computation of the inefficient assignments has been our major algorithmic contribution, we also introduce two other powerful optimizations to the basic integrated task assignment and path planning algorithm. 
First, in our algorithm, we incorporate the mechanism for computing an assignment without computing the independent paths between all robot goal pairs, as introduced in~\cite{our1stPaper}. The standard assignment computation algorithm Hungarian method requires the costs for all robot-goal pairs a priori. \cite{our1stPaper} provides an algorithm that can compute an optimal assignment while computing only a few independent paths between robot-goal pairs in a demand-driven way. We adapt their mechanism in computing the assignments in such a way that we compute the paths between the robot-goal pairs minimally while computing an assignment, and the paths computed for one assignment computation can be reused in the subsequent assignment computations.
Second, we introduce a path memoization mechanism to prevent duplication in constrained path computations for multiple robot-goal pairs.

We implement our algorithm in Python and evaluate it through extensive experimentation on several standard benchmark workspaces. As a baseline, we use CBS-TA~\cite{cbsta}, which is considered to be the state-of-the-art for solving the AMAPF problem optimally. We evaluate our algorithm for up to $100$ robots. Experimental results demonstrate that our algorithm scales well with the number of robots and outperforms CBS-TA by an order of magnitude. We also evaluate the efficacy of the individual optimizations through an ablation study, which confirms that all three optimizations introduced in this paper contribute significantly to the overall performance of the algorithm.

\section{Problem}
\label{sec:problem}

\subsection{Preliminaries}
\label{problem:prelims}

\noindent
\textbf{Notations.} Let $\mathbb{N}$ represent the set of natural numbers and 
$\mathbb{R}$ represent the set of real numbers. For a natural number $n \in \mathbb{N}$, let $[n]$ denote the set $\{1, 2, 3, \ldots, n\}$.

\smallskip
\noindent
\textbf{Workspace.}
A workspace $WS$ is a $2$D rectangular space which is divided by grid lines into square-shaped cells.
Each cell can be addressed using its coordinates.
In general, a workspace consists of a set $O$ of cells that are occupied by obstacles.
Mathematically, $ WS = \langle dimension, O \rangle $, where $dimension$ is a tuple of the number of cells along the coordinate axes. 

\smallskip
\noindent
\textbf{Motion Primitives.}
In a $2$D workspace, we assume that a robot can move in $4$ directions 
(Up, Down, Left and Right) 
from its current location while respecting the workspace boundaries.
It also has the option to stay in its current cell. 
Each motion primitive incurs a cost of $1$~unit, indicative of the time required for its execution.

\subsection{Problem Definition}
\label{problem:prob_def}

In a typical multi-robot application, robots must complete a set of tasks within a designated workspace.
These tasks are associated with specific locations within the workspace, referred to as \emph{goal locations}.
The robots must navigate to their respective goal locations to complete their assigned tasks.
A \emph{collision-free robot-goal assignment} is defined as the allocation of a unique goal to each robot, such that the resulting paths are devoid of collisions (robot-robot collisions or robot-obstacle collisions).
The \emph{cost} of such an assignment represents the sum of the costs incurred by each robot to reach their respective goal locations.
An \emph{optimal collision-free robot-goal assignment} is the one that minimizes the cost.
%
%
We now define the problem formally.


\begin{problem}
\label{prob1}
Consider a multi-robot application with a grid-based workspace $WS$, the set $S$ of start locations of the robots, and the set $F$ of goal locations as inputs.
%
Each robot can be assigned to at most one goal, and each goal can be served by at most one robot. 
Find a collision-free robot-goal assignment for the multi-robot application such that the cost of the assignment is minimized.
\end{problem}



\subsection{Example}
\label{problem:example}

        

\begin{figure}
    \begin{subfigure}{0.234\textwidth}
    \centering
        \includegraphics[scale=0.4]{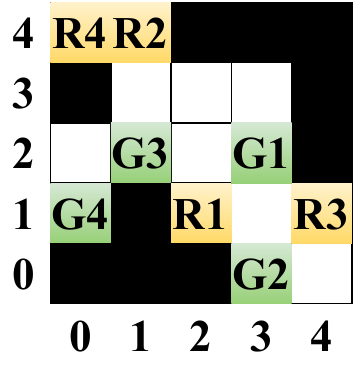}
        \caption{Workspace}
        \label{fig:ws}
    \end{subfigure}
    \hfill
    \begin{subfigure}{0.234\textwidth}
    \centering
        \raisebox{0.3cm}{\includegraphics[scale=0.45]{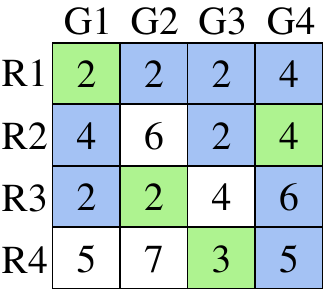}}
        \caption{Final Assignment}
        \label{fig:output}
    \end{subfigure}

    \caption{ An example problem }
    \label{fig:example-prob}
\end{figure}

\noindent 
Consider a multi-robot application in the $5\times5$ workspace as illustrated in Figure~\ref{fig:example-prob}(\subref{fig:ws}). 
It consists of four robots (R1,~R2, R3, and R4) and four goals (G1, G2, G3,~and~G4).
The black-colored cells denote the obstacles.
We aim to determine an assignment of robots to goals that optimizes the total cost of movement of all the robots while guaranteeing collision-free paths.
Figure~\ref{fig:example-prob}(\subref{fig:output}) displays a cost matrix associated with the multi-robot application, which is an outcome of our approach. 
Here, the colored cells reflect the actual costs, while the white cells display the heuristic costs (computed using the Manhattan distance) corresponding to the robot-goal pairs.
The matrix cells in green color depict the assignment that has an optimal total cost while ensuring collision-free paths.
In the next section, we present an efficient approach to achieve this aim.

\section{Algorithm}
\label{sec:algo}


\begin{algorithm*}[!htb] 
    \DontPrintSemicolon
    \caption{ Optimal Collision-Free Goal Assignment for Multi-Robot Systems }
    \label{algo1}
    \fontsize{9pt}{10pt}\selectfont
    \BlankLine
    \BlankLine
    \KwGlobal{ $OPEN$, $MemPath$, $root\_gen$ }
    \begin{multicols}{2}
    
\vspace*{1px}


\myproc{$\mathtt{solve\_goal\_assignment}${ ( $WS$, $S$, $F$ )}} 
{
    $ root\_gen \gets 0 $  \;


    \For{$ i = 1$ \KwTo $|S|$  \label{algo1:compute_hcosts}  }
    {
        \For{$ j = 1$ \KwTo $|F|$}
        {
            $C(i)(j) = \mathtt{get\_Manhattan\_distance}(S(i), F(j))$\;      \label{algo1:calc_Manhattan_dist}
        }
    }
    
    \smallskip
    $ \langle C, \theta \rangle \gets \mathtt{get\_first\_assignment}(C)  $   \;  \label{algo1:first_asgn}
    \smallskip
    
    \If{ $\theta$ is not $None$ }
    {
        $ latest\_root\_cost \gets \mathtt{ create\_root }(\theta)  $   \;  \label{algo1:create_first_root}
    }
    
    \medskip
    \While{$OPEN$ is not empty   \label{algo1:while_begins}}
    {
        \smallskip
        $ \ustar \gets \text{extract best node from} \ OPEN  $  \;   \label{algo1:best_node_from_open}
        \smallskip

        $ conflict \gets \mathtt{ get\_first\_conflict }(\ustar.path) $  \;  \label{algo1:find_conflict}

        \If{ $conflict$ is $None$ \label{algo1_no_conflict1} }  
        {
            \Return $ \ustar.path $  \;
        }  \label{algo1_no_conflict2}

        $ edge1 = ( conflict.robot_1, \ustar.M(conflict.robot_1) )  $  \;  \label{algo1:acc_conf_begin}
        $ edge2 = ( conflict.robot_2, \ustar.M(conflict.robot_2) )  $  \;
        
        $ Acc\_conf(\ustar.root\_id).\mathtt{add}( edge1, edge2 )  $  \;  \label{algo1:acc_conf_end}

        \smallskip
        $ Constraints \gets \mathtt{ create\_constraints }(conflict)  $  \;  \label{algo1:create_constraints}
        
        $ \mathtt{create\_child\_nodes}( \ustar, Constraints, WS, S, F )  $  \;   \label{algo1:create_child_nodes}

        \medskip
        $ \unext \gets \text{peep at next best node within} \ OPEN  $  \;  \label{algo1:peep}  \smallskip

        \If{ $\unext.root$ = $False$  \label{algo1:cbs_node} }
        {
            \smallskip
            \If{ $\unext.cost > \unext.parent.cost $ and $\unext.conf\_reg = False$  }
            {
                \smallskip
                $ cost_{inc} = \unext.cost - \unext.root\_cost $ \;

                $ conf\_rec = \langle Acc\_conf( \unext.root\_id ), cost_{inc} \rangle $  \;
                
                \smallskip
                \If{ $ conf\_rec $ not in $Conflict\_rec $ }
                {
                    $ Conflict\_rec.\mathtt{add}( conf\_rec ) $  \;
                }

                $ \unext.conf\_reg \gets True $  \;

            }  \label{algo1:conf_reg_ends}

            \smallskip
            \If{ $\unext.cost > latest\_root\_cost $  \label{algo1:next_best_asgn_begin} }
            {
                \smallskip
                $ \langle C, \theta \rangle \gets \mathtt{get\_next\_assignment}(C, $  $\hspace*{2.8cm} Conflict\_rec )  $   \;
                \smallskip
            
                \If{ $\theta$ is not $None$ }
                {
                    $ latest\_root\_cost \gets \mathtt{ create\_root }(\theta)  $   \;  \label{algo1:create_next_root}
                    
                    

            
                                    
                }
            }       \label{algo1:next_best_asgn_end}
        }

    }  \label{algo1:while_ends} 

} 


\columnbreak
\vspace*{1px}


\myproc{$\mathtt{create\_root}${ ( $\theta$ ) }  \label{algo1:create_root_begin} } 
{
    $ root\_gen \gets root\_gen + 1 $  \;
    
    \smallskip
    $ \Upsilon \gets new \ search \ node $   \; 
    $ \Upsilon.root \gets True $  \;
    
    $ \Upsilon.root\_id \gets root\_gen $  \;
    $ \Upsilon.parent \gets None $  \;
    $ \Upsilon.constraints \gets \emptyset  $  \;

    \smallskip
    $ \Upsilon.M \gets \theta.M $, \label{algo1:create_root_match}  \ \ 
    $ \Upsilon.path \gets \theta.path $  \;
    
    $ \Upsilon.cost \gets \theta.cost $,  \ \ 
    $ \Upsilon.root\_cost \gets \theta.cost $  \;

    \smallskip
    $ \text{insert} \ \  \Upsilon  \ \  \text{to}  \ \  OPEN $  \;

    \Return $ \Upsilon.cost $  \;
}  \label{algo1:create_root_end}

\BlankLine
\BlankLine

\myproc{$\mathtt{create\_child\_nodes}${($\ustar$, $Constraints$, $WS$, $S$, $F$)}}
{
    \smallskip
    \For{each robot $r \in Constraints$}
    {
        $ \Upsilon \gets new \ search \ node $  \; 
        $ \Upsilon.root \gets False $   \; 
        $ \Upsilon.root\_id \gets \ustar.root\_id $  \;
        $ \Upsilon.root\_cost \gets \ustar.root\_cost $  \;
        $ \Upsilon.parent = \ustar $  \;

        \smallskip
        $ \Upsilon.constraints \gets \ustar.constraints \ + $ \hspace*{2.34cm} $ Constraints(r)  $  \;

        $ \Upsilon.M \gets \ustar.M $,  \ \ 
        $ \Upsilon.path \gets \ustar.path $  \;
        $ g = \Upsilon.M(r) $, \ \  $ c = \Upsilon.constraints(r) $   \;

        \smallskip

        $ \Upsilon.path(r) \gets \mathtt{get\_constrained\_path}(r, g, c)  $  \;  
        
        $ \Upsilon.cost \gets \ustar.cost - \mathtt{compute\_cost}(\ustar.path(r)) + \hspace*{2.63cm} \mathtt{compute\_cost}(\Upsilon.path(r))  $  \;

        $ \text{insert} \ \  \Upsilon  \ \  \text{to}  \ \  OPEN $  \;

    }
}

\BlankLine \BlankLine

\myproc{$\mathtt{get\_constrained\_path}${($r$, $g$, $c$)}  \label{algo1:memo_begin}  }
{
    \smallskip
    \If{ $ MemPath(r)(g)(c)$ does not exist  }
    {
        \smallskip
        $ MemPath(r)(g)(c) \gets \mathtt{ASTAR}(WS, S(r), F(g), c) $
    }

    \Return $ MemPath(r)(g)(c) $  \;  \label{algo1:memo_end}

}

\end{multicols}
\BlankLine
\end{algorithm*}


\begin{algorithm}[!htb]  
    \DontPrintSemicolon
    \caption{Conflict-Aware Assignment}
    \label{algo2}

    \fontsize{9pt}{9.85pt}\selectfont
    
    \BlankLine
    \KwGlobal{ $ASGN\_OPEN$, $ASGN\_POST$ }

\BlankLine

\myproc{$\mathtt{get\_first\_assignment}${ ( $C$ ) }}
{

    $ \theta \gets new \ asgn \ node $   \;

    $ \theta.O \gets \emptyset $,   \hspace*{0.4cm}
    $ \theta.I  \gets \emptyset $   \;

    $ \langle C, M, path \rangle \gets \mathtt{compute\_assignment}(C, \theta.O, \theta.I )  $  \;  \label{algo2:solve-first-asgn}

    $ \theta.M \gets M $,   \ \
    $ \theta.path \gets path $  \;
    $ \theta.cost \gets \mathtt{compute\_cost}( path ) $   \;


    $ \text{insert} \ \  \theta  \ \  \text{to}  \ \  ASGN\_OPEN $  \;

    \Return $ \langle C, \theta \rangle $ \;  

}

\BlankLine

\myproc{$\mathtt{get\_next\_assignment}${($C$, $Conflict\_rec$)}}
{

    $ \tstar \gets \text{extract least cost node from} \ ASGN\_OPEN  $  \;

    \If{ $ \tstar $ does not exist  }
    {
        \Return $ None $ 
    }

    $ \mathcal{R} \gets \text{derive set of robots from} \ \tstar.M  $  \; 
    
    $ \text{sort } Conflict\_rec \text{ in decreasing order of } cost_{inc} $ \;   \label{algo2:sort_conflicts}
    
    \tcp{create custom order for looping}
    \For{ each tuple $t \in Conflict\_rec $  \label{algo2:custom_order_begin}  }
    {
        \For{each edge $(r, g) \in t.Acc\_conf $}
        {
            \If{ $ r \notin ordered\_robots $  }
            {
                $ ordered\_robots.\mathtt{add}(r) $  \; 
                $ \mathcal{R}.\mathtt{remove}(r) $  \;
            }
        }
    }

    select any robot $r\prime \in \mathcal{R}$ \;
    
    \For{ each remaining robot $r \in (\mathcal{R} - r\prime) $}
    {
        $ ordered\_robots.\mathtt{add}(r) $  \; 
    }    \label{algo2:custom_order_end}

    \For{ each robot $ r \in ordered\_robots $  \label{algo2:looping} }
    {
        \If{ $ r \notin \tstar.I $  }
        {
            $ \theta \gets new \ asgn \ node $   \;   
            $ \theta.O \gets \tstar.O  \cup \{r, \tstar.M(r) \}  $  \;
            
            $ \theta.I \gets \tstar.I  \cup \{u, \tstar.M(u) : ind_u < ind_r \}  $  \;   \label{algo2:build_include_set}

            \smallskip
            $ lb_{cf\_asgn} \gets \tstar.cost $  \;    \label{algo2:postpone_begin}
            \For{ each tuple $ t \in Conflict\_rec $ }
            {
                \If{ $ t.Acc\_conf \subseteq \theta.I $ }
                {
                    $  lb_{cf\_asgn} \gets lb_{cf\_asgn} + t.cost_{inc} $  \;
                    exit loop  \;
                }
            }

            \If{ $ lb_{cf\_asgn} > \tstar.cost $  }
            {
                $ ASGN\_POST.\mathtt{add}(\langle lb_{cf\_asgn}, \theta.O, \theta.I \rangle)  $   
            }   \label{algo2:postpone_end}
            \Else
            {
                $ \langle C, M, path \rangle \gets \mathtt{compute\_assignment}(C, \theta.O, \theta.I)$  \label{algo2:solve-next-asgn} 

                \If{ $ M \ne \emptyset $  }
                {
                    $ \theta.M \gets M $,  \  \
                    $ \theta.path \gets path $   \;
                    $ \theta.cost \gets \mathtt{compute\_cost}( path ) $   \;
    
    
                    $ \text{insert} \ \  \theta  \ \  \text{to}  \ \  ASGN\_OPEN $  \;  \label{algo2:insert-into-asgn-open} 
    
                }
            }  
        }
        
    }  


    $ best\_cost_{avail} \gets \text{min cost from } ASGN\_OPEN $  \;  \label{algo2:revoke_begin}
    $ best\_cost_{post} \gets \text{min cost from } ASGN\_POST $  \;

    \While{ $ best\_cost_{avail} > best\_cost_{post} $ }
    {
        $\Delta \gets \mathtt{min}( ASGN\_POST )  $ \;
        
        $ \theta \gets new \ asgn \ node $   \;   
        $ \theta.O \gets \Delta.O $, \ \ $ \theta.I \gets \Delta.I $  \;

        repeat lines~\ref{algo2:solve-next-asgn}-\ref{algo2:insert-into-asgn-open}   \;

        $ best\_cost_{avail} \gets \text{min cost from } ASGN\_OPEN $ \\
        $ best\_cost_{post} \gets \text{min cost from } ASGN\_POST $ 
    
    }  \label{algo2:revoke_end}
    
    $ \tstar \gets \text{select least cost node from} \ ASGN\_OPEN $  \;
    
    \Return $ \langle C, \tstar \rangle $ \;  

}

\end{algorithm}

In this section, we present the details of our algorithm that efficiently computes optimal collision-free robot-goal assignment for multi-robot systems. 
It learns the sets of conflicting robot-goal paths that inevitably lead to an increase in the assignment cost.
This enables the algorithm to postpone the expensive computation of several assignments that contain those learned conflicts. 

\subsection{Algorithm Description}
\label{algo:algo_desc}

We present our goal assignment algorithm formally in two parts, namely, Algorithms~\ref{algo1} and \ref{algo2}. 
While Algorithm~\ref{algo1} captures our complete approach, Algorithm~\ref{algo2} exclusively exhibits the procedures related to conflict-aware computation of assignments.
Our algorithm and notation draw inspiration from the state-of-the-art CBS-TA~\cite{cbsta}.
Like CBS-TA,
our approach builds a search forest to look for a goal assignment that is collision-free and has optimal cost. 
Typically, a search forest consists of more than one tree, each rooted at its respective root
denoting a specific assignment. 
An assignment can potentially contain  two or more robots whose paths conflict with each other.
Let us denote the location of a robot $r_i$ at timestep $\tau$ by $loc_{r_i}^\tau$.
A conflict between two robots $r_i$ and $r_j$ can be either a 'vertex conflict' (i.e., $\exists$ time $\tau$ : $loc_{r_i}^\tau = loc_{r_j}^\tau$) or an 'edge conflict' (i.e., $\exists$ time $\tau$ : $loc_{r_i}^\tau = loc_{r_j}^{\tau+1}$ and $loc_{r_i}^{\tau+1} = loc_{r_j}^{\tau}$).
Each tree may contain a set of nodes representing the assignment (specific to its root) with additional vertex or edge constraints that result from mitigation of conflicts.
%
%
We use a priority queue $OPEN$ to store the nodes belonging to the trees of the forest, prioritizing them based on the cost due to the assignment and any associated constraints.

The first procedure $\mathtt{solve\_goal\_assignment}$ acts as the \emph{main} module that invokes other procedures to solve the optimal collision-free goal assignment problem. 
It accepts the workspace $WS$, the set $S$ of start locations of the robots, and the set $F$ of goals locations as inputs.
As a first step, it computes the heuristic cost (Manhattan distance) for each robot-goal pair and
finds an initial robot-goal assignment (Algo~\ref{algo1}: Line~\ref{algo1:compute_hcosts}-\ref{algo1:first_asgn}).
This assignment has an optimal cost (based on the actual costs for the assigned robot-goal pairs), but may consists of collisions among the robots. 
On the existence of an initial assignment, the algorithm creates a corresponding root node (of the search forest) through the $\mathtt{create\_root}$ procedure (Algo~\ref{algo1}: Line~\ref{algo1:create_first_root}).
The $\mathtt{create\_root}$ procedure (Algo~\ref{algo1}: Lines~\ref{algo1:create_root_begin}-\ref{algo1:create_root_end}) also initializes the root node's attributes, including the robot-goal assignment (denoted by $M$ in Algo~\ref{algo1}: Line~\ref{algo1:create_root_match}).

Algorithm~\ref{algo1} performs a set of steps iteratively in the process to find an optimal collision-free goal assignment (Algo~\ref{algo1}: Lines~\ref{algo1:while_begins}-\ref{algo1:while_ends}). 
It extracts the best node from $OPEN$ and checks the associated assigned paths for conflict (Algo~\ref{algo1}: Lines~\ref{algo1:best_node_from_open}-\ref{algo1:find_conflict}). 
In the absence of conflict, the algorithm reports the current set of assigned paths as output (Algo~\ref{algo1}: Lines~\ref{algo1_no_conflict1}-\ref{algo1_no_conflict2}).  
However, in the event of a conflict, the algorithm follows the well-known `Conflict-Based Search' (CBS) framework \cite{cbs_conference, cbs_journal} to derive the constraints from the conflict and generate the child nodes (Algo~\ref{algo1}: Lines~\ref{algo1:create_constraints}-\ref{algo1:create_child_nodes}).

\smallskip
\noindent
Let us now review our key algorithmic contributions.

\subsubsection{Conflict-Aware Assignment Computation.}

With a focus on finding an optimal collision-free goal assignment, our approach, like CBS-TA, may require to compute a sequence of next best assignments apart from the optimal assignment. 
It is possible to enumerate the $K$ best solutions in various domains, including goal assignment~\cite{eppstein2016encyclopedia}.
However, instead of inefficiently generating a set of $K$ best assignments, we draw inspiration from the existing approach of~\cite{murty1968algorithm, chegireddy1987algorithms} and devise a novel method of computing the next best assignment on demand.   

Unlike CBS-TA, which computes the next best assignment immediately on the expansion of a root node (representing the current best assignment),
our approach computes the next best assignment only when continuing the search within the forest  
would lead to a cost that exceeds the largest cost among all the roots (denoting the largest assignment cost in the forest).  
We propose a novel \emph{conflict-aware assignment computation} mechanism for multi-robot systems that efficiently computes the next best assignment given the current best. 
It encompasses the following three steps.

\begin{enumerate}
    \item \textit{Conflict accumulation:}
    Alternatively, we define the path of a robot $r$ to its assigned goal $g$ as an edge denoted by $(r, g)$.
    When there is a conflict between two robots (i.e., between two paths or edges), 
    Algorithm~\ref{algo1} cumulatively accumulates the pair of conflicting edges in the related tree's basket $Acc\_conf$ (Algo~\ref{algo1}: Lines~\ref{algo1:acc_conf_begin}-\ref{algo1:acc_conf_end}).

    \item \textit{Conflict registration:}
    After the creation of child nodes in adherence to CBS, Algorithm~\ref{algo1} \emph{peeps at (and not extract)} the tentative next best node $\unext$ within $OPEN$ (Algo~\ref{algo1}: Line~\ref{algo1:peep}). 
    If $\unext$ is a non-root node with a cost greater than its parent's cost, and if its `conflict registered' flag ($conf\_reg$) is not set, 
    the algorithm registers (\emph{learns}) the set of accumulated conflicting edges ($Acc\_{conf}$) with corresponding cost increment ($cost_{inc}$) in the conflict record ($Conflict\_rec$) (Algo~\ref{algo1}: Lines~\ref{algo1:cbs_node}-\ref{algo1:conf_reg_ends}). 
    The $conf\_reg$ flag also assists in determining the next best node by distinguishing and prioritizing the nodes having identical costs (see Remark~\ref{rem:select_best_node}).

    \item \textit{Assignment computation:} 
    If the cost of the tentative next best node appears to exceed the largest cost among the roots, we proceed to compute the next best assignment (i.e., a new root) (Algo~\ref{algo1}: Lines~\ref{algo1:next_best_asgn_begin}-\ref{algo1:next_best_asgn_end}).
    Drawing inspiration from the established approach of~\cite{murty1968algorithm, chegireddy1987algorithms}, 
    we propose a novel routine through Algorithm~\ref{algo2} that uses conflict-guided approach
    for the computation of next best assignment of robots to goals, given the current best assignment
    in multi-robot application.
    %
    In this algorithm, we employ the following two priority queues: 
    (a)~$ASGN\_OPEN$, which stores the nodes representing the computed assignments, prioritizing them based on the cost of assignment, and 
    (b)~$ASGN\_POST$, which stores meta data about the assignments whose computations are postponed, prioritizing them based on the lower bound cost of the postponed assignments (discussed later).
    We formulate a conflict-guided approach in which
    we postpone the computation of an assignment when it is guaranteed to include a set of conflicting edges that will unavoidably escalate the assignment's cost. 
    And the postponement should happen with the largest inevitable cost increase.
    To achieve this, Algorithm~\ref{algo2} sorts the tuples (of $\langle$set of conflicting edges, cost increment$\rangle$) in $Conflict\_rec$ in descending order of cost increments (Algo~\ref{algo2}: Line~\ref{algo2:sort_conflicts}).
    
    The existing literature considers iteration over the set of robots in a sequential order to partition the solution space of the assignments~\cite{murty1968algorithm, chegireddy1987algorithms, cbsta}.  
    In a major divergence from the previous works, we create a custom ordering of robots, which aids in maximizing the number of postponed assignments.
    We place the robots that indulge in any conflict ahead of the non-conflicting robots in the custom order (Algo~\ref{algo2}: Lines~\ref{algo2:custom_order_begin}-\ref{algo2:custom_order_end}).

    Using the new custom order, we perform the conventional steps of building the `omit' and the `include' sets of edges, which are 
    used while determining an assignment (Algo~\ref{algo2}: Lines~\ref{algo2:looping}-\ref{algo2:build_include_set}).
    However, before going ahead with the assignment computation, we scan its `include' set to check for the presence of one of the learned sets of conflicting edges. 
    If a learned set is indeed a subset of the `include' set, we postpone the computation of the assignment, and save a corresponding entry in the priority queue $ASGN\_POST$ (Algo~\ref{algo2}: Lines~\ref{algo2:postpone_begin}-\ref{algo2:postpone_end}). 
    The first element $lb_{cf\_asgn}$ of the tuple saved in $ASGN\_POST$ denotes the lower bound cost of the collision-free assignment. 
    Let $\tstar$ be the current best assignment and $t$ be the tuple in $Conflict\_rec$ such that $t.Acc\_conf \subseteq$ `include' set of an assignment not yet computed. 
    Then, the lower bound cost of this assignment can be computed as:
    \begin{align}
        lb_{cf\_asgn} \gets \tstar.cost + t.cost_{inc}. \nonumber
    \end{align}
    After the assignments are computed or postponed using the custom order, we revoke the postponement and compute those assignments whose $lb_{cf\_asgn}$ are less than the minimum cost among the computed assignments stored in $ASGN\_OPEN$ (Algo~\ref{algo2}: Lines~\ref{algo2:revoke_begin}-\ref{algo2:revoke_end}). 

\end{enumerate}


\begin{remark} \label{rem:select_best_node}
In Algorithm~\ref{algo1}, two instances require determining the best node from the set of available nodes in the priority queue $OPEN$ (Algo~\ref{algo1}: Lines~\ref{algo1:best_node_from_open} and~\ref{algo1:peep}).
This selection hinges on two criteria, prioritized as follows:
(a)~cost, where lower values are favored, and
(b)~the `conflict registered' flag ($conf\_reg$), with a preference for ``False'' over ``True''.
The second criterion accelerates the process of conflict registration, meaning that conflicts are learned more rapidly.
\end{remark}

\begin{remark}
    In Algorithm~\ref{algo1}, Line~\ref{algo1:peep}, we only \emph{peep} at the next best node within the priority queue $OPEN$ without extracting it. 
    The reason behind this look-ahead approach is as follows.
    There can be numerous nodes with the same cost. 
    If conflict registration were to occur only when a node with a higher cost is selected for processing, the nodes with lower costs would need to be processed first. 
    This would delay the utilization of valuable conflict information.
    Therefore, looking ahead to the next best node allows proactive conflict registration, optimizing the conflict-aware assignment computation.
    
\end{remark}

\subsubsection{Heuristic Distance-Based Assignment Computation.}


The goal assignment algorithm introduced in~\cite{our1stPaper} can compute a robot-goal assignment efficiently (without avoiding robot-robot collisions) by avoiding the computation of a majority of robot-goal paths. 
We adapt their solution for our problem by making the following customizations
to create a new procedure $\mathtt{compute\_assignment}$ (Algo~\ref{algo2}: Lines~\ref{algo2:solve-first-asgn} and~\ref{algo2:solve-next-asgn}).
We provide two additional inputs, namely, an `include' set and an `omit' set, to the aforesaid algorithm to further increase its efficiency.
Since the robots and goals that are a part of the `include' set are compulsorily included in the assignment, we eliminate the rows and columns corresponding to them from the cost matrix.
Additionally, we assign an infinite cost to the robot-goal pairs present in the `omit' set. 
An invocation of the $\mathtt{compute\_assignment}$ procedure can leave 
the cost matrix in a state where it has a mix of heuristic and actual costs.
We preserve and forward the modified state of the cost matrix in the subsequent invocations to prevent duplicate path computations for the robot-goal pairs.

\subsubsection{Path Memoization.}

Several robots may retain their assigned paths between two or more robot-goal assignments.
This implies that the conflicts and the corresponding constraints can remain consistent across many assignments.
To leverage this consistency, we introduce `Path Memoization', which involves caching the paths computed under specific constraints for future reuse. 
Thus, in cases where CBS requires the computation of a robot-goal path under previously encountered constraints, we utilize the memoized path instead of recomputing it (Algo~\ref{algo1}: Lines~\ref{algo1:memo_begin}-\ref{algo1:memo_end}).



\subsection{Example}


\begin{figure}[t]
    \begin{subfigure}{0.4\columnwidth}
    \centering
        \includegraphics[scale=0.4]{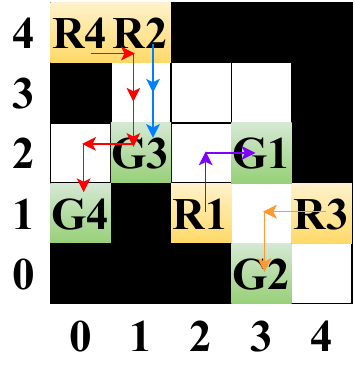} 
        \caption{Initial Assignment}
        \label{run_ex:first-asgn}
    \end{subfigure}
    \begin{subfigure}{0.6\columnwidth}
    \centering
        \includegraphics[scale=0.7]{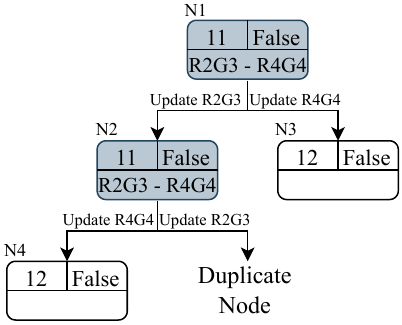}    
        \caption{CBS on Initial Assignment}
        \label{run_ex:first_cbs} 
    \end{subfigure}
    \vspace{0.5cm}
    \begin{subfigure}{0.42\columnwidth}
    \centering
        \includegraphics[scale=0.35]{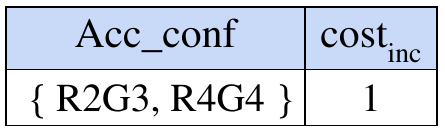} 
        \vspace{1cm}
        \caption{Conflict Registration - I}
        \label{run_ex:acc_conf}
    \end{subfigure}
    \begin{subfigure}{0.58\columnwidth}
    \centering
        \includegraphics[scale=0.7]{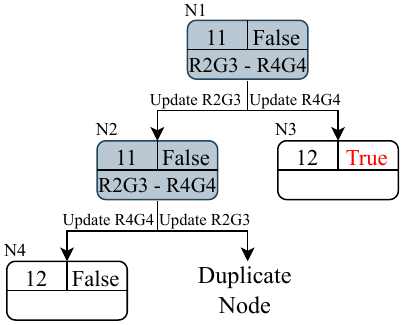} 
        \caption{Conflict Registration - II}
        \label{run_ex:register_conf}   
    \end{subfigure}
    \vspace{0.5cm}
    \begin{subfigure}{0.6\columnwidth}
        \includegraphics[scale=0.31]{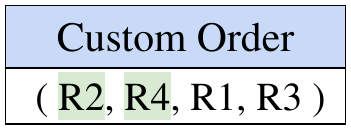} 
        \includegraphics[scale=0.31]{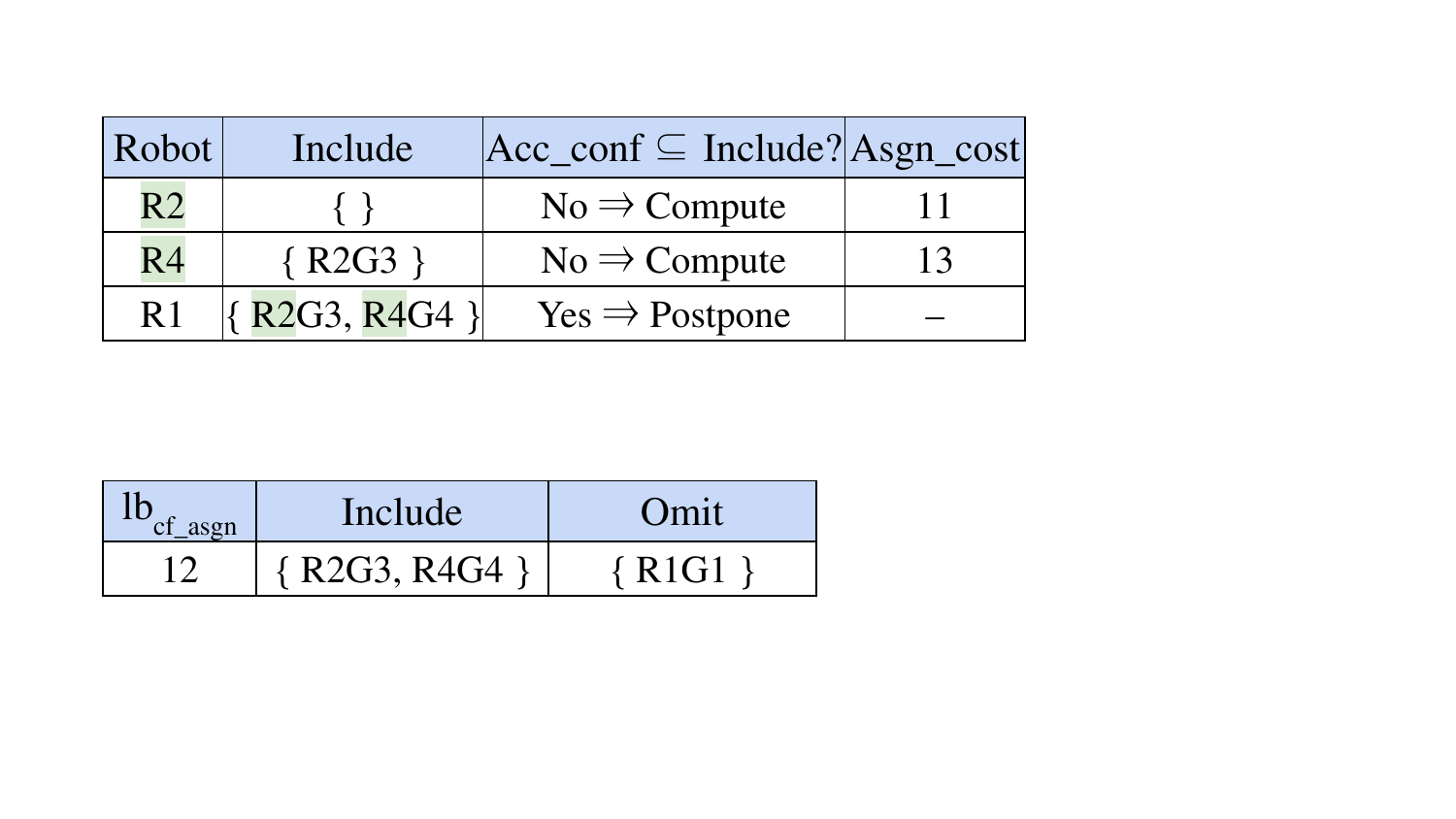} 
        \includegraphics[scale=0.31]{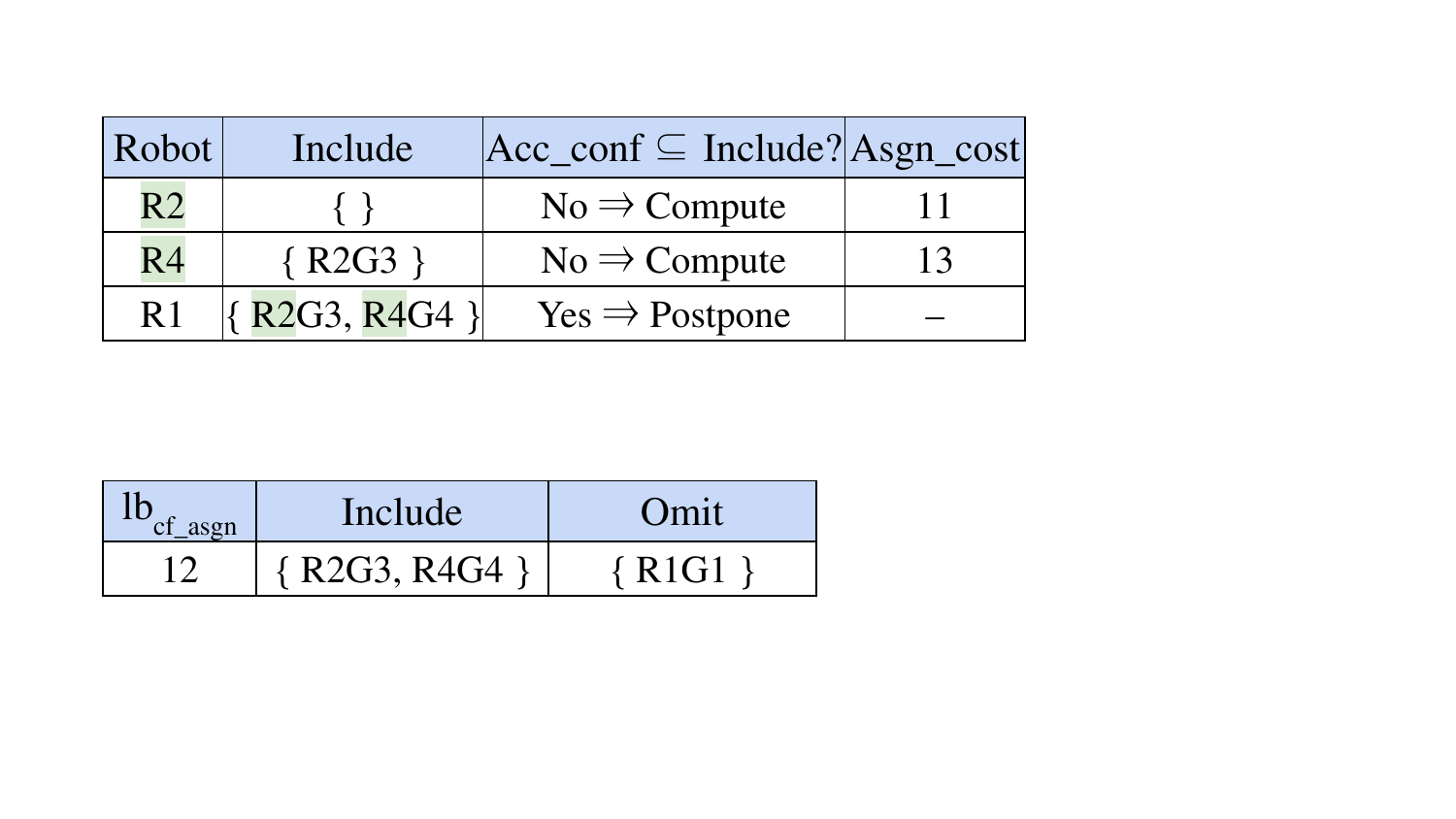} 
        \caption{Assignment Postponement}
        \label{run_ex:postpone_b}   
    \end{subfigure}
    \begin{subfigure}{0.35\columnwidth}
    \centering
        \includegraphics[scale=0.4]{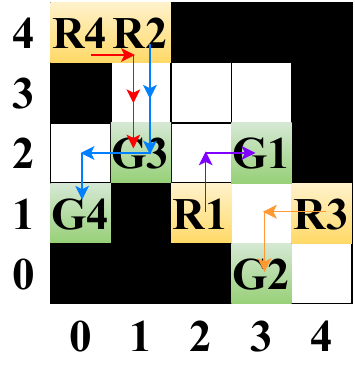} 
        \caption{Final Assignment}
        \label{run_ex:second_asgn} 
    \end{subfigure}

    \caption{ Applying Algorithm~\ref{algo1} on Problem in Figure~\ref{fig:example-prob}(\subref{fig:ws}) }
    \label{fig:run_ex_1}
\end{figure}

In Figure~\ref{fig:run_ex_1}, we illustrate the execution of Algorithm~\ref{algo1} on the multi-robot goal assignment problem introduced in Figure~\ref{fig:example-prob}.
Specifically, Figure~\ref{fig:run_ex_1}(\subref{run_ex:first-asgn}) showcases the initial robot-goal assignment that has an optimal total cost of $11$. 
The arrows visually map out the respective paths.
At timestep $3$, a conflict arises between the paths of robots R2 and R4 within cell $(1, 2)$. 
Algorithm~\ref{algo1} employs CBS to resolve this conflict, generating a search tree as depicted in Figure~\ref{fig:run_ex_1}(\subref{run_ex:first_cbs}).
The search tree originates from the root node `N1' which represents the initial assignment devoid of constraints. 
Each node within this tree encapsulates three pieces of information: 
(a)~the assignment's cost, potentially with collisions, under specific constraints, 
(b)~value of `conflict registered' flag ($conf\_reg$) which aids in distinguishing and prioritizing nodes having identical costs, and 
(c)~the first conflict in the node's associated paths. 
The grey color of nodes `N1' and `N2' signify that their processing is complete.
The tentative next best node for processing seems to be `N3' (or `N4') with a cost of $12$.  
Thus, Figure~\ref{fig:run_ex_1}(\subref{run_ex:first_cbs}) displays that state of the search tree from which the next node cannot be chosen for processing without incurring an increase in the cost of assignment. 
Having met the hurdle of cost increment, Algorithm~\ref{algo1} registers the set of accumulated conflicting paths ($Acc\_{conf}$) with corresponding cost increment ($cost_{inc}$) (Figure~\ref{fig:run_ex_1}(\subref{run_ex:acc_conf})).
The value of $cost_{inc}$ represents the disparity between the cost of the tentative next best node and that of the corresponding root.
Subsequently, the $conf\_reg$ flag of `N3' is set to ``True'' (Figure~\ref{fig:run_ex_1}(\subref{run_ex:register_conf})).

Since the probable next best node's cost ($12$) exceeds the current largest cost among root nodes ($11$), Algorithm~\ref{algo1} 
invokes the $\mathtt{get\_next\_assignment}$ procedure (expanded in Algorithm~\ref{algo2}) to compute the next best assignment. 
It generates a custom ordering of robots (Figure~\ref{fig:run_ex_1}(\subref{run_ex:postpone_b}): first table) by keeping the robots engaged in conflicts ahead of those with non-conflicting paths. 
Consequently, R2 and R4 appear before R1 and R3 in the order. 
%
Following the established approach outlined in prior works~\cite{murty1968algorithm, chegireddy1987algorithms, cbsta}, 
the $\mathtt{get\_next\_assignment}$ procedure computes the `include' set corresponding to each robot in the custom order except the last one.
We see that the set of conflicting paths \{R2G3, R4G4\} forms a subset of robot R1's `include' set \{R2G3, R4G4\}.
Thus, the computation of assignment corresponding to the `include' and `omit' sets of R1 is postponed (Figure~\ref{fig:run_ex_1}(\subref{run_ex:postpone_b}): second table). 
%
The meta-data of the postponed assignment are saved for possible future reinstatement (Figure~\ref{fig:run_ex_1}(\subref{run_ex:postpone_b}): third table). 
The assignment computed against robot R2 in (Figure~\ref{fig:run_ex_1}(\subref{run_ex:postpone_b}): second table) has a cost of $11$ and it happens to be the second best assignment (Figure~\ref{fig:run_ex_1}(\subref{run_ex:second_asgn})). 
Notably, it is free from collisions, and thus, the desired solution.

Note that in Figure~\ref{fig:run_ex_1}(\subref{run_ex:first_cbs}), the paths R2G3 and R4G4 need an update twice for the same respective set of constraints. 
With path memoization, Algorithm~\ref{algo1} computes both the paths only once, and reuses them during their second update.
Additionally, utilizing the customized heuristic distance-based assignment computation~\cite{our1stPaper}, our approach effectively circumvents the need to compute four actual costs.
The white cells in Figure~\ref{fig:example-prob}(\subref{fig:output}) reflect this optimization.

\subsection{Theoretical Properties}
\label{algo:theo_guaran}

\subsubsection{Correctness.}
\label{appendix:correct_a}

Let us review the following lemmas.

\begin{lemma}
\label{lemma_cost_increase}

In a multi-robot application, if an assignment of robots to goals 
contains a set of conflicting robot-goal pairs (or edges) that escalates the assignment cost,
then any other assignment containing the same set of conflicting robot-goal pairs 
will experience at least an equal escalation in its cost. 

\end{lemma}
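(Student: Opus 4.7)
\medskip

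The plan is to show that the cost escalation attached to a set of conflicting robot-goal edges is an intrinsic property of those edges (together with the workspace), independent of the rest of the assignment they are embedded in, and hence is lower-bounded by the same quantity whenever those same edges appear together in another assignment.

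First, I would fix an assignment $A$ in which the set $E = \{(r_{i_1}, g_{i_1}), \dots, (r_{i_k}, g_{i_k})\} \subseteq \mathit{Acc\_conf}$ of conflicting edges produces an escalation $\Delta = cost_{inc}$. Since CBS only ever adds vertex and edge constraints on specific robots to resolve conflicts that have actually arisen, the chain of constraints responsible for the escalation of node $\Upsilon_{next}$ in the tree rooted at $A$ involves solely the robots $r_{i_1},\dots,r_{i_k}$. Consequently, the escalation $\Delta$ equals the minimum extra cost, over and above the sum of unconstrained shortest-path lengths, needed to route these $k$ robots from their starts to the goals $g_{i_1},\dots,g_{i_k}$ in the workspace $WS$ so that they mutually avoid the conflicts in $E$. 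This quantity, call it $\Delta^*(E)$, depends only on $E$ and $WS$; in particular, it is determined before any other robot of $A$ is even considered.

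Next, I would take any other assignment $A'$ with $E \subseteq A'$. Because $A'$ assigns the same robots $r_{i_j}$ to the same goals $g_{i_j}$, the exact same pairwise conflicts in $E$ re-appear whenever those robots follow their unconstrained shortest paths. Any CBS expansion for $A'$ must therefore resolve each of these conflicts, which, restricted to the sub-problem induced by $E$, is the same optimization problem that produced $\Delta^*(E)$ in $A$. So the portion of $A'$'s cost contributed by robots $r_{i_1},\dots,r_{i_k}$, minus their unconstrained shortest-path sum, is at least $\Delta^*(E) = \Delta$. Since the other robots of $A'$ only add non-negative terms (their own path costs, possibly further inflated by additional conflicts), the total escalation suffered by $A'$ is at least $\Delta$.

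The main obstacle I anticipate is formalizing the claim that $\Delta$ really equals the ``intrinsic'' $\Delta^*(E)$ and is not contaminated by interactions with the remaining robots of $A$. Concretely, I need to argue that if resolving $E$ required participation of extraneous constraints involving robots outside $E$, those constraints would not appear in $\mathit{Acc\_conf}$ and hence would not enter the bookkeeping for this lemma; conversely, the constraints that \emph{do} appear restrict only the robots in $E$, so their cumulative cost effect is computable from $E$ and $WS$ alone. A careful induction on the depth of the CBS subtree below the root of $A$, tracking which robots' paths have been rerouted and by how much, should close this gap cleanly, after which transferring the bound to $A'$ is a direct substitution argument.
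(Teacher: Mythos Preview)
Your approach follows the same core idea as the paper's: the conflicts among the edges in $E$ are determined solely by those robot--goal pairs and the workspace, so any other assignment containing $E$ faces the same unavoidable cost increase, with the remaining robots contributing only additional non-negative escalation. The paper's argument is considerably more informal than yours---it singles out one robot $r$ whose path cost went up, observes that no cheaper alternative path for $r$ exists given its conflicting neighbours $J$, and then transfers this observation verbatim to the second assignment---whereas you treat the robots of $E$ collectively and route the argument through the intermediate quantity $\Delta^*(E)$.

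One imprecision worth flagging: you claim $\Delta = \Delta^*(E)$, but at the moment the conflict record is registered the CBS search need not have fully resolved all conflicts internal to $E$; the best open node merely gives a lower bound, so in general only $\Delta \le \Delta^*(E)$ holds. This does not damage your argument, since the chain $\Delta \le \Delta^*(E) \le \text{escalation}(A')$ is precisely what the lemma requires, and your proposed induction on the CBS subtree depth is the right tool to pin down the first inequality. The paper's proof does not isolate or justify this step at all, so the gap you identify is real and your plan to close it is an improvement over the published argument.
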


\begin{proof}
    For a particular robot-goal assignment in a multi-robot application, 
    consider that there exists a set of conflicting robot-goal pairs 
    that effectively increases the cost of the assignment during the resolution of conflicts. 
    It implies that there must be some robot `$r$' whose path has conflict with few of its fellow robot(s) `$J$', 
    and its path cost increases while attempting to resolve the conflict.
    An increase in the path cost of robot `$r$' means that there did not exist any alternative path for `$r$' that could have prevented the cost increment. 
    Now, in a different assignment, assume that `$r$' and its fellow conflicting robots `$J$' have same assigned goals. 
    In other words, their paths poses the same conflict, and again, there would not be any alternative path for `$r$' that could prevent the cost increase.
    Note that `$r$' can have additional conflicts due to other robots whose assignments got altered. 
    Thus, attempting to resolve these conflicts would result in a cost increase by at least the same value as in the prior assignment. 
\end{proof}

\begin{lemma}
\label{lemma_next_best_asgn}

Given the current best assignment (in terms of cost) of robots to goals in a multi-robot application,
and
a conflict record comprising of sets of conflicting robot-goal pairs (or edges) with corresponding cost increments,
Algorithm~\ref{algo2}'s $\mathtt{get\_next\_assignment}$ procedure computes the next best assignment.

\end{lemma}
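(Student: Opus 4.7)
The plan is to show two things: (i) the procedure, stripped of its postponement mechanism, is a faithful Murty-style partitioning of the space of assignments that successively yields the next cheapest assignment; and (ii) the postponement/revocation layer, justified by Lemma~\ref{lemma_cost_increase}, never loses a candidate that could be the next best. I would organise the argument around these two points.

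First I would recall the classical Murty/Chegireddy partitioning. Given the current best assignment $\tstar$, its solution space is partitioned by iterating over an ordering $r_1, r_2, \ldots, r_{n-1}$ of robots: the $k$-th subproblem fixes $(r_1,\tstar.M(r_1)),\ldots,(r_{k-1},\tstar.M(r_{k-1}))$ into the ``include'' set and forbids $(r_k, \tstar.M(r_k))$ via the ``omit'' set. This produces a disjoint partition of all assignments other than $\tstar$ itself, and the cheapest assignment in $ASGN\_OPEN$ after inserting the optimal solutions of each subproblem is exactly the next best assignment overall. I would argue that this correctness is order-independent: any permutation of the robots in $\mathcal{R}$ still yields a valid partition, because the construction only depends on the set equality $\{(r_1,\tstar.M(r_1)),\ldots,(r_{n-1},\tstar.M(r_{n-1}))\} = \tstar.M \setminus \{(r',\tstar.M(r'))\}$, not on the particular sequence. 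Hence the custom ordering constructed in Lines~\ref{algo2:custom_order_begin}--\ref{algo2:custom_order_end} (conflicting robots first, with an arbitrary ``last'' robot $r'$ not enumerated) preserves the partition property.

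Next I would justify postponement. Consider a subproblem whose ``include'' set $\theta.I$ contains some $t.Acc\_conf$ from $Conflict\_rec$. Any assignment realising this subproblem must assign every edge in $t.Acc\_conf$ exactly as $\tstar$ did. By Lemma~\ref{lemma_cost_increase}, resolving the conflicts in $t.Acc\_conf$ during the subsequent CBS stage will inflate the final collision-free cost by at least $t.cost_{inc}$. Therefore the collision-free cost achievable from this subproblem is bounded below by $\tstar.cost + t.cost_{inc} = lb_{cf\_asgn}$. Deferring the subproblem into $ASGN\_POST$ keyed by this lower bound is thus safe: its true value cannot be less than the stored key. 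Because $Conflict\_rec$ is scanned in decreasing order of $cost_{inc}$ (Line~\ref{algo2:sort_conflicts}), each postponed entry is tagged with the \emph{largest} defensible lower bound, which is crucial for the revocation step below.

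Finally I would address the revocation loop (Lines~\ref{algo2:revoke_begin}--\ref{algo2:revoke_end}). The procedure must return the least-cost assignment among \emph{all} not-yet-returned assignments, so any postponed subproblem whose lower bound is smaller than the minimum cost in $ASGN\_OPEN$ could still turn out to be the next best and must be materialised. The loop repeatedly extracts the minimum of $ASGN\_POST$ while its lower bound is below $\min(ASGN\_OPEN)$, runs $\mathtt{compute\_assignment}$ on it, and inserts the realised node into $ASGN\_OPEN$. On termination, every remaining entry of $ASGN\_POST$ has lower bound at least $\min(ASGN\_OPEN)$, so its true cost also exceeds $\min(ASGN\_OPEN)$, and the extracted minimum of $ASGN\_OPEN$ is the next best assignment. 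Combining the partition property with the soundness of postponement and completeness of revocation yields the lemma.

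The main obstacle I anticipate is the formal verification that the custom ordering still induces a valid partition; this requires checking that the ``omit'' sets across subproblems remain disjoint and collectively cover every assignment that differs from $\tstar$ in at least one place. Once this combinatorial bookkeeping is written out cleanly, the rest of the proof follows from Lemma~\ref{lemma_cost_increase} and straightforward priority-queue reasoning.
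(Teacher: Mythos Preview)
Your proposal follows essentially the same approach as the paper's proof: both establish that the custom robot ordering still yields a valid Murty-style partition of the assignment space (the paper simply cites \cite{murty1968algorithm} for order-independence rather than verifying it), invoke Lemma~\ref{lemma_cost_increase} to justify that postponement is sound, and argue that the revocation loop in Lines~\ref{algo2:revoke_begin}--\ref{algo2:revoke_end} restores completeness so that the minimum of $ASGN\_OPEN$ is indeed the next best assignment. Your write-up is in fact more structured and explicit than the paper's, which treats these three steps rather informally.
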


\begin{proof}


    From the given current best assignment of robots to goals, it is straightforward to derive the set of robots.
    Using the given conflict record and the derived set of robots, 
    Algorithm~\ref{algo2}'s $\mathtt{get\_next\_assignment}$ (or, GNA for short) procedure creates a custom order of robots, 
    such that the robots indulging in any conflict are kept ahead of the non-conflicting ones. 
    For a set of robots and a set of goals, an assignment solution space consists of all the possible assignments.
    Together with the current best assignment, 
    the custom order of robots is used for the disjoint partitioning of the assignment solution space, 
    rather than using a naive order (like, $r_1, r_2, r_3, \ldots, r_n$) as done in the existing literature~\cite{murty1968algorithm, chegireddy1987algorithms, cbsta}.
    It has been shown that, irrespective of the chosen order, such a partitioning covers the complete solution space minus the current best assignment~\cite{murty1968algorithm}.

    Two sets of robot-goal edges, namely the `include' set and the `omit' set are used to determine the partitions.
    While determining the partition, an assignment adhering to the `include' and `omit' sets is computed that is locally optimal to the corresponding partition.
    The edges listed in the `include' set are supposed to be part of the assignment, whereas those in the `omit' set must be excluded. 
    The population of the `include' set takes place incrementally, and is thus, \emph{influenced} by the ordering of robots.
    The custom order ensures that the `include' set rapidly accumulates the edges consisting of conflicting robots.
    
    The GNA procedure postpones the computation of an assignment 
    if its `include' set contains a known set of conflicting edges that would inevitably escalate the assignment's cost
    during the resolution of conflicts. 
    According to Lemma~\ref{lemma_cost_increase}, such postponement of assignment computation is valid.
    A lower bound cost of the deferred assignment is derived by adding the certain cost increase (due to conflicts) to the cost of the current best assignment.

    After the partitioning of the assignment solution space is complete, during which some assignments, locally optimal to the respective partitions, get computed and the rest assignment computations get postponed, 
    the GNA procedure does an additional step.
    In this step, if a postponed assignment's lower bound cost becomes lesser than the minimum cost of the computed but unprocessed assignments, the GNA procedure revokes the postponement and computes the assignment (Algo~\ref{algo2}: Lines~\ref{algo2:revoke_begin}-\ref{algo2:revoke_end}). 
    This ensures that any eligible assignment is not missed, and thus, the GNA procedure provides the next best assignment in terms of cost.    
\end{proof}


\noindent Following are the theorems on the correctness of Algorithm~\ref{algo1}.


\begin{theorem}
    \label{theorem:completeness-appendix}
    Algorithm~\ref{algo1} is complete.
\end{theorem}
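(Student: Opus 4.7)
The plan is to establish completeness by verifying two properties: (i) Algorithm~\ref{algo1} terminates on every well-defined instance of Problem~\ref{prob1}, and (ii) whenever a collision-free robot-goal assignment exists, the main loop eventually extracts a search node whose associated paths are conflict-free and returns them at Lines~\ref{algo1_no_conflict1}--\ref{algo1_no_conflict2}.

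First, I would argue termination. The set of perfect assignments between $|S|$ robots and $|F|$ goal locations is finite, and by Lemma~\ref{lemma_next_best_asgn} successive invocations of $\mathtt{get\_next\_assignment}$ produce distinct assignments of non-decreasing cost; hence only finitely many root nodes can ever be inserted into $OPEN$. For each such root, the expansion performed by $\mathtt{create\_child\_nodes}$ builds a standard CBS-style constraint tree over a finite grid workspace with bounded individual path cost, whose size is bounded by the usual CBS termination arguments (finitely many vertex/edge constraints at each depth, and $A^{*}$ termination inside $\mathtt{get\_constrained\_path}$ guaranteed by the finiteness of the state--time space once an upper bound on path cost is fixed). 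Therefore, $OPEN$ receives only finitely many nodes overall and the main loop at Lines~\ref{algo1:while_begins}--\ref{algo1:while_ends} cannot run indefinitely.

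Next, I would establish coverage. By Lemma~\ref{lemma_next_best_asgn}, the roots enter $OPEN$ in non-decreasing order of assignment cost, and the postponement-and-revocation machinery of Algorithm~\ref{algo2} (Lines~\ref{algo2:postpone_begin}--\ref{algo2:postpone_end} and \ref{algo2:revoke_begin}--\ref{algo2:revoke_end}) guarantees that no assignment is permanently discarded: any entry in $ASGN\_POST$ is reinstated once the cheapest node in $ASGN\_OPEN$ has cost exceeding its stored lower bound $lb_{cf\_asgn}$. Within the subtree of a given root, the classical CBS completeness argument ensures that whenever the underlying assignment admits a collision-free completion, some descendant in that subtree has conflict-free paths; best-first extraction at Line~\ref{algo1:best_node_from_open} must eventually select such a descendant, at which point $\mathtt{get\_first\_conflict}$ returns $None$ and the algorithm outputs the paths.

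The main obstacle will be verifying carefully that the two novel control-flow devices introduced in this paper --- the peep-based deferral of new root generation at Line~\ref{algo1:peep} together with the trigger at Lines~\ref{algo1:next_best_asgn_begin}--\ref{algo1:next_best_asgn_end}, and the conflict-based postponement inside $\mathtt{get\_next\_assignment}$ --- cannot cause a feasible collision-free assignment to be bypassed. For the former, one must show that withholding the creation of a new root until the peeped successor's cost would exceed the latest root cost cannot hide a lower-cost collision-free solution, since any such solution must live in a subtree whose best descendant has cost at least that of the withheld root. For the latter, Lemma~\ref{lemma_cost_increase} is the critical ingredient: the recorded $cost_{inc}$ is a genuine lower bound on the escalation imposed on \emph{any} assignment whose `include' set contains the same conflicting edges, so deferral on the basis of $lb_{cf\_asgn}$ never removes an optimal candidate, and the subsequent revocation step reinserts the deferred candidate at exactly the cost threshold at which its processing becomes relevant. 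Showing that these two invariants persist as $Conflict\_rec$ grows throughout execution is where the bulk of the formal reasoning resides.
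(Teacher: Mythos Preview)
Your proposal is correct and rests on the same two pillars the paper uses---CBS completeness for each fixed assignment, and exhaustive enumeration of assignments via the next-best machinery---but you develop the argument in considerably more detail than the paper does. The paper's own proof is a short paragraph: it simply observes that CBS is complete, that accumulating constraints eventually forces a cost increase which triggers the computation of a new root, and that this process can continue until every assignment has been enumerated, so the search is exhaustive in both dimensions. It does not separately analyse the peep-based root-deferral or the postponement/revocation logic; those are implicitly subsumed by the appeal to Lemma~\ref{lemma_next_best_asgn}. Your explicit treatment of these two mechanisms, and your use of Lemma~\ref{lemma_cost_increase} to justify that $lb_{cf\_asgn}$ is a true lower bound, goes beyond what the paper supplies and would make the argument more self-contained.

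One small technical caveat: your termination claim that each CBS subtree has bounded size is stronger than what standard CBS completeness actually gives. The usual argument (and the one the paper cites from \cite{cbs_journal}) is that for any cost threshold only finitely many constraint-tree nodes have cost below it, so best-first expansion reaches a goal node in finite time \emph{when a solution exists}; it does not establish finiteness of the tree outright. This does not undermine your coverage argument, but you should phrase termination in that weaker, threshold-based form rather than asserting a global bound on subtree size.
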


\begin{proof}
    To impose constraints and mitigate collisions, Algorithm~\ref{algo1} employs CBS which has been shown to be complete~\cite{cbs_journal}.
    Each root node undergoes a CBS search. 
    Incrementally imposing additional constraints gradually diminishes the number of alternative paths with optimal costs, eventually leading to an inevitable increase in the path's (or node's) cost.
    During the CBS search, if the cost of the tentative next best node appears to exceed the maximum cost among the root nodes, Algorithm~\ref{algo1} proceeds to compute the next best assignment. 
    This process can continue until all possible assignments have been enumerated.
    Thus, the search is exhaustive in both goal assignment and path planning. 
\end{proof}


\begin{theorem}
    \label{theorem:soundness-appendix}

    Algorithm~\ref{algo1} provides an optimal collision-free solution to the multi-robot goal assignment problem (Problem~\ref{prob1}),
    i.e., the path of each robot is free from collision and the sum of individual costs of all robots is minimized.

\end{theorem}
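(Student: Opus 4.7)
The plan is to decompose the theorem into its two assertions—collision-freeness and cost optimality—and handle them in order, leveraging Lemma~\ref{lemma_next_best_asgn} together with standard properties of the CBS framework that the algorithm inherits from prior work.

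Collision-freeness is the easy half. I would observe that Algorithm~\ref{algo1} returns paths only at Line~\ref{algo1_no_conflict2}, and this line is reached only when $\mathtt{get\_first\_conflict}$ at Line~\ref{algo1:find_conflict} returns None. By the definition of $\mathtt{get\_first\_conflict}$ (which inspects every timestep for both vertex and edge conflicts), the returned path set is collision-free. One or two sentences suffice.

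The substantive work is optimality. My plan is to show that when a conflict-free node $\Upsilon^{*}$ is extracted from $OPEN$ at Line~\ref{algo1:best_node_from_open} with cost $c$, no collision-free assignment of smaller cost can exist. I would establish two invariants that persist throughout the main loop. First, the root nodes of the search forest are generated in non-decreasing order of cost: by Lemma~\ref{lemma_next_best_asgn}, $\mathtt{get\_next\_assignment}$ returns the true next best assignment, and the guard at Line~\ref{algo1:next_best_asgn_begin} ensures a new root is only instantiated once the cost of the peeked next best node in $OPEN$ exceeds the current largest root cost. Second, within any CBS tree, a child node's cost is at least its parent's cost, because adding vertex or edge constraints can only restrict (and thus never cheapen) an individual robot's constrained path. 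Combining these, the minimum cost of any collision-free assignment not yet represented in $OPEN$ is lower-bounded by the cost of the next root that would be produced. Consequently, when $\Upsilon^{*}$ is extracted with cost $c$ and turns out to be conflict-free, no unexplored assignment (collision-free or otherwise) can carry a lower cost, which yields optimality. Termination on a valid solution, rather than on an empty $OPEN$, is guaranteed by Theorem~\ref{theorem:completeness-appendix}.

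The main obstacle is justifying that the postponement and revocation machinery in Algorithm~\ref{algo2} (Lines~\ref{algo2:postpone_begin}--\ref{algo2:postpone_end} and Lines~\ref{algo2:revoke_begin}--\ref{algo2:revoke_end}) does not invalidate the claim that roots appear in non-decreasing cost order. Here I would lean heavily on Lemma~\ref{lemma_next_best_asgn}, whose proof already establishes that the revocation loop reinstates any postponed assignment whose lower-bound cost falls below the minimum of the currently computed assignments, so that $\mathtt{get\_next\_assignment}$ returns the genuine next best rather than merely the next best among non-postponed candidates. A similarly subtle point is that the heuristic-distance-based assignment routine and the path memoization optimization must be argued to not alter the search semantics—both are efficiency refinements that preserve the cost of each assignment computed and of each constrained path, so the cost-ordering invariants carry over unchanged. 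With these observations in hand, the proof closes by combining the cost-monotonicity of $OPEN$ extraction with the optimality of CBS on each fixed assignment.
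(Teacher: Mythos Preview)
Your proposal is correct and follows essentially the same route as the paper's own proof: both rely on Lemma~\ref{lemma_next_best_asgn} to guarantee that roots are generated in non-decreasing cost order, on the CBS monotonicity property (children never cost less than parents) for each tree, and on best-first extraction from $OPEN$ to conclude optimality at the moment a conflict-free node is returned. Your write-up is more explicit---you separate out the collision-freeness half, spell out the two invariants, invoke Theorem~\ref{theorem:completeness-appendix} for termination, and note that the heuristic-distance and memoization optimizations are semantics-preserving---whereas the paper's proof is terser and defers the child-cost monotonicity to Lemma~1 of~\cite{cbs_journal}, but the underlying argument is the same.
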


\begin{proof}
    In order to resolve collisions among the robots of a multi-robot application,
    Algorithm~\ref{algo1} applies CBS search, which is shown to be complete and optimal~\cite{cbs_journal}, on each root node (i.e., assignment). 
    The cost of a collision-free solution is always either equal to or greater than the respective root node's cost (proof: Lemma~$1$, \cite{cbs_journal}). 
    During the CBS search in a particular tree, if the cost of the tentative next best node appears to exceed the maximum cost among the root nodes, Algorithm~\ref{algo1} proceeds to compute the next best assignment by invoking the Algorithm~\ref{algo2}'s $\mathtt{get\_next\_assignment}$ procedure.
    From the proof of Lemma~\ref{lemma_next_best_asgn}, it is evident that the assignments are computed by $\mathtt{get\_next\_assignment}$ procedure in increasing cost order. 
    And Algorithm~\ref{algo1} uses the best cost-first expansion order of processing the nodes from the priority queue $OPEN$.
\end{proof}

\subsubsection{Time complexity.}
    Problem~\ref{prob1} has been known to be NP-hard~\cite{Yu_LaValle_2013}. 
    The worst-case scenario is the one in which for a problem instance, all possible assignments get enumerated.
    Algorithm~\ref{algo1} does not claim an improvement over the baseline in terms of the worst-case time complexity.
    Nonetheless, the average case performance of Algorithm~\ref{algo1} is significantly better than the baseline, which is evident from the experimental results shared in Section~\ref{exp:results}.

\begin{figure*}[!t]
    
    \begin{subfigure}{\linewidth} 
        \centering
        \includegraphics[scale=0.125]{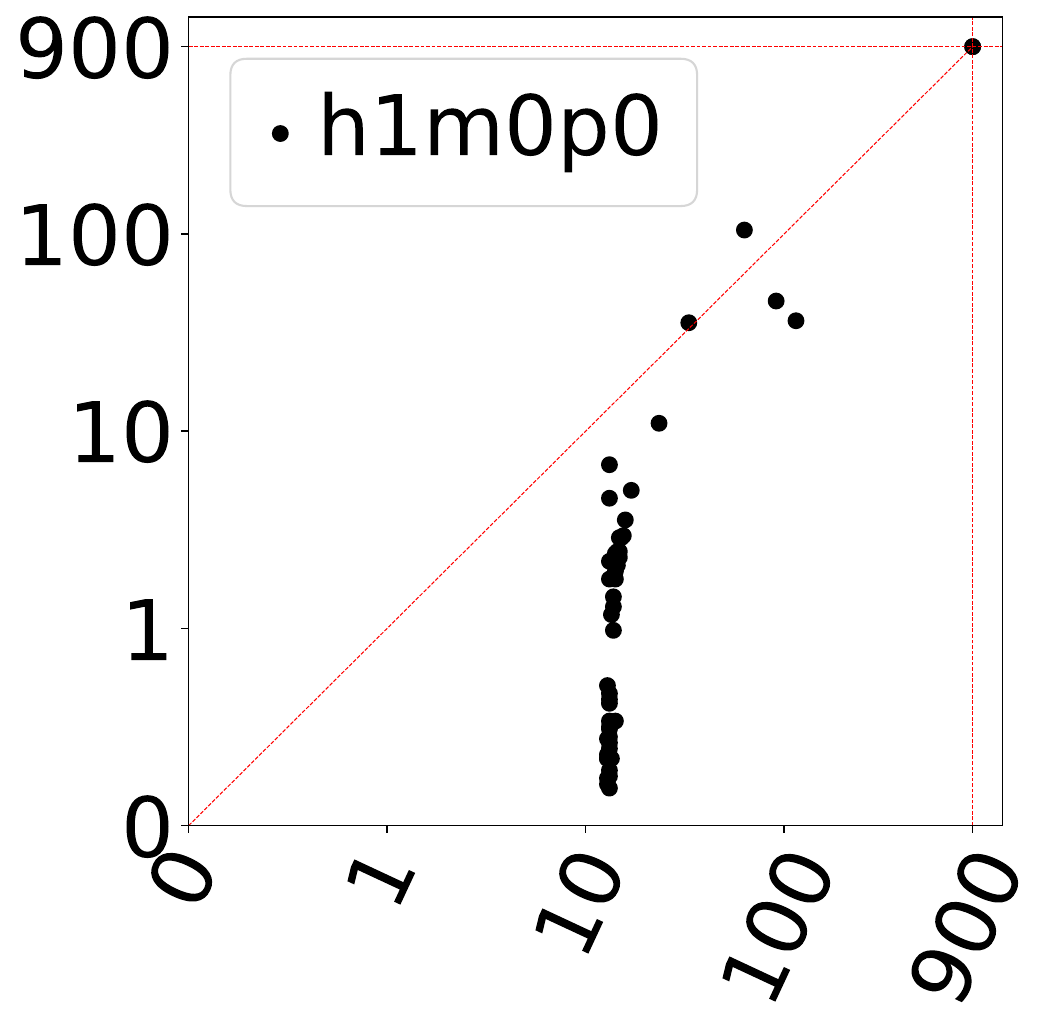}
        \includegraphics[scale=0.125]{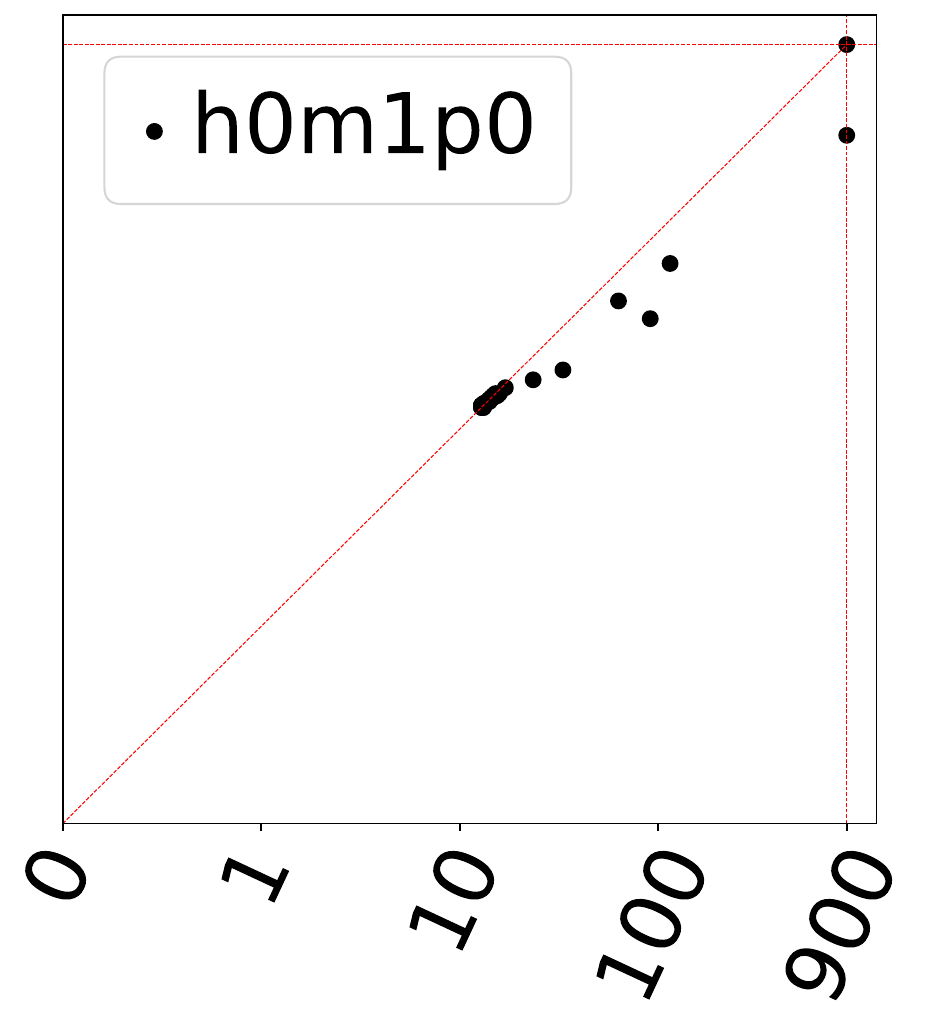}
        \includegraphics[scale=0.125]{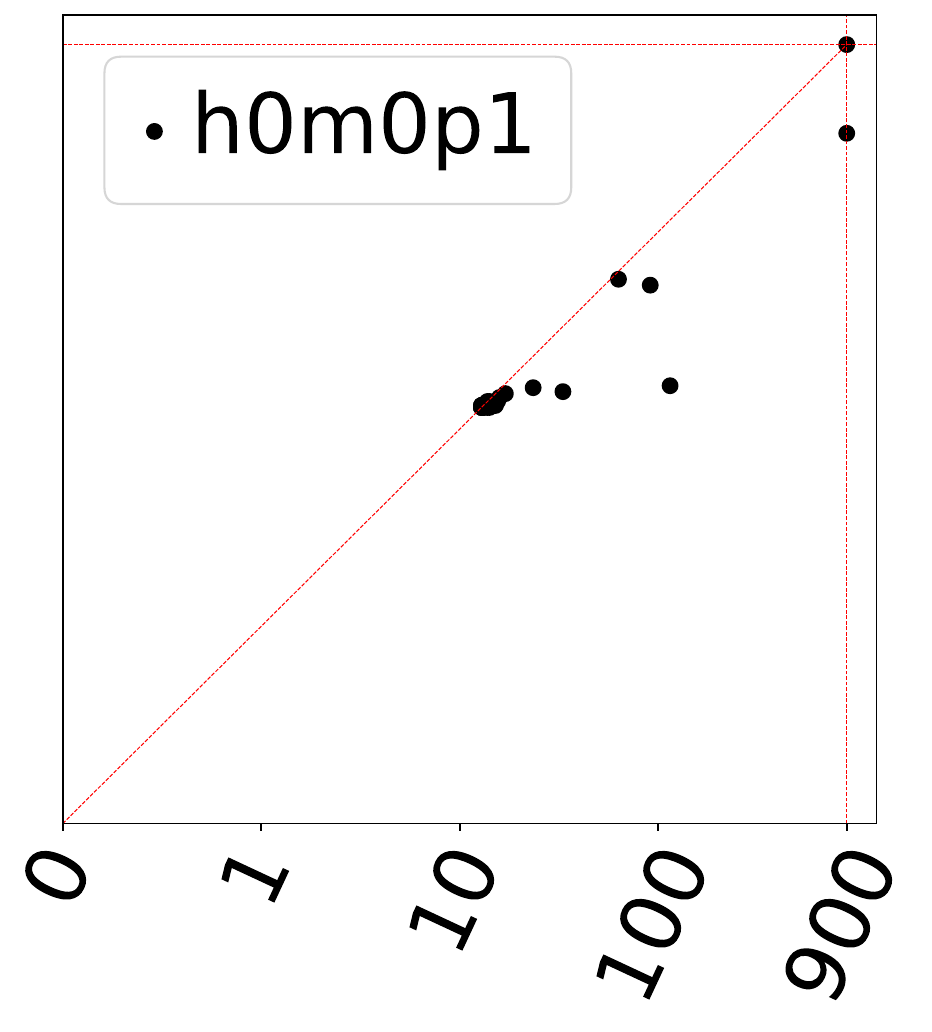}
        \includegraphics[scale=0.125]{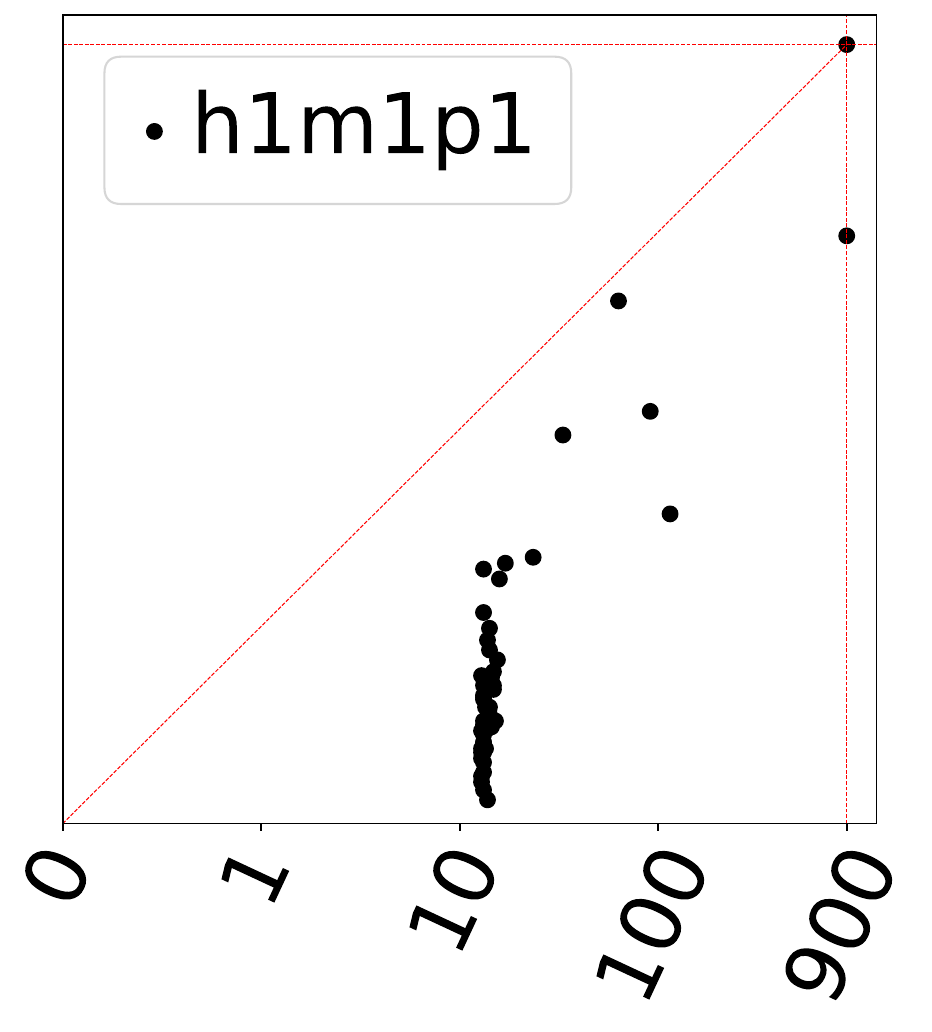} 
        \hspace*{0.2cm}
        \includegraphics[scale=0.125]{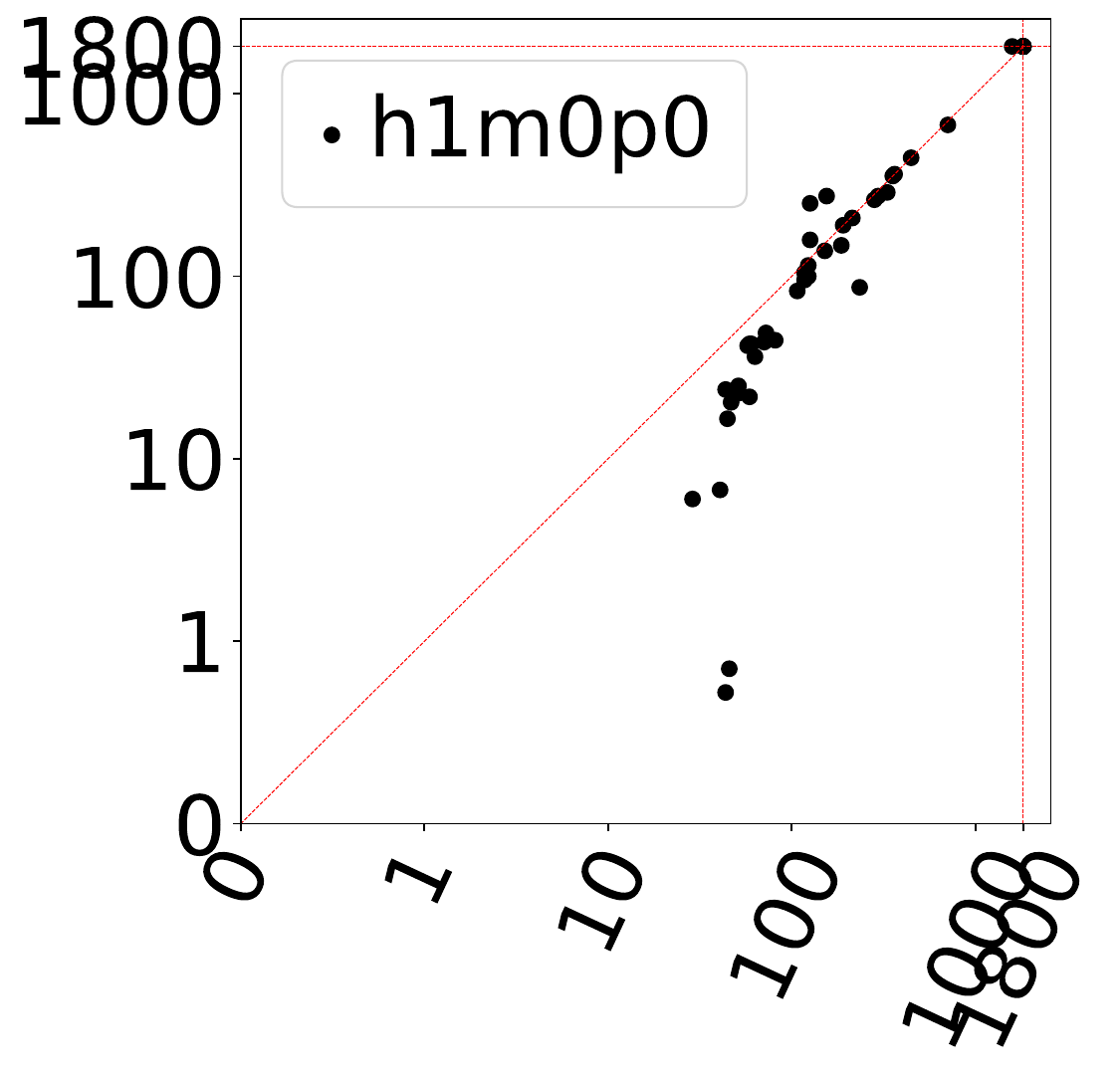}
        \includegraphics[scale=0.125]{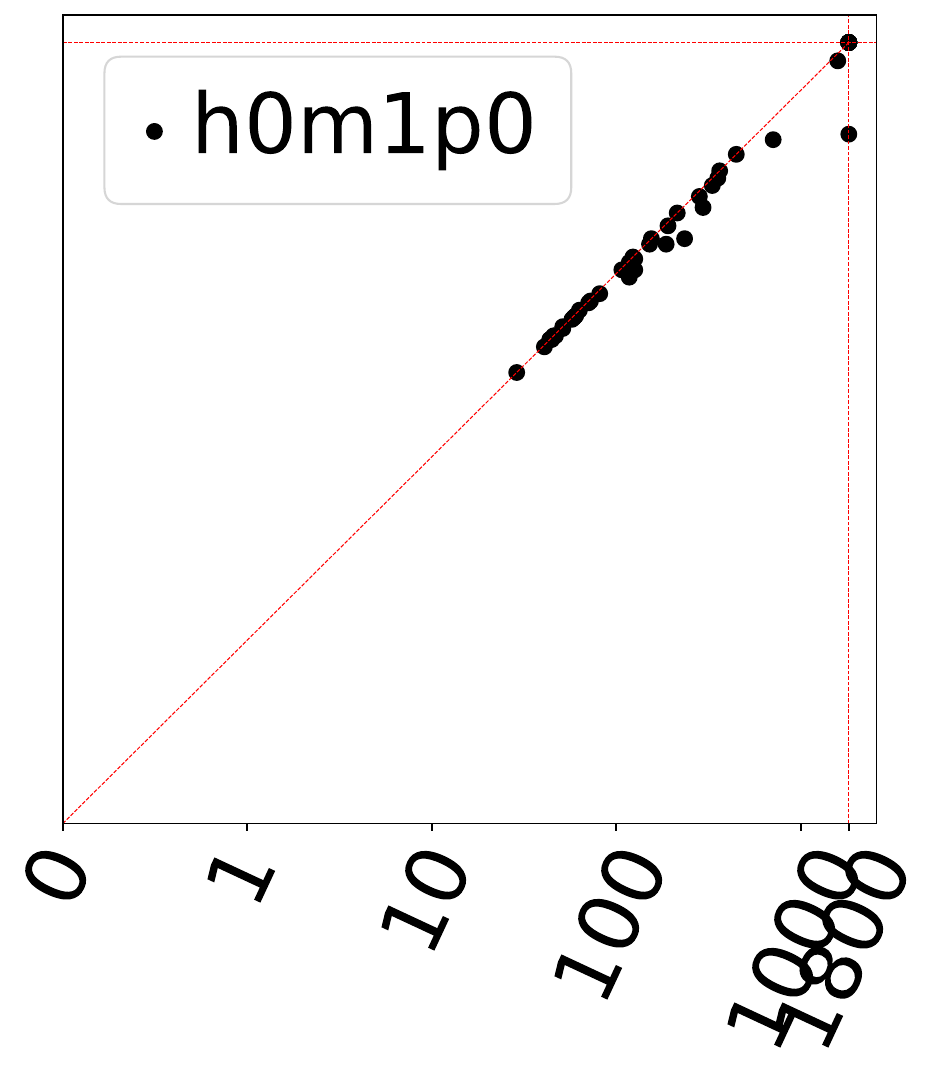}
        \includegraphics[scale=0.125]{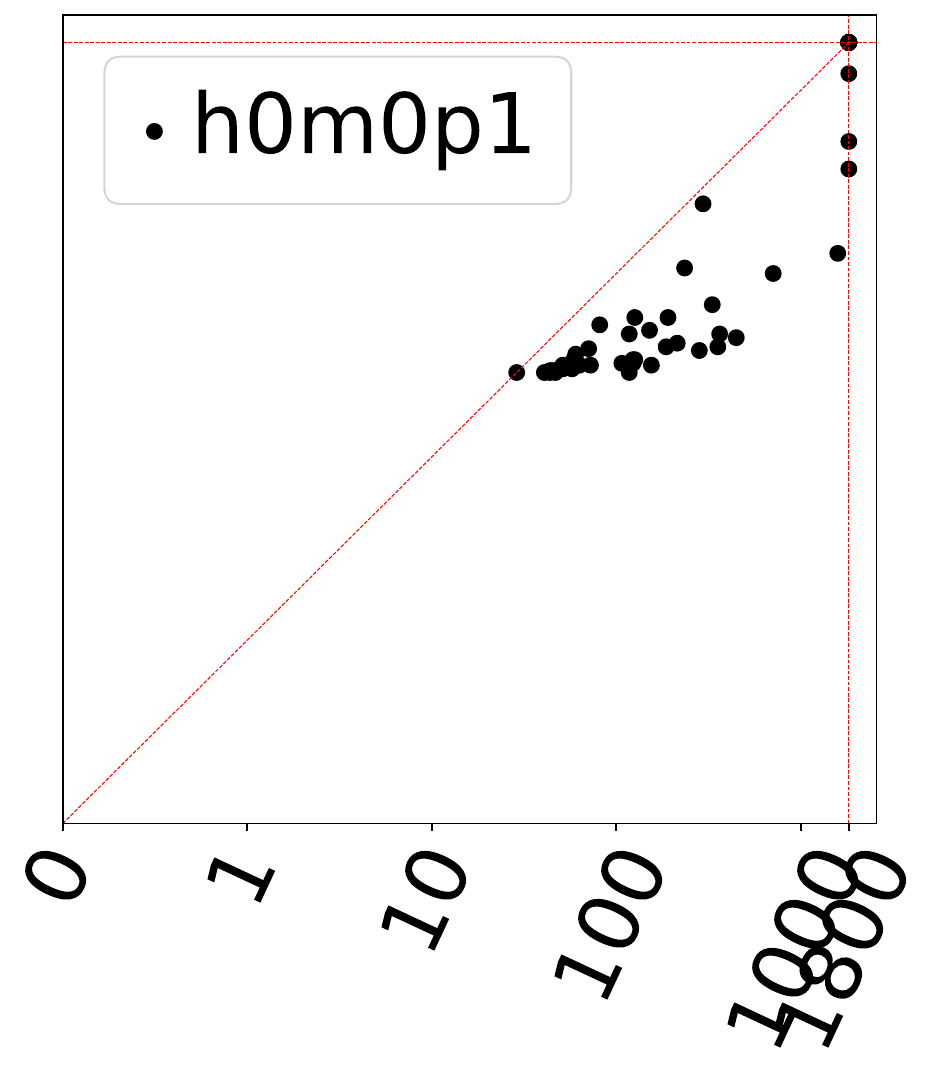}
        \includegraphics[scale=0.125]{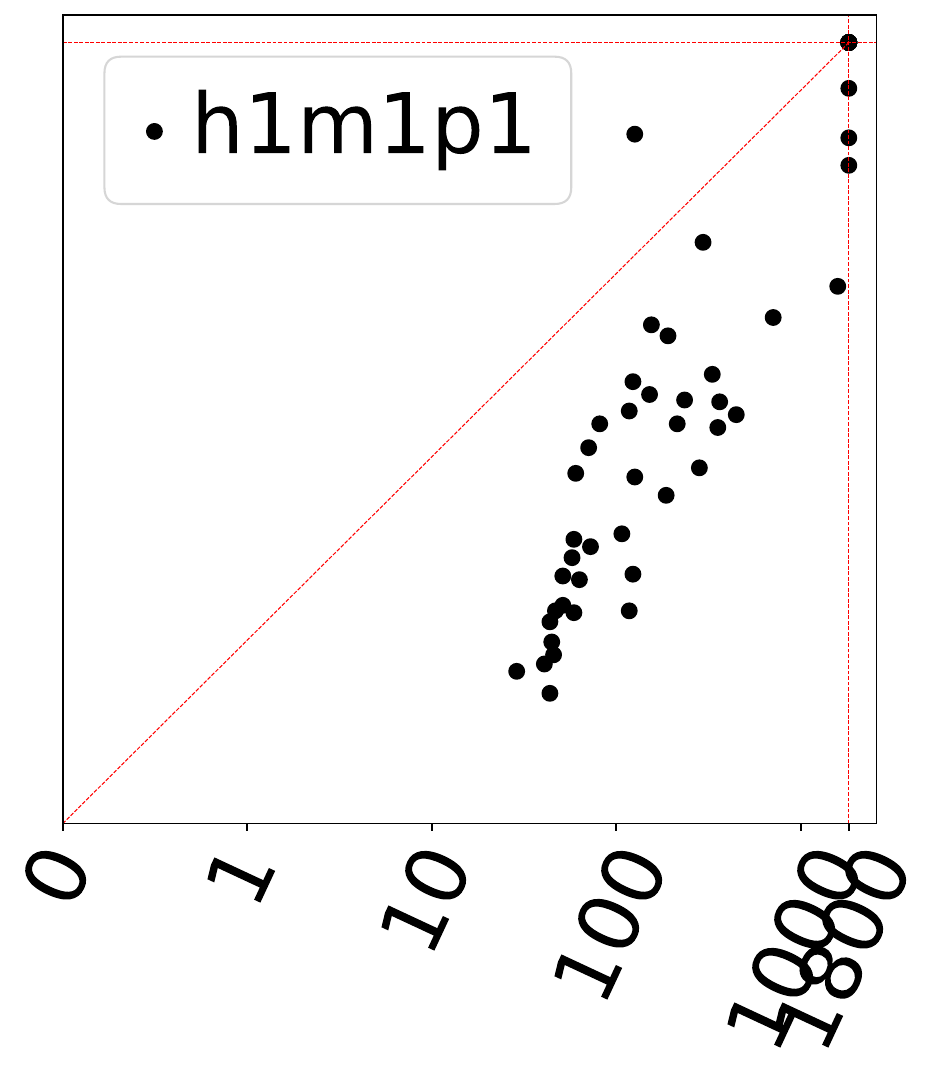}
        
        \caption{Random ($200 \times 200$)}
        \label{fig:scatter-random}
    \end{subfigure}

    \begin{subfigure}{\linewidth} 
        \centering
        \includegraphics[scale=0.125]{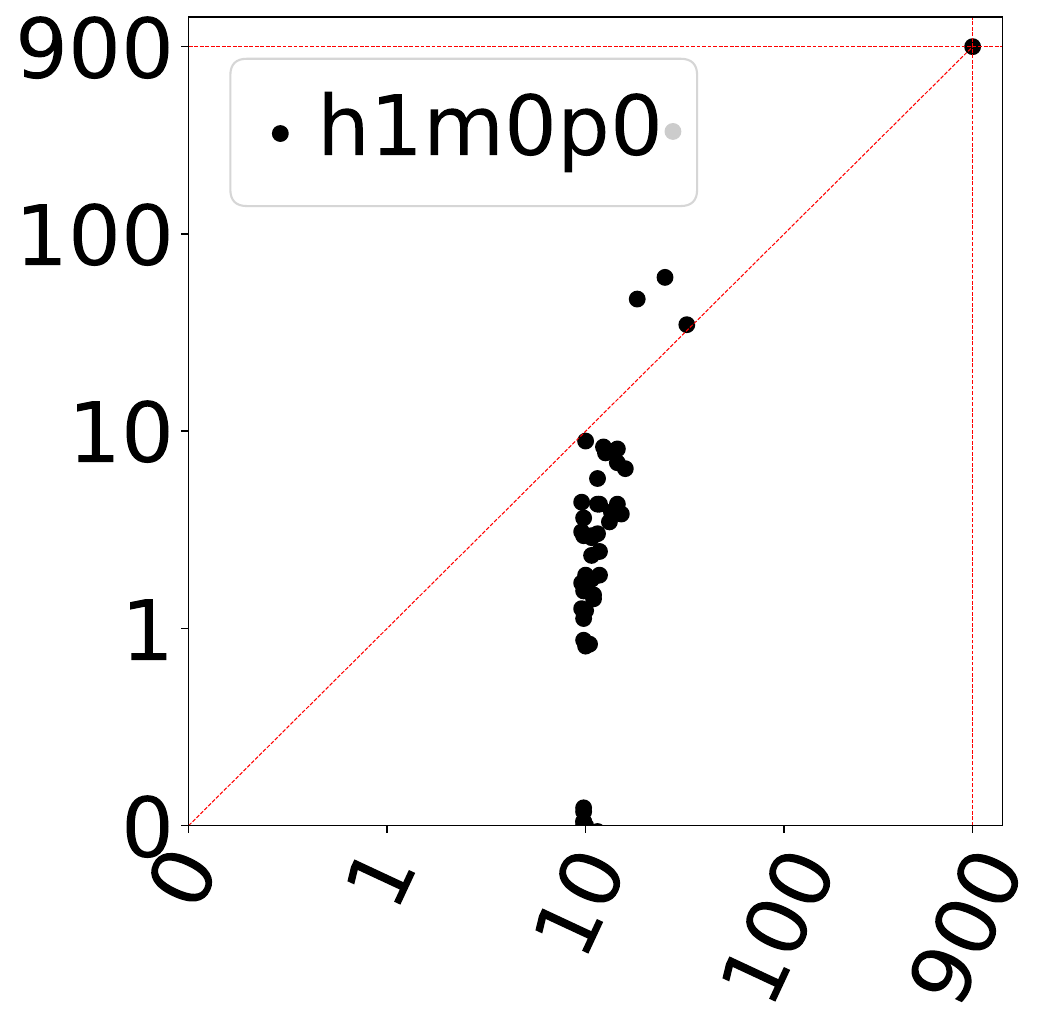}
        \includegraphics[scale=0.125]{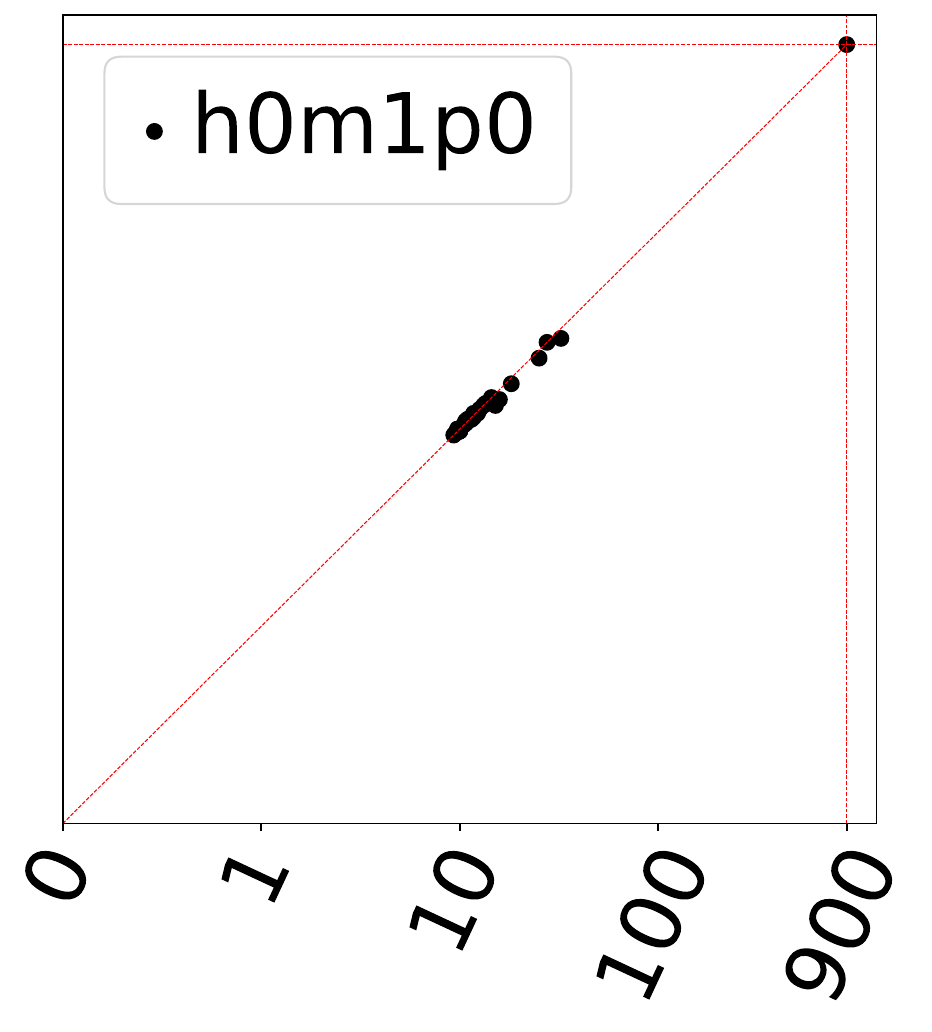}
        \includegraphics[scale=0.125]{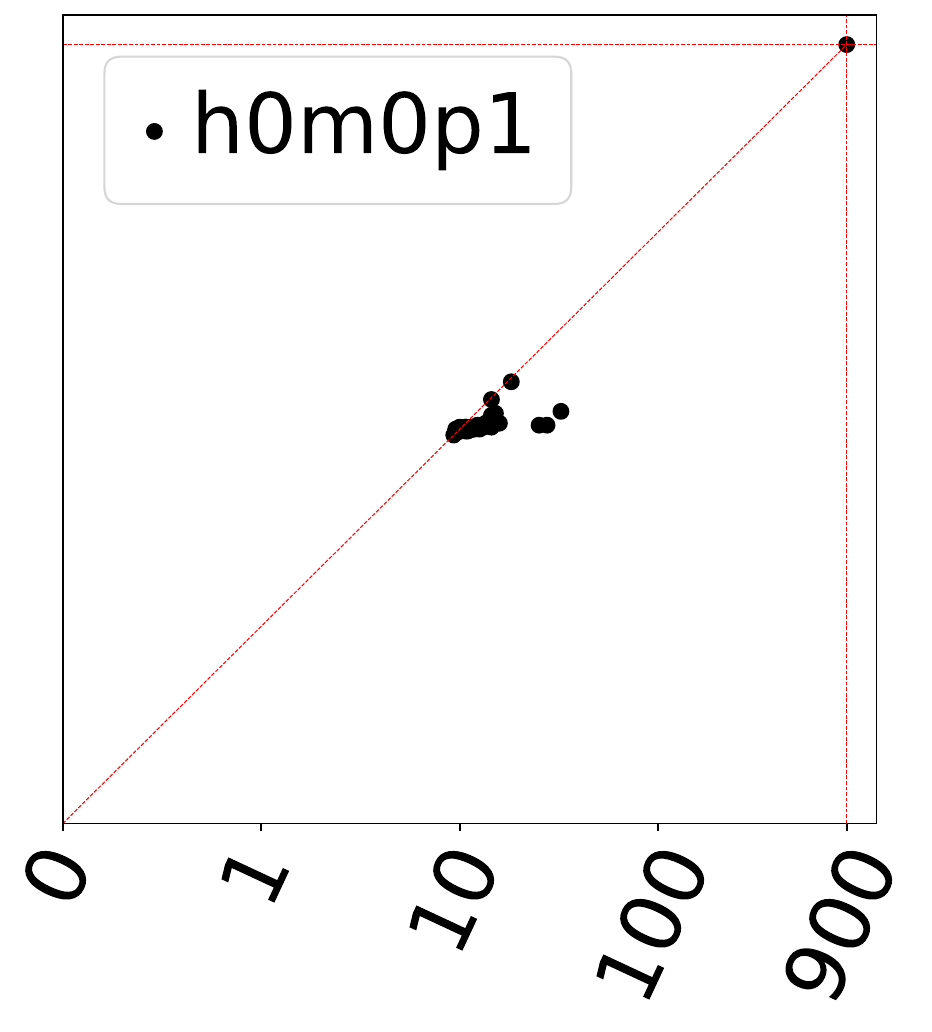}
        \includegraphics[scale=0.125]{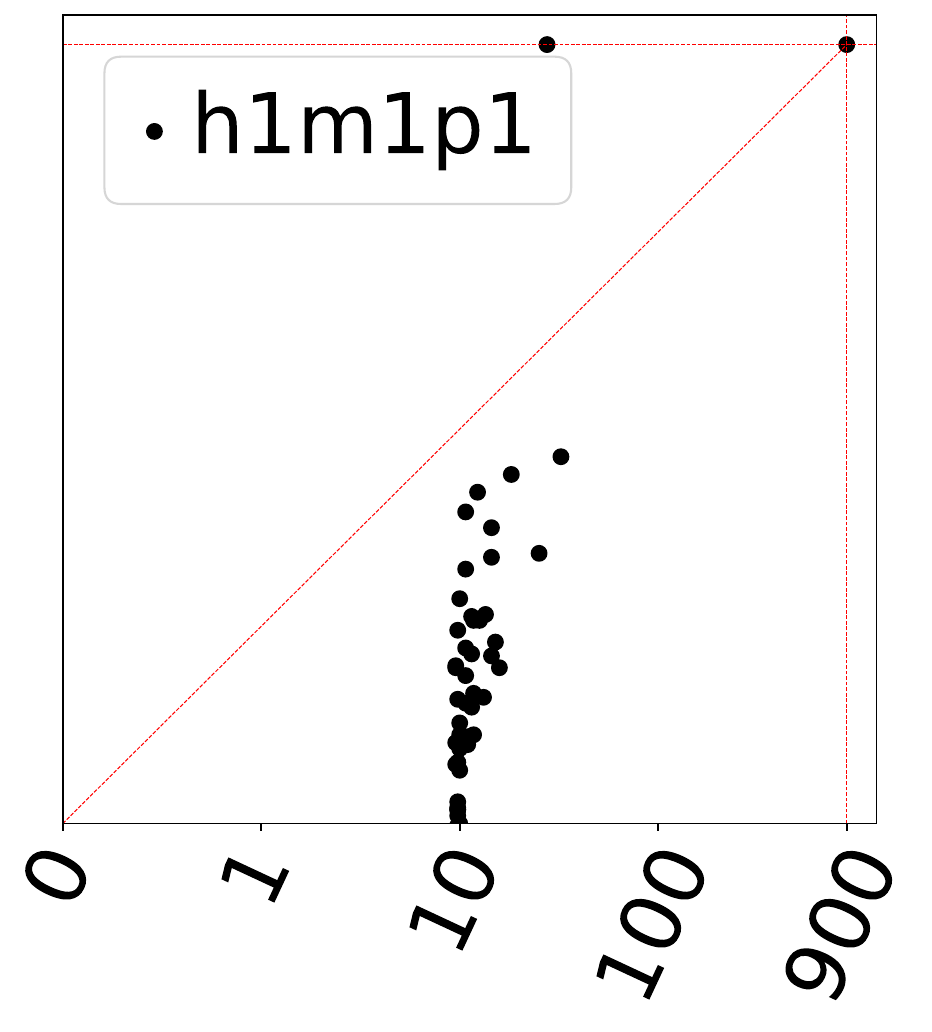}
        \hspace*{0.2cm}
        \includegraphics[scale=0.125]{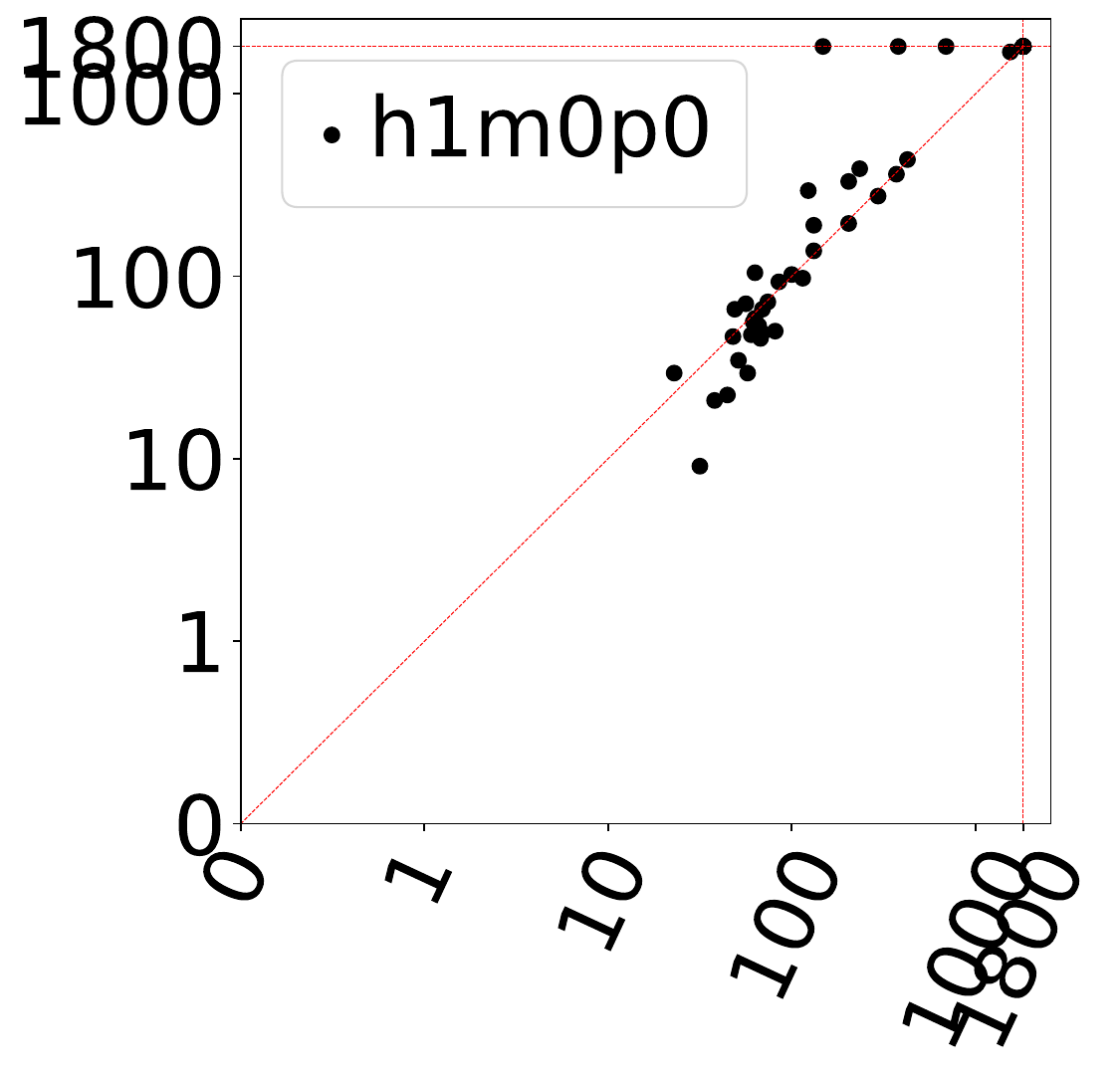}
        \includegraphics[scale=0.125]{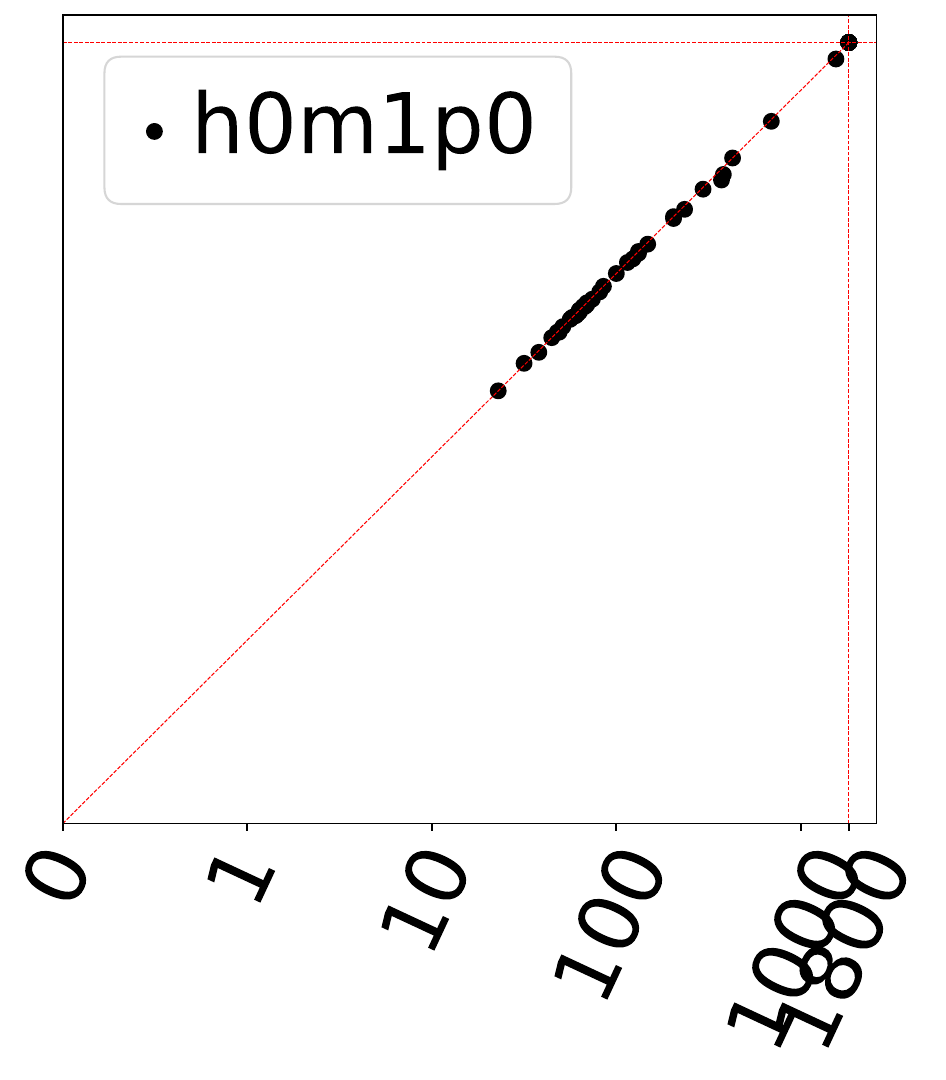}
        \includegraphics[scale=0.125]{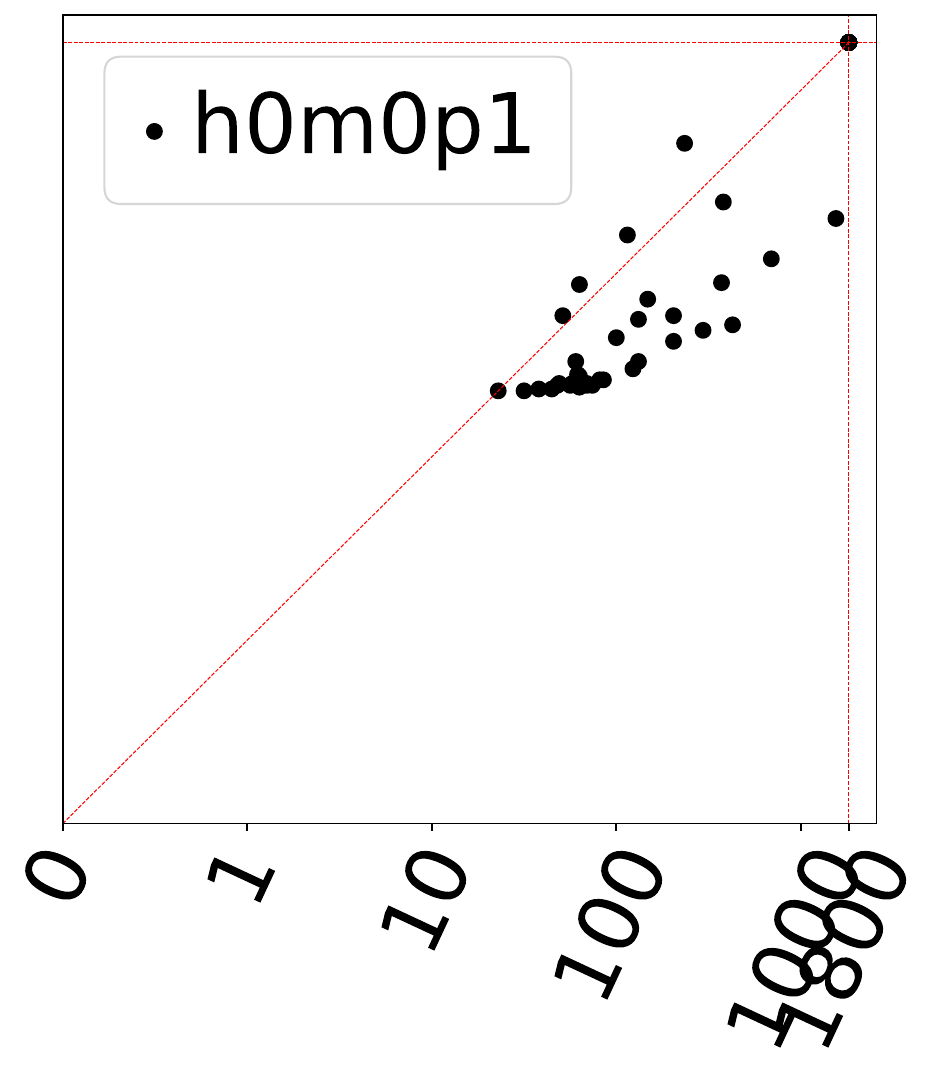}
        \includegraphics[scale=0.125]{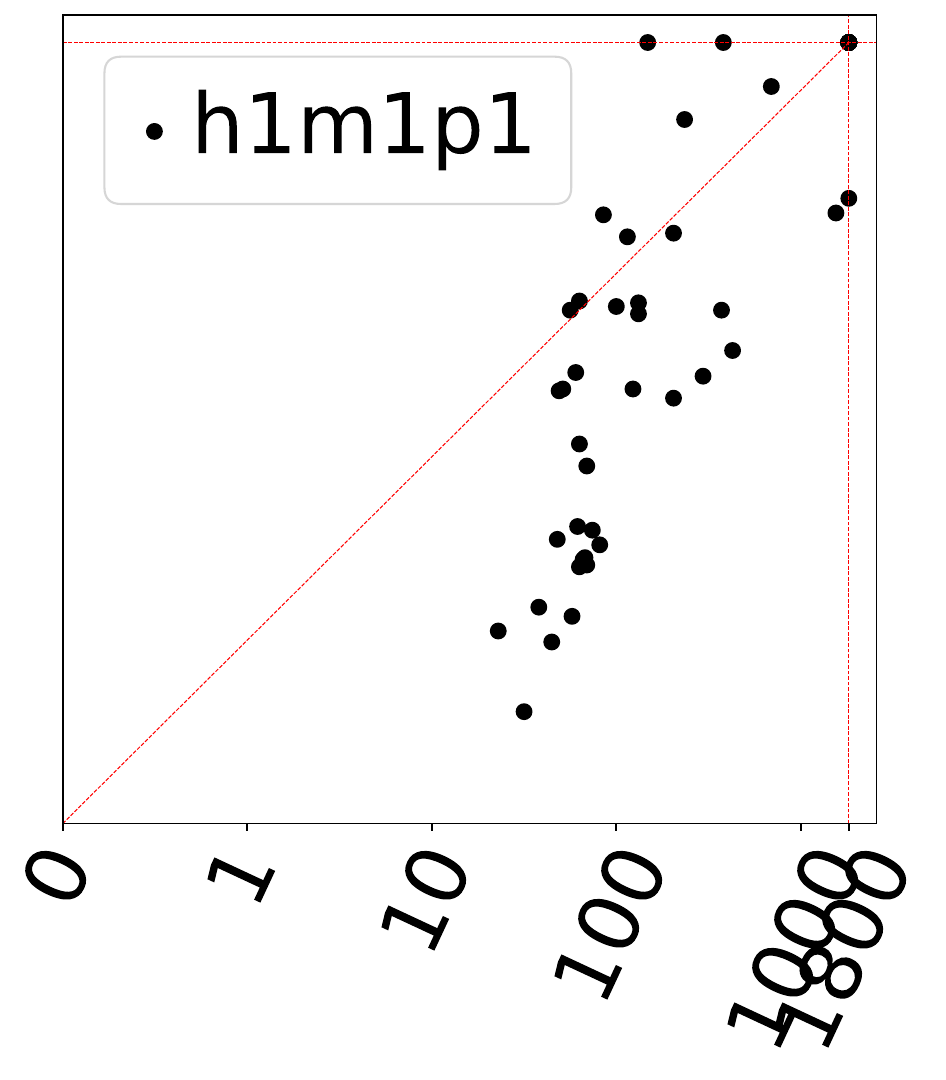}
        
        \caption{Warehouse ($123 \times 321$)}
        \label{fig:scatter-warehouse}
    \end{subfigure}

    \begin{subfigure}{\linewidth} 
        \centering
        \includegraphics[scale=0.125]{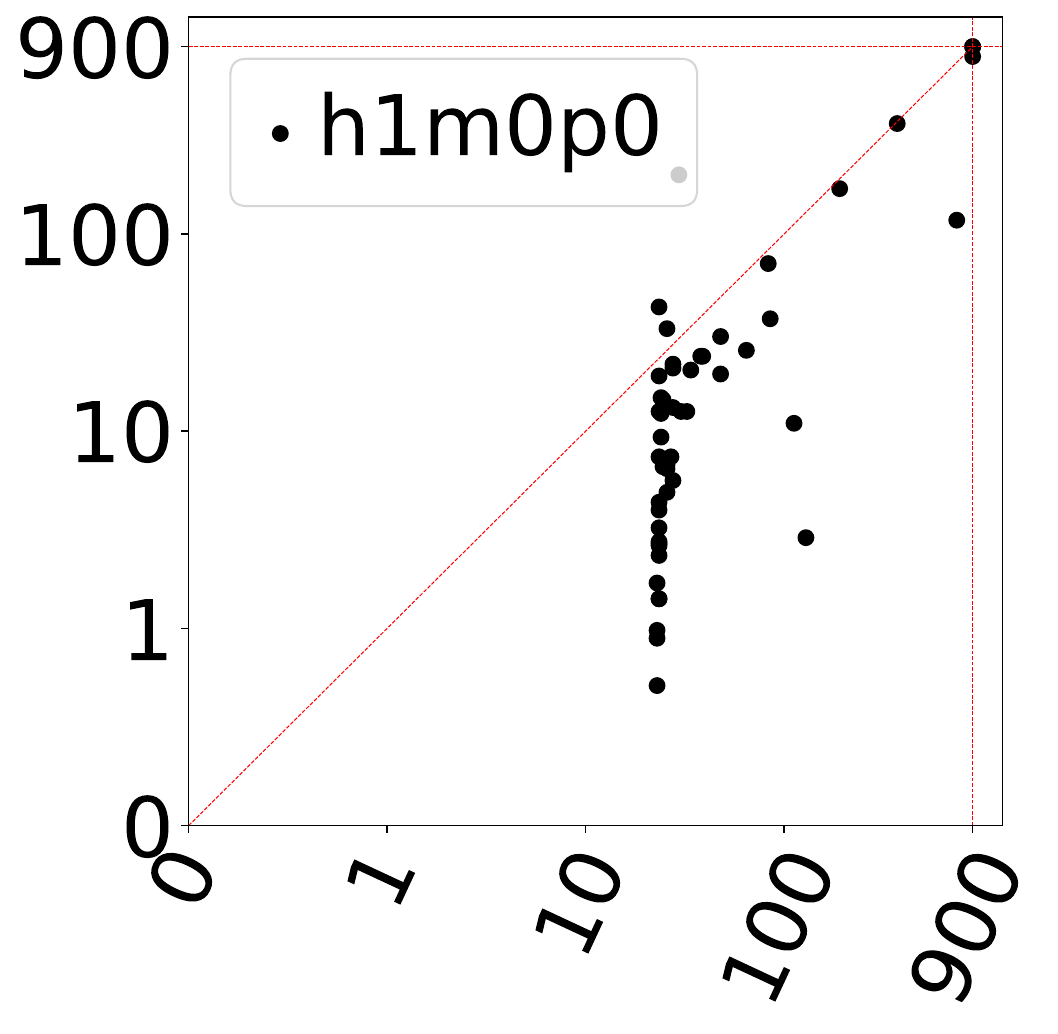}
        \includegraphics[scale=0.125]{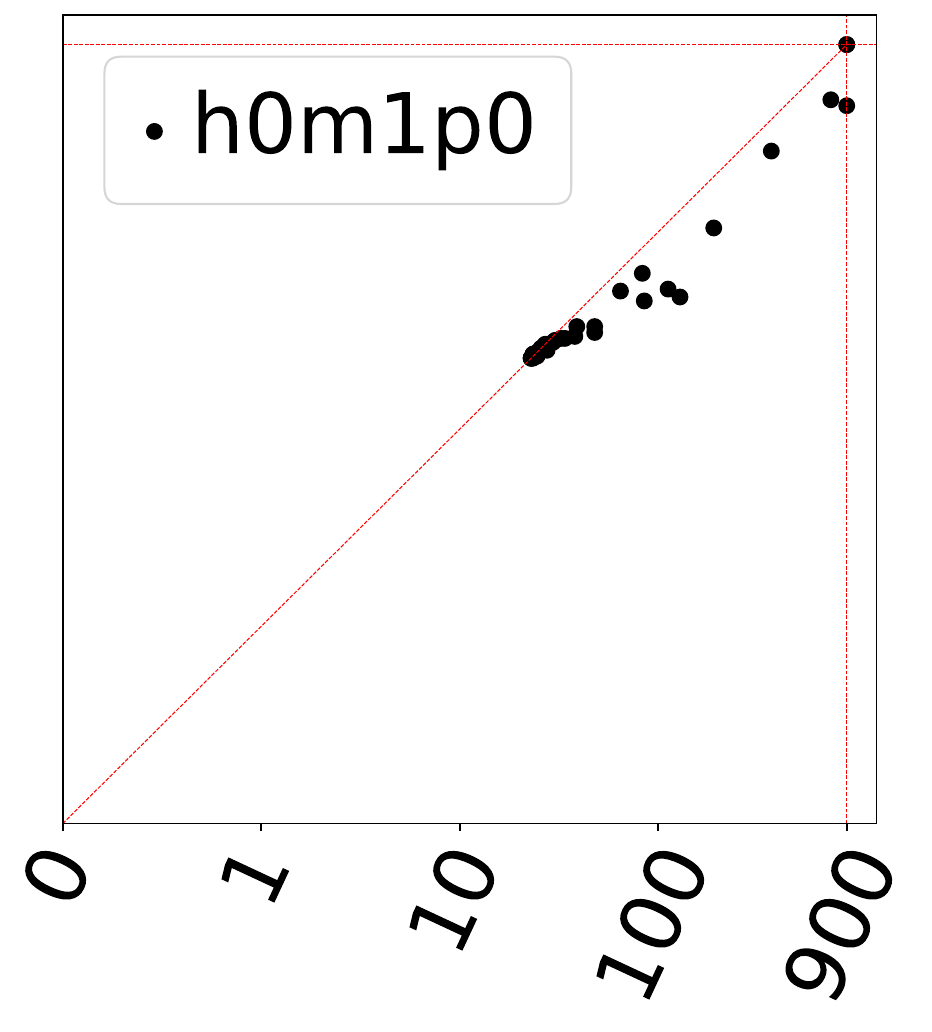}
        \includegraphics[scale=0.125]{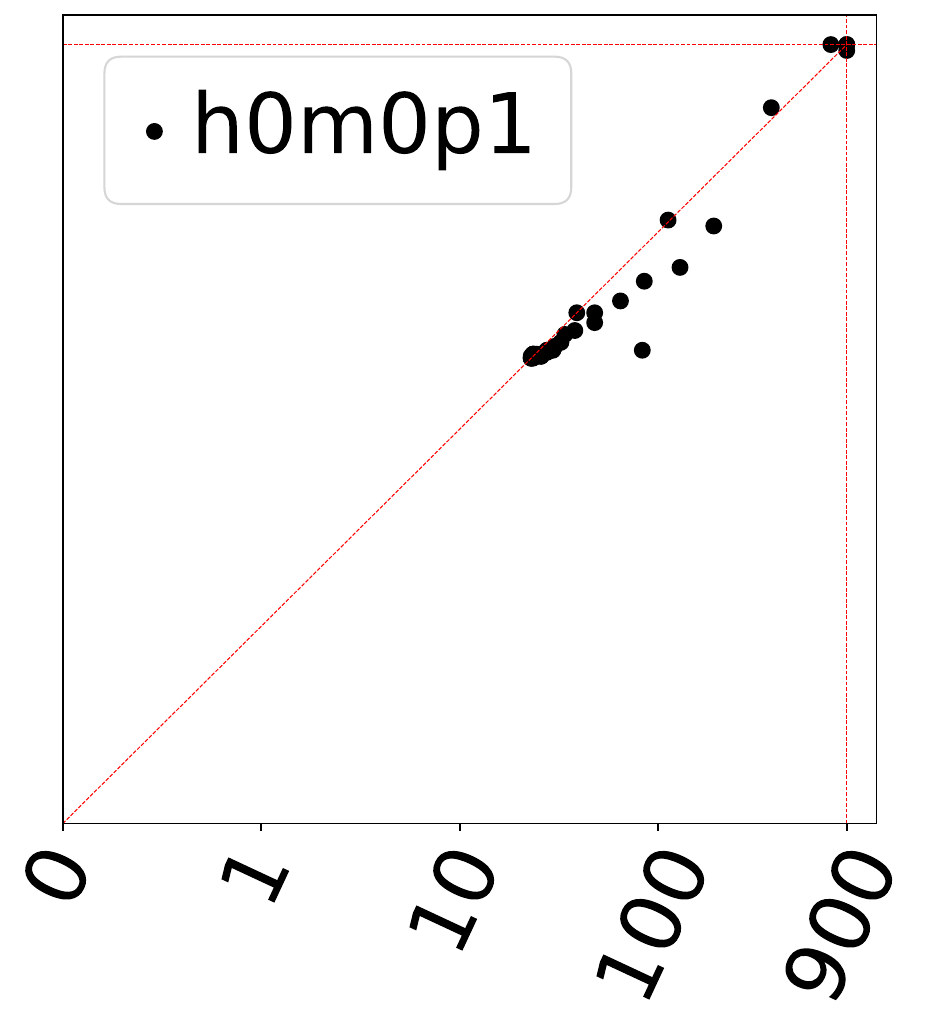}
        \includegraphics[scale=0.125]{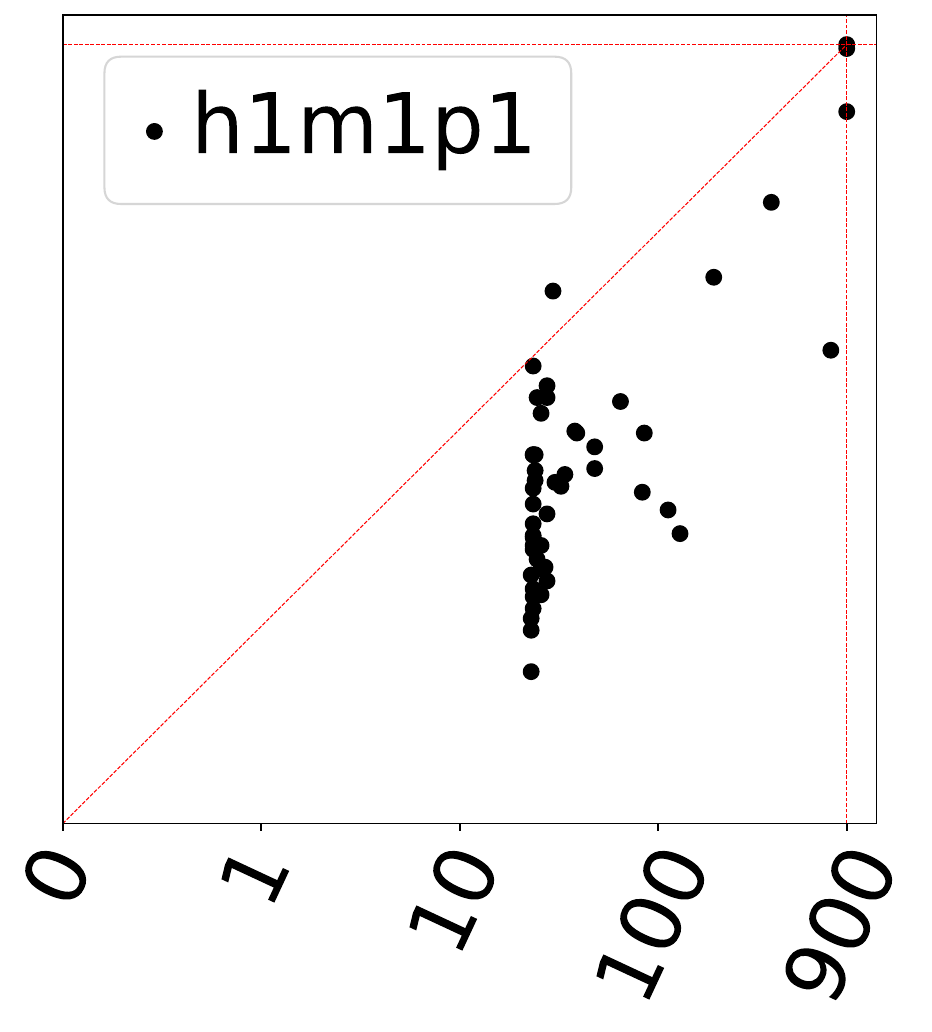}
        \hspace*{0.2cm}
        \includegraphics[scale=0.125]{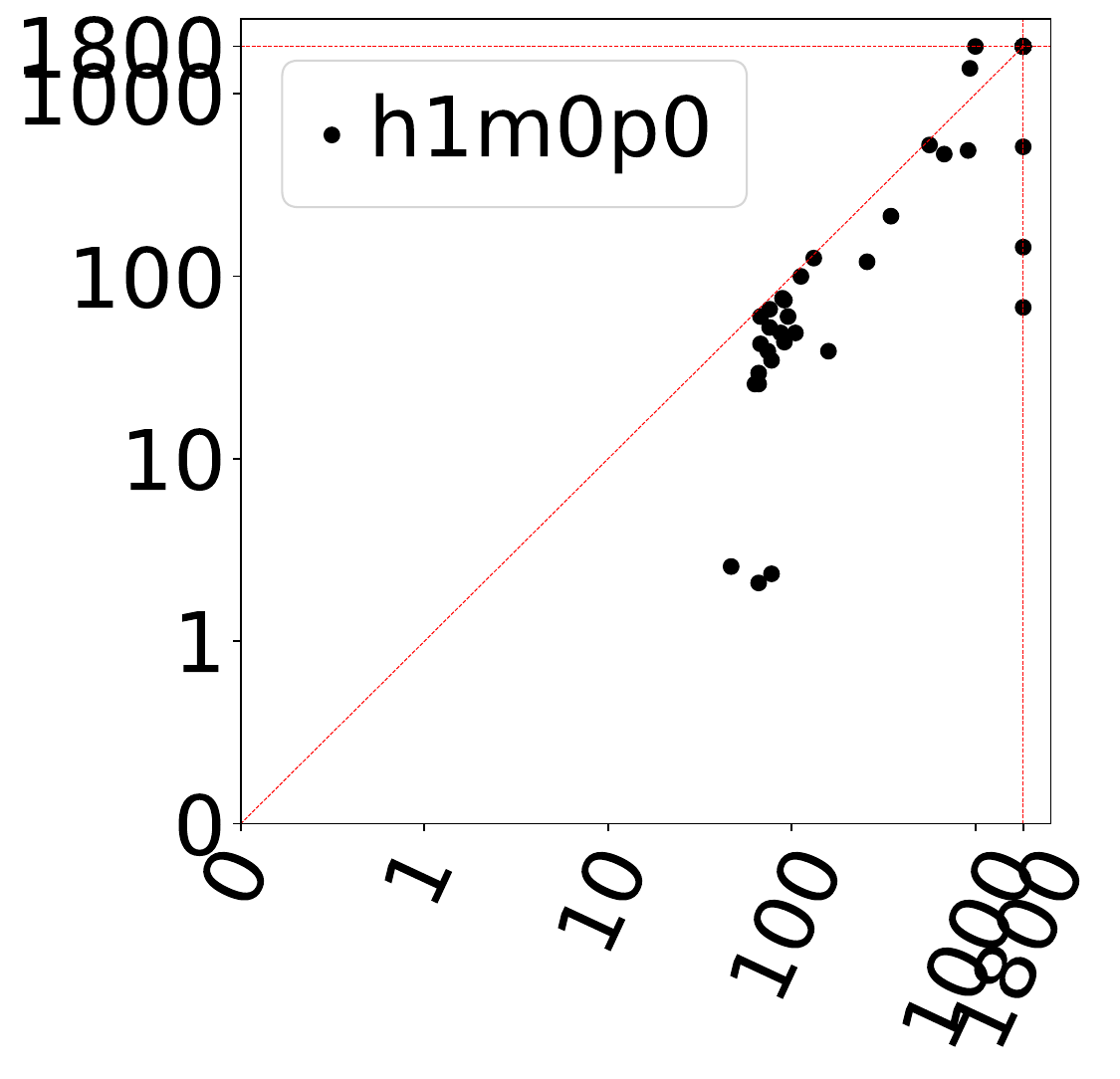}
        \includegraphics[scale=0.125]{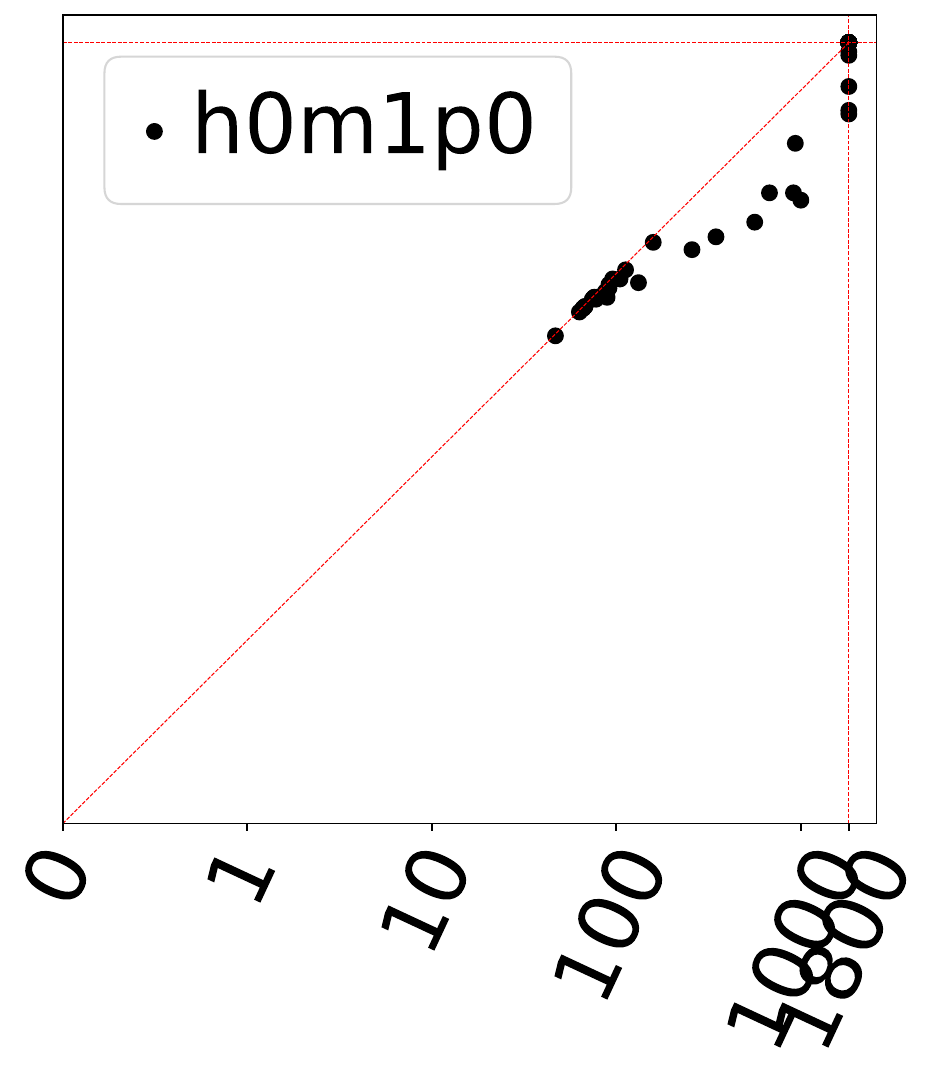}
        \includegraphics[scale=0.125]{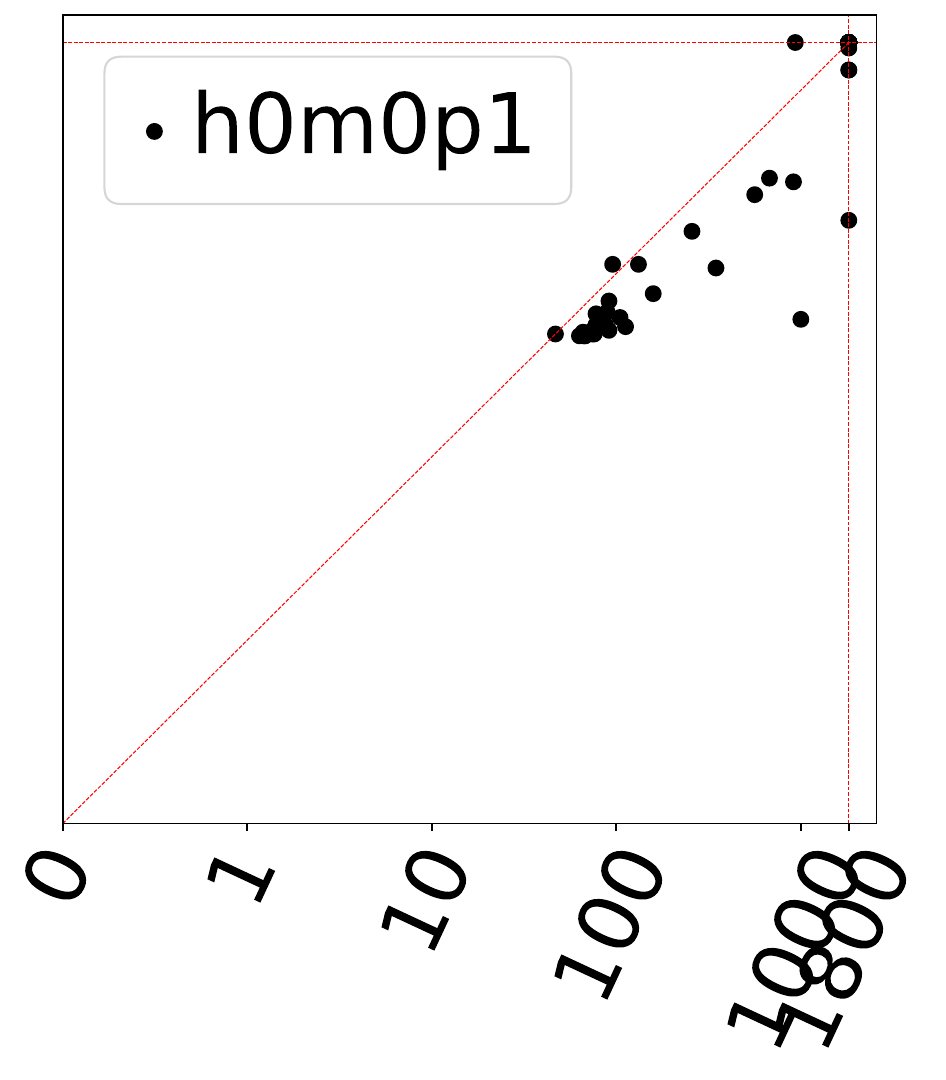}
        \includegraphics[scale=0.125]{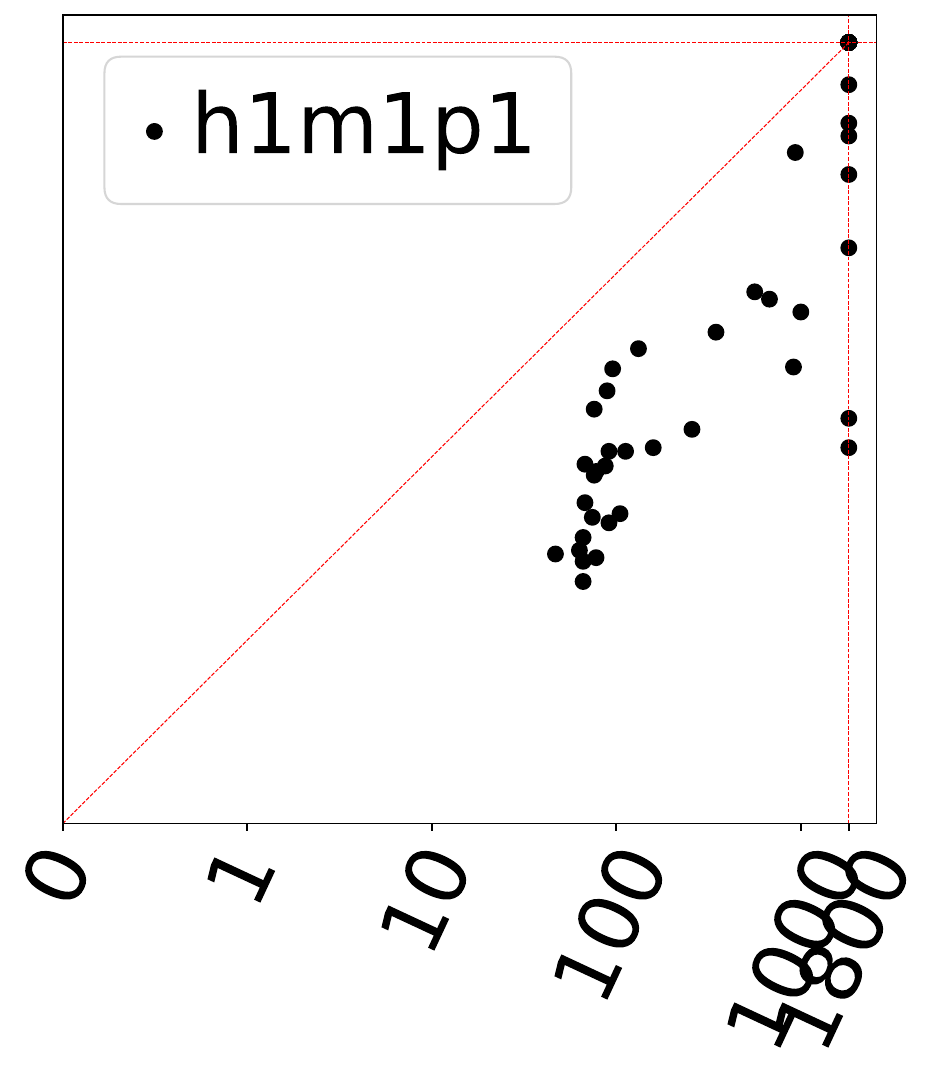}
        
        \caption{Paris ($256 \times 256$)}
        \label{fig:scatter-paris}
    \end{subfigure}

        

    \begin{subfigure}{\linewidth} 
        \centering
        \includegraphics[scale=0.125]{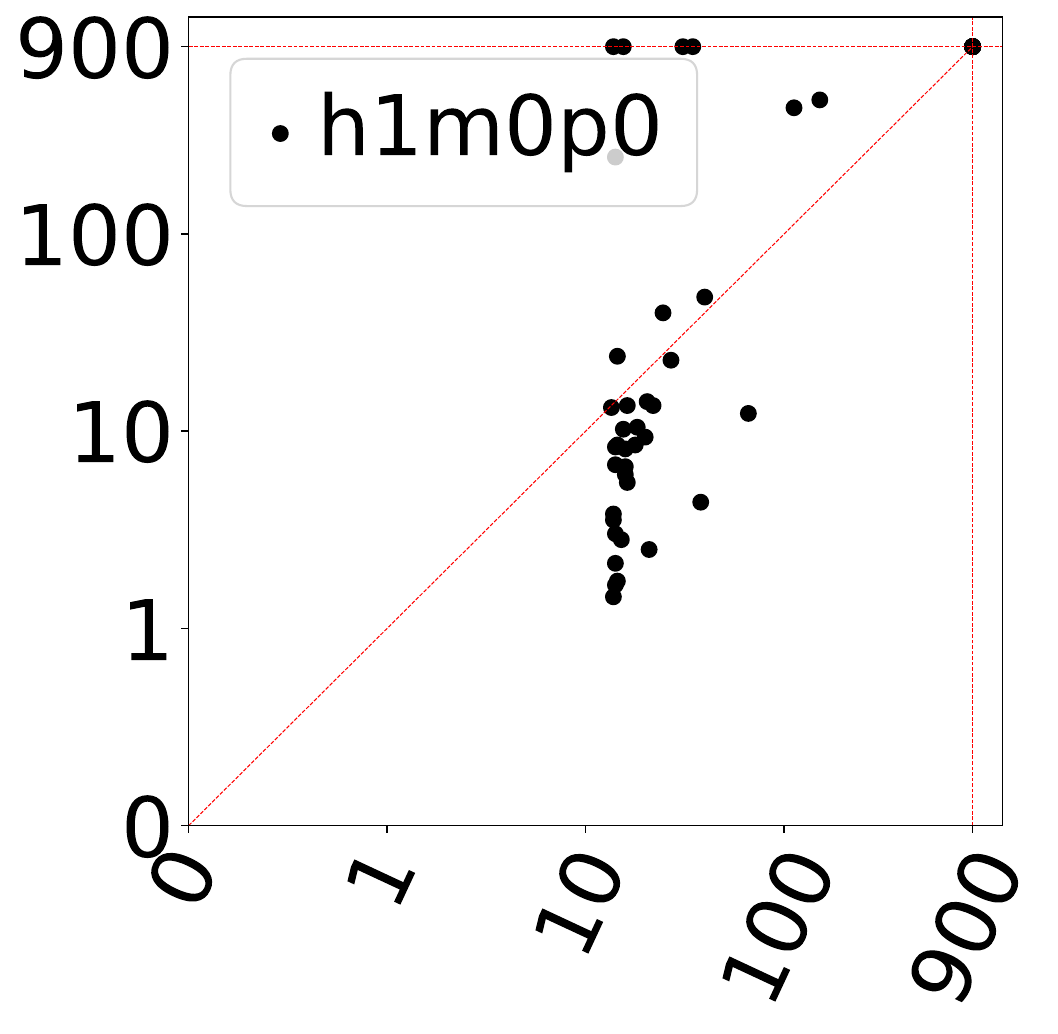}
        \includegraphics[scale=0.125]{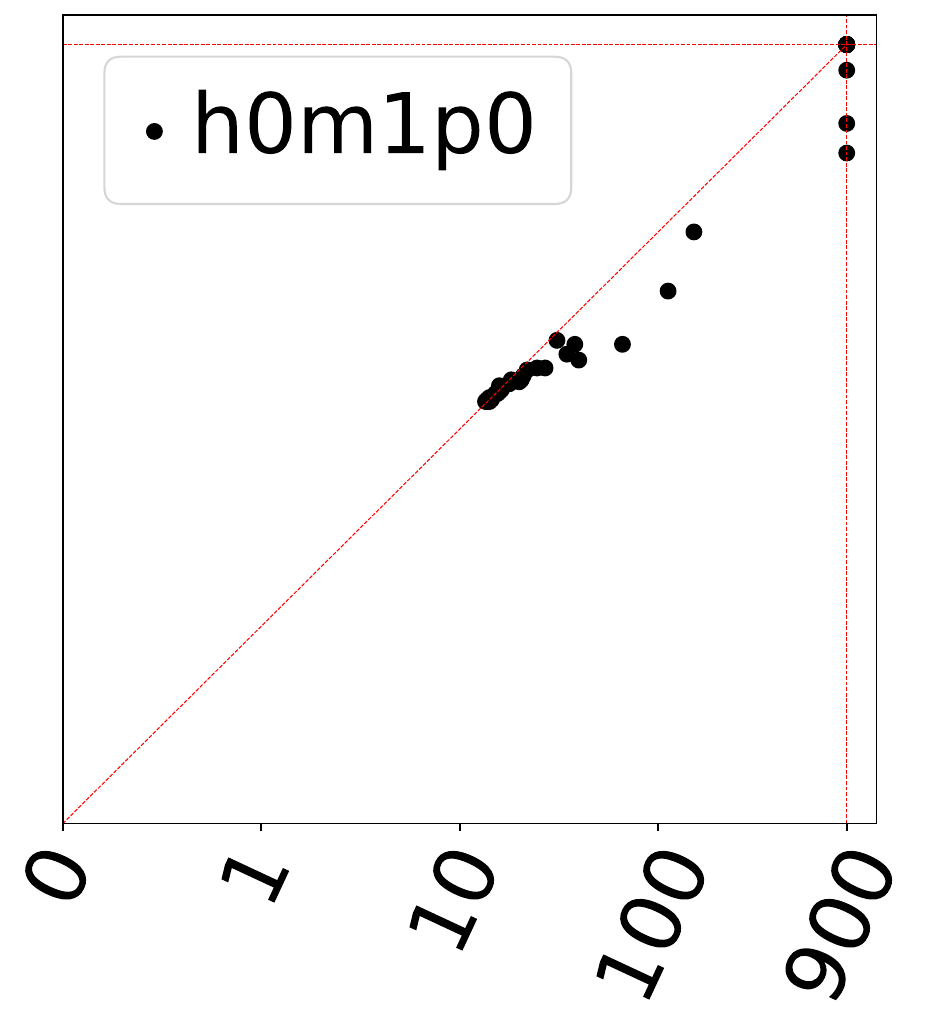}
        \includegraphics[scale=0.125]{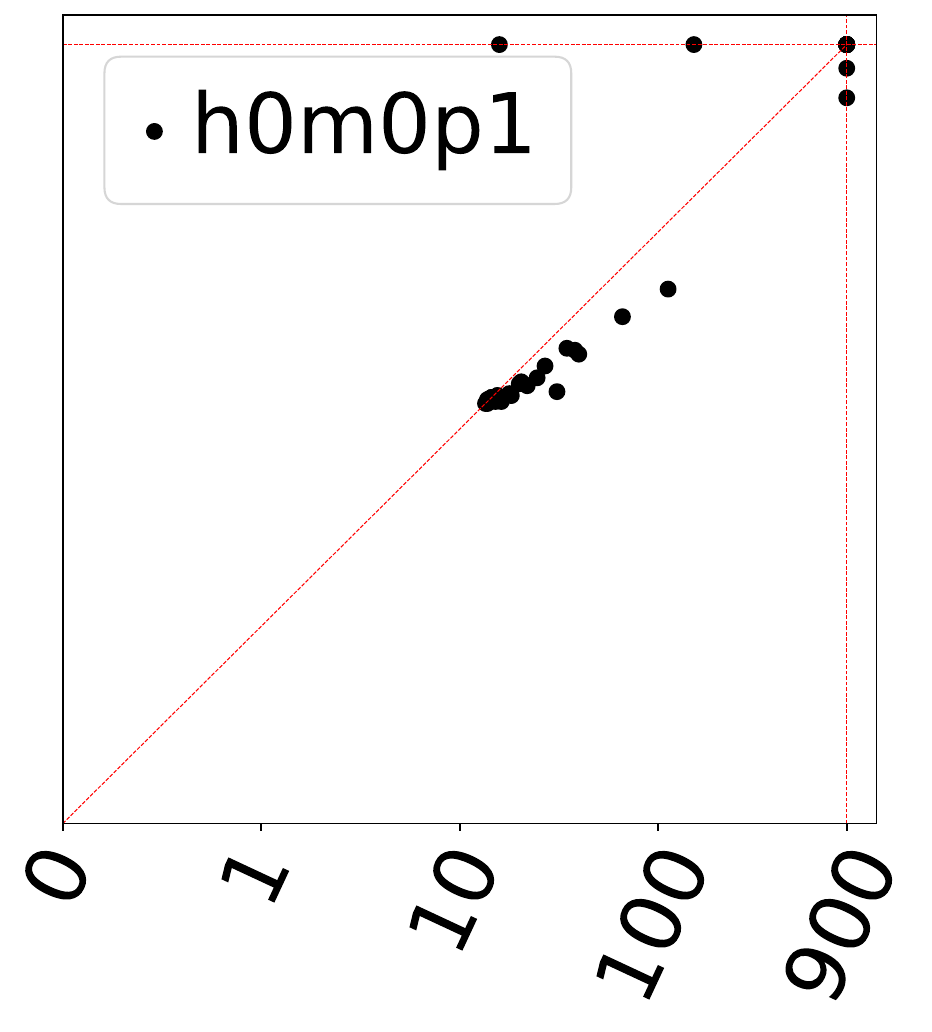}
        \includegraphics[scale=0.125]{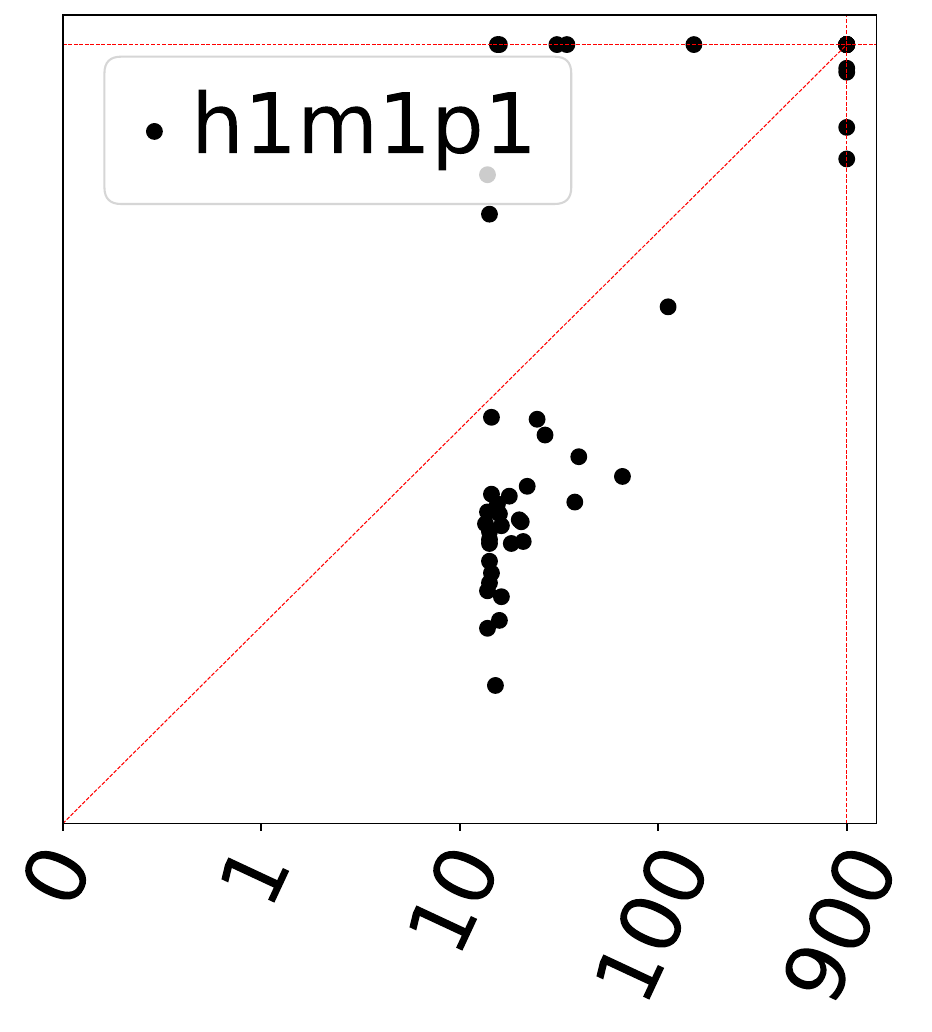}
        \hspace*{0.2cm}
        \includegraphics[scale=0.125]{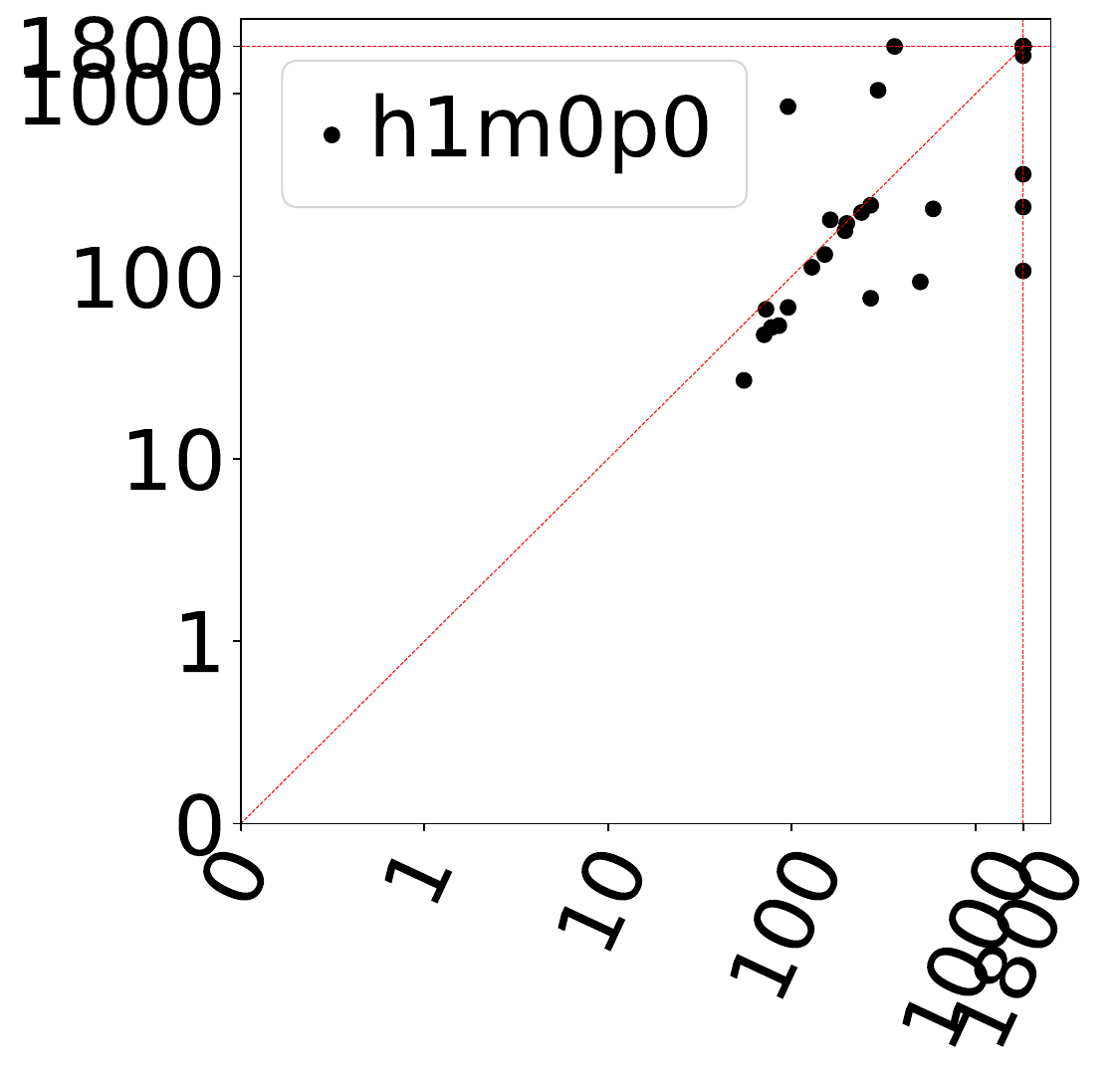}
        \includegraphics[scale=0.125]{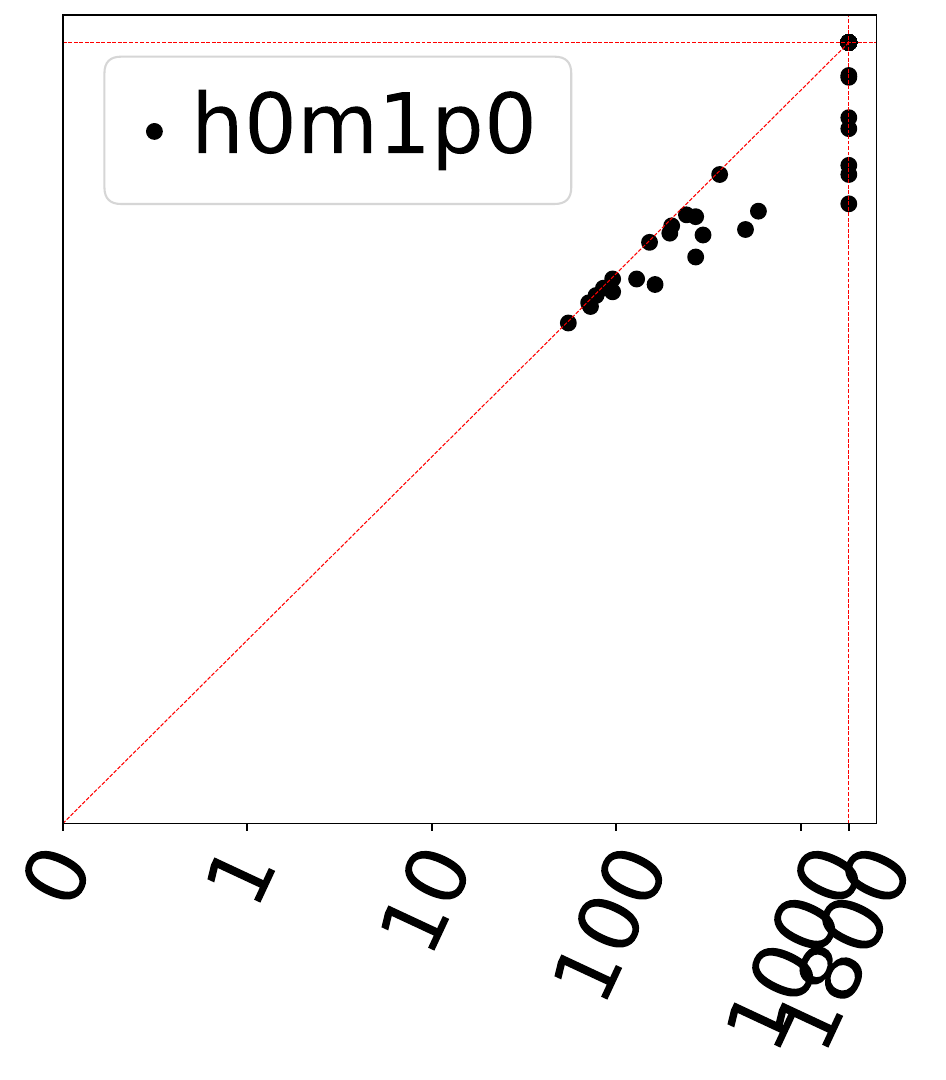}
        \includegraphics[scale=0.125]{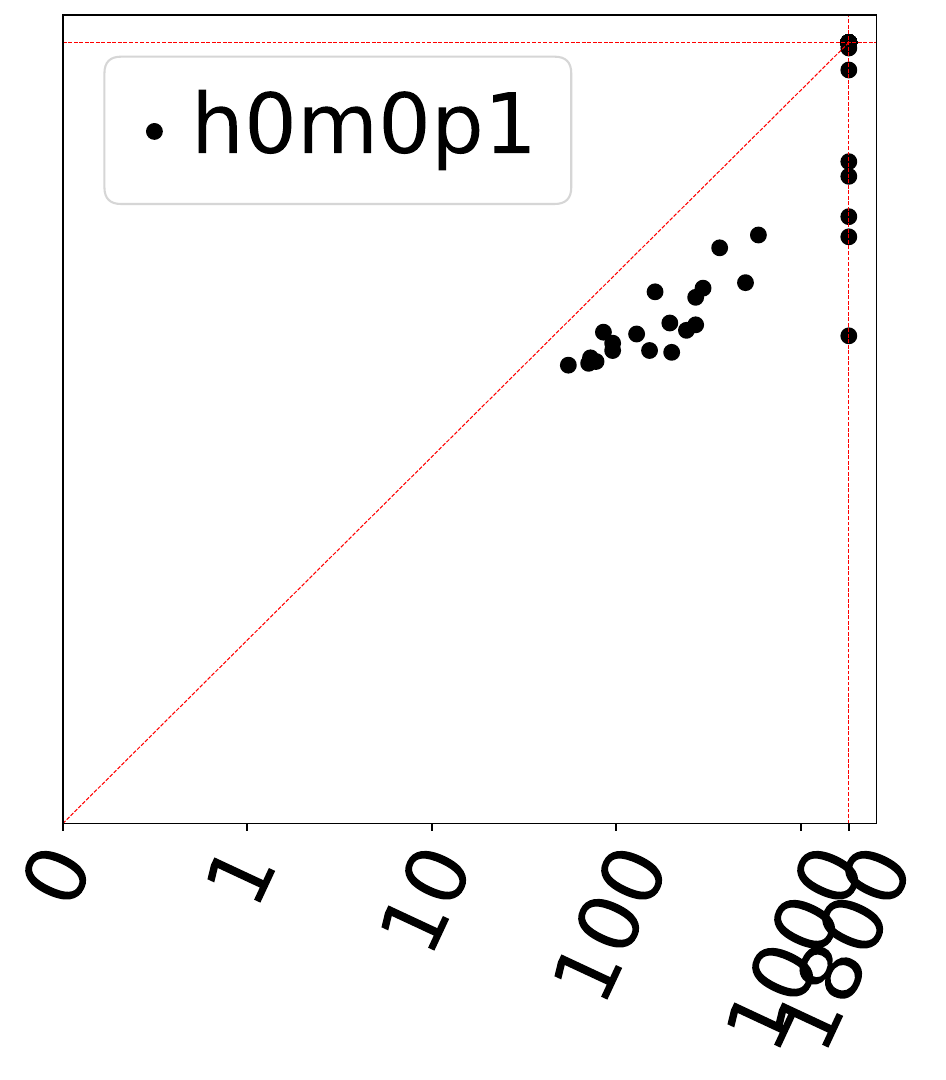}
        \includegraphics[scale=0.125]{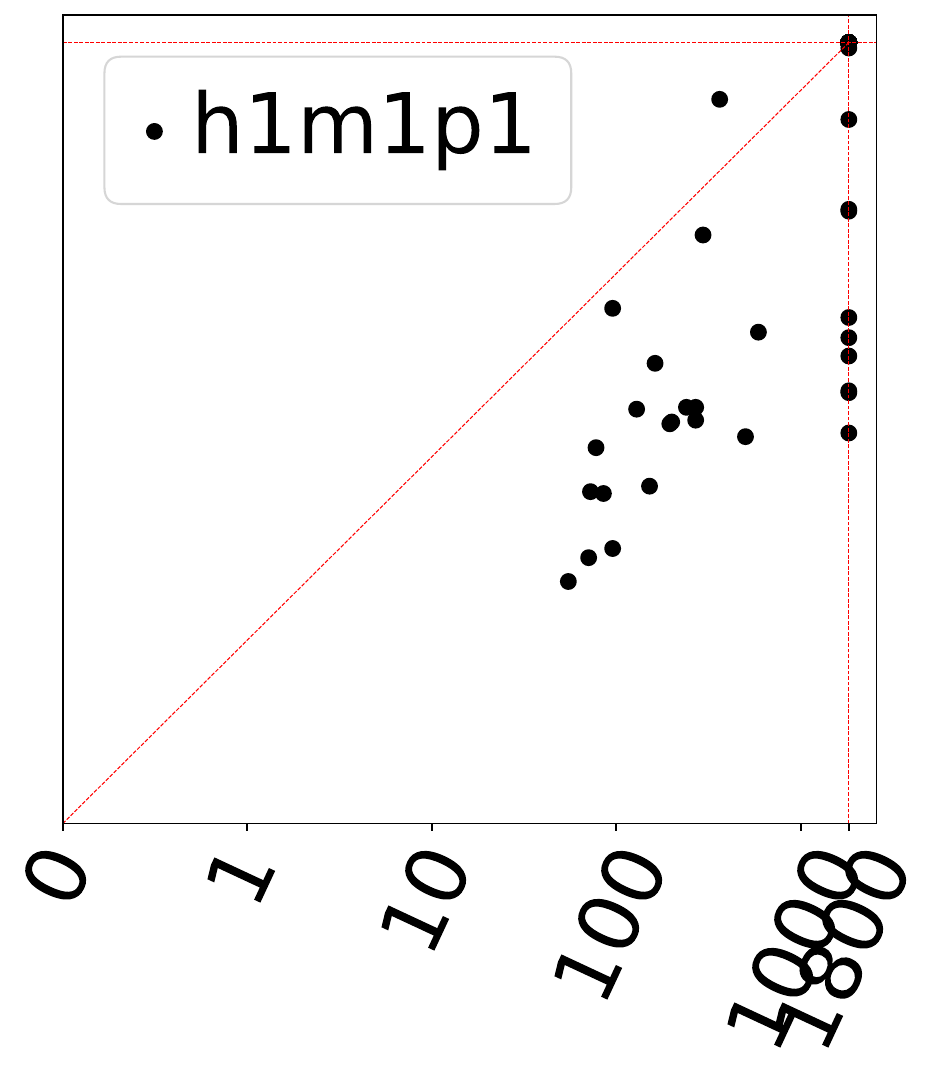}
        
        \caption{Den ($257 \times 256$)}
        \label{fig:scatter-den}
    \end{subfigure}

    \caption{Computation Time Comparison with Baseline for Various Workspaces 
    \\ \hspace*{1.3cm} (\textit{X-axis:} Runtime(\si{\second}), \textit{Y-axis:} Runtime(\si{\second}))
     \hspace*{1.3cm} (Leftmost four plots: $R = 50$, Rightmost four plots: $R = 100$)  
    }
    \label{fig:scatter_plots}
\end{figure*}

\section{Evaluation}
\label{sec:eval}



\begin{figure}[!ht]
    
    \begin{subfigure}{\columnwidth} 
        \includegraphics[scale=0.27]{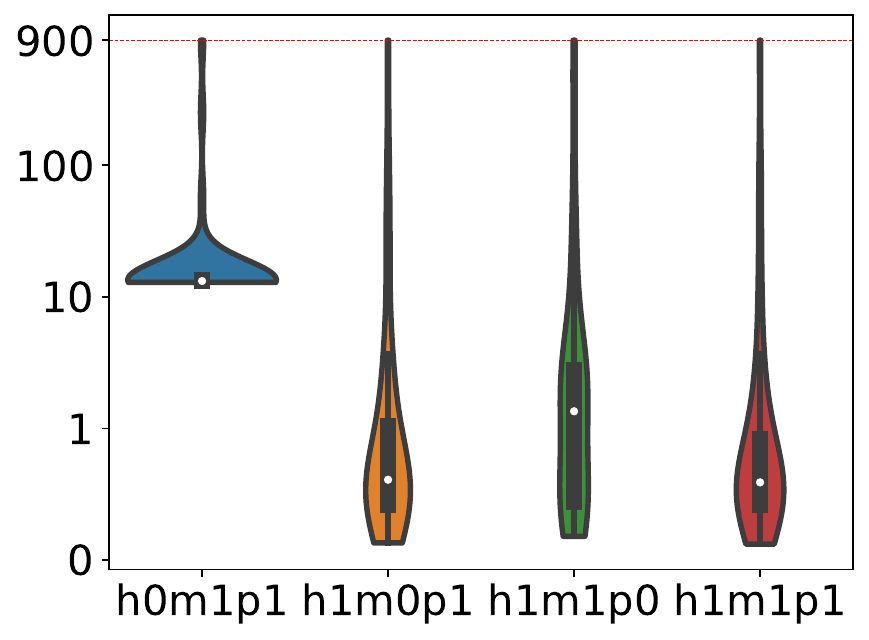}
        \includegraphics[scale=0.27]{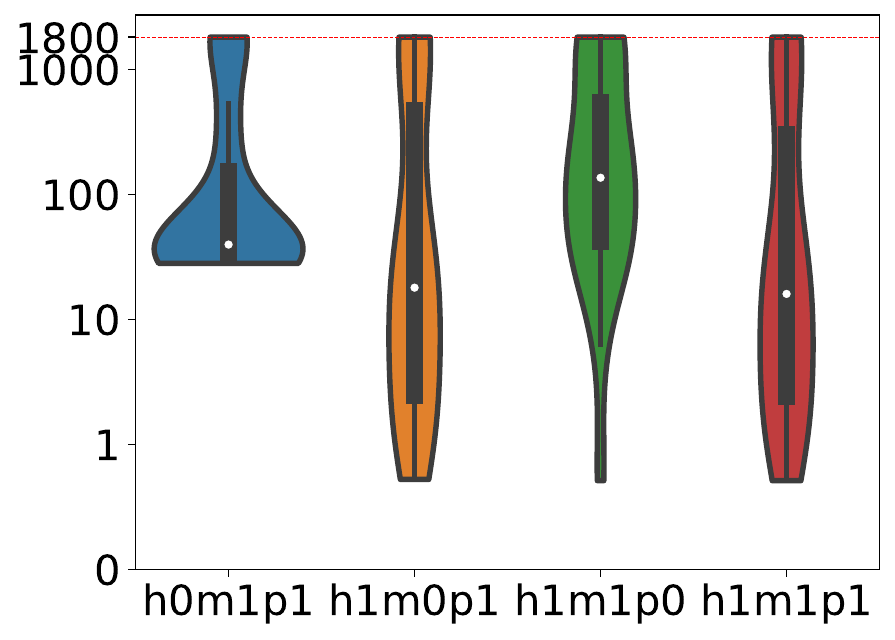}
        \caption{Random ($200 \times 200$)}
        \label{fig:violin-random}
    \end{subfigure}

    \begin{subfigure}{\columnwidth} 
        \includegraphics[scale=0.27]{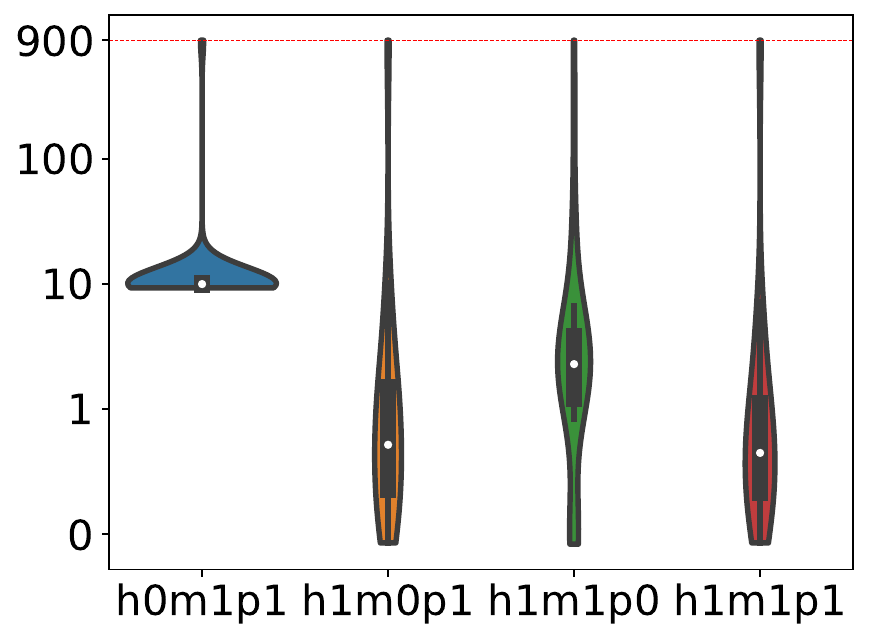}
        \includegraphics[scale=0.27]{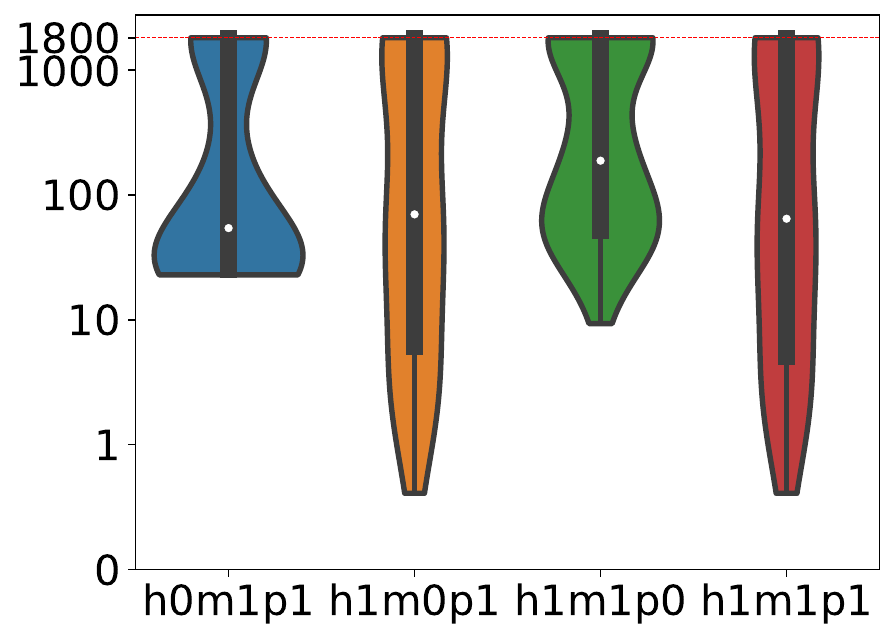}
        \caption{Warehouse ($123 \times 321$)}
        \label{fig:violin-warehouse}
    \end{subfigure}

    \begin{subfigure}{\columnwidth} 
        \includegraphics[scale=0.27]{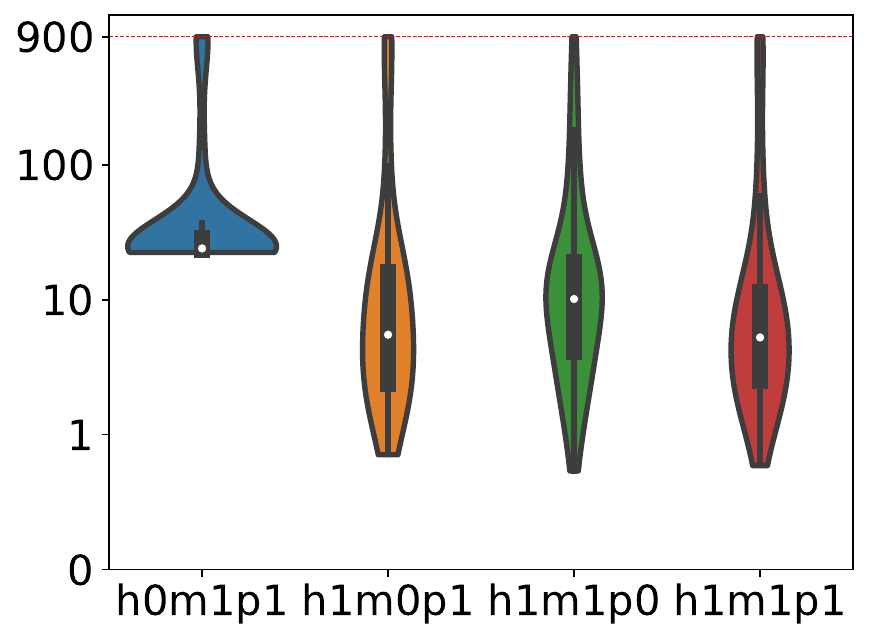}
        \includegraphics[scale=0.27]{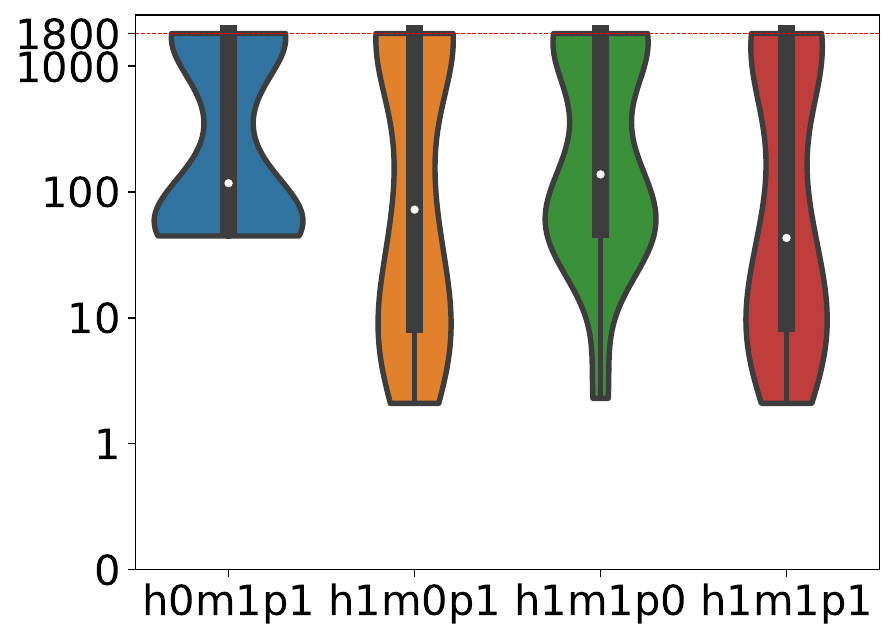}
        \caption{Paris ($256 \times 256$)}
        \label{fig:violin-paris}
    \end{subfigure}


    \begin{subfigure}{\columnwidth} 
        \includegraphics[scale=0.27]{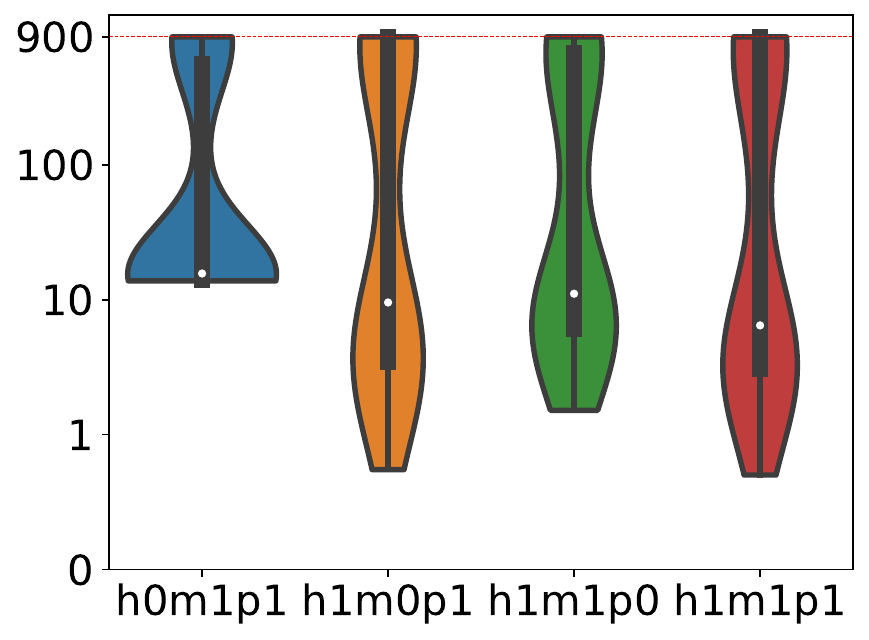}
        \includegraphics[scale=0.27]{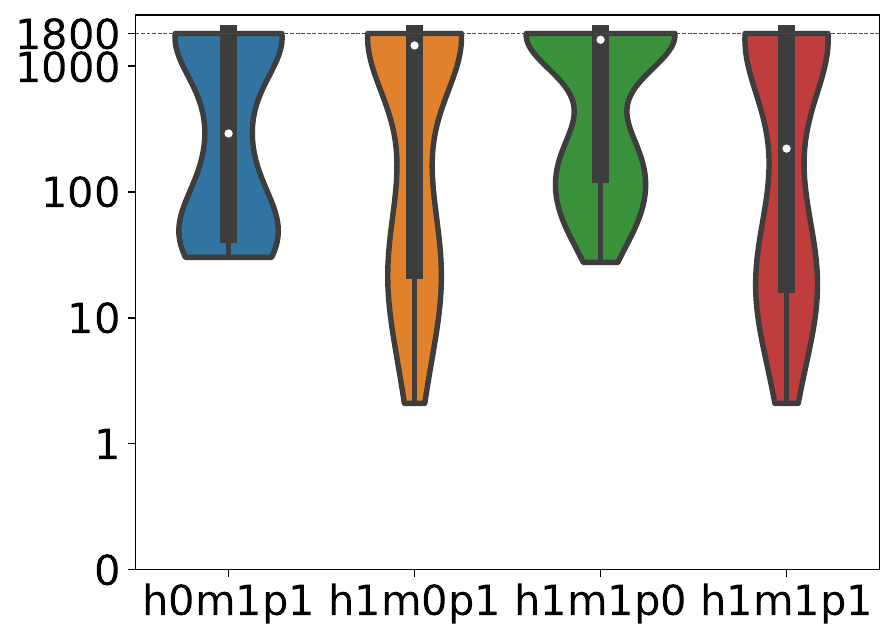}
        \caption{Den ($257 \times 256$)}
        \label{fig:violin-den}
    \end{subfigure}

    \caption{Ablation Study for Various Workspaces 
    \\ \hspace*{1.3cm} (\textit{X-axis:} Approaches, \textit{Y-axis:} Runtime(\si{\second})) 
    \\ \hspace*{1.3cm} (Left: $R = 50$, Right: $R = 100$)}
    \label{fig:violin_plots}
\end{figure}


\begin{figure}[!ht]
    \begin{subfigure}{0.32\columnwidth} 
        \centering
        \includegraphics[scale=0.05]{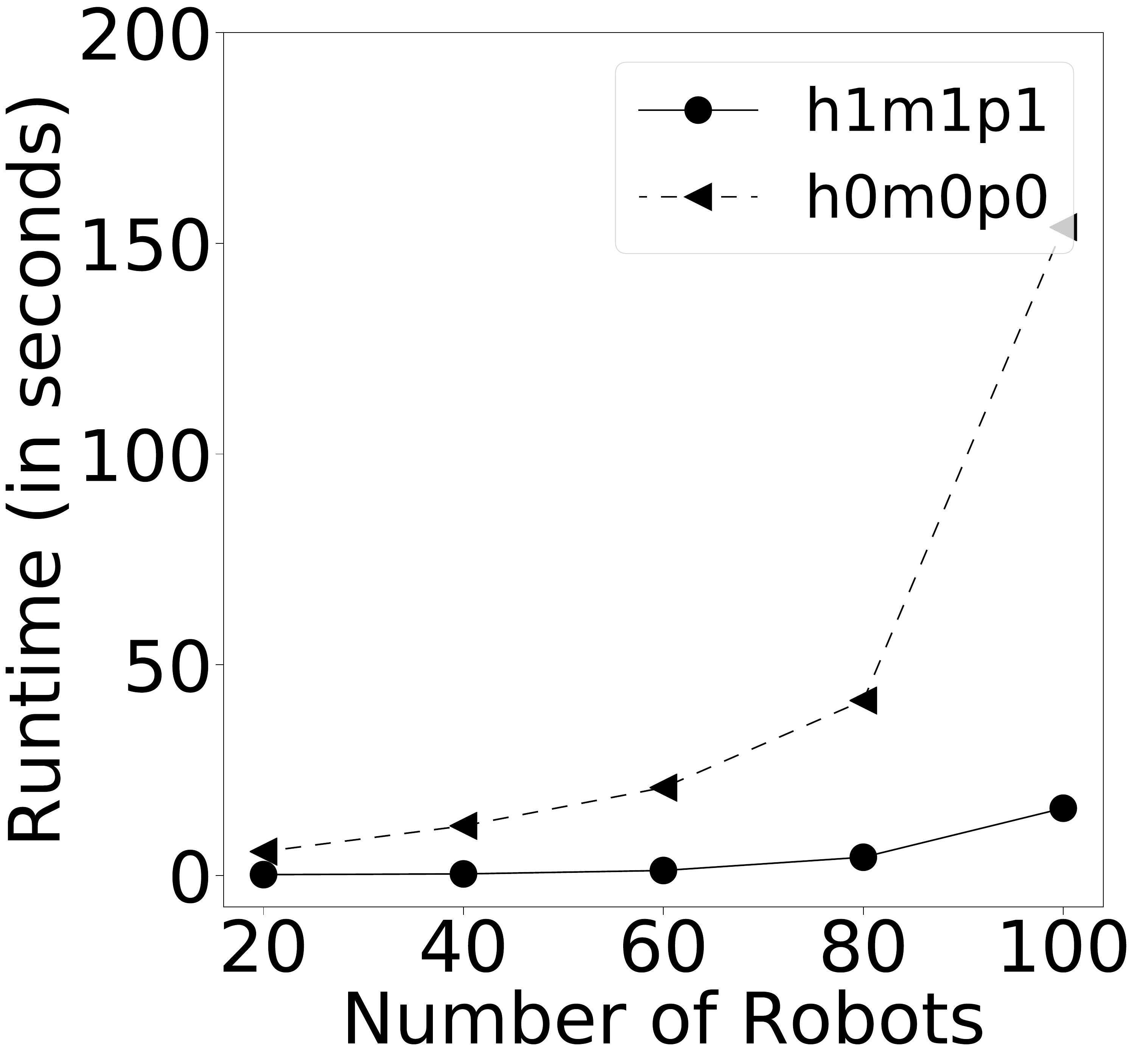}
        
        \caption{Varying $R$}
        \label{fig:vary-r}
    \end{subfigure}
    \begin{subfigure}{0.32\columnwidth} 
        \centering
        \includegraphics[scale=0.05]{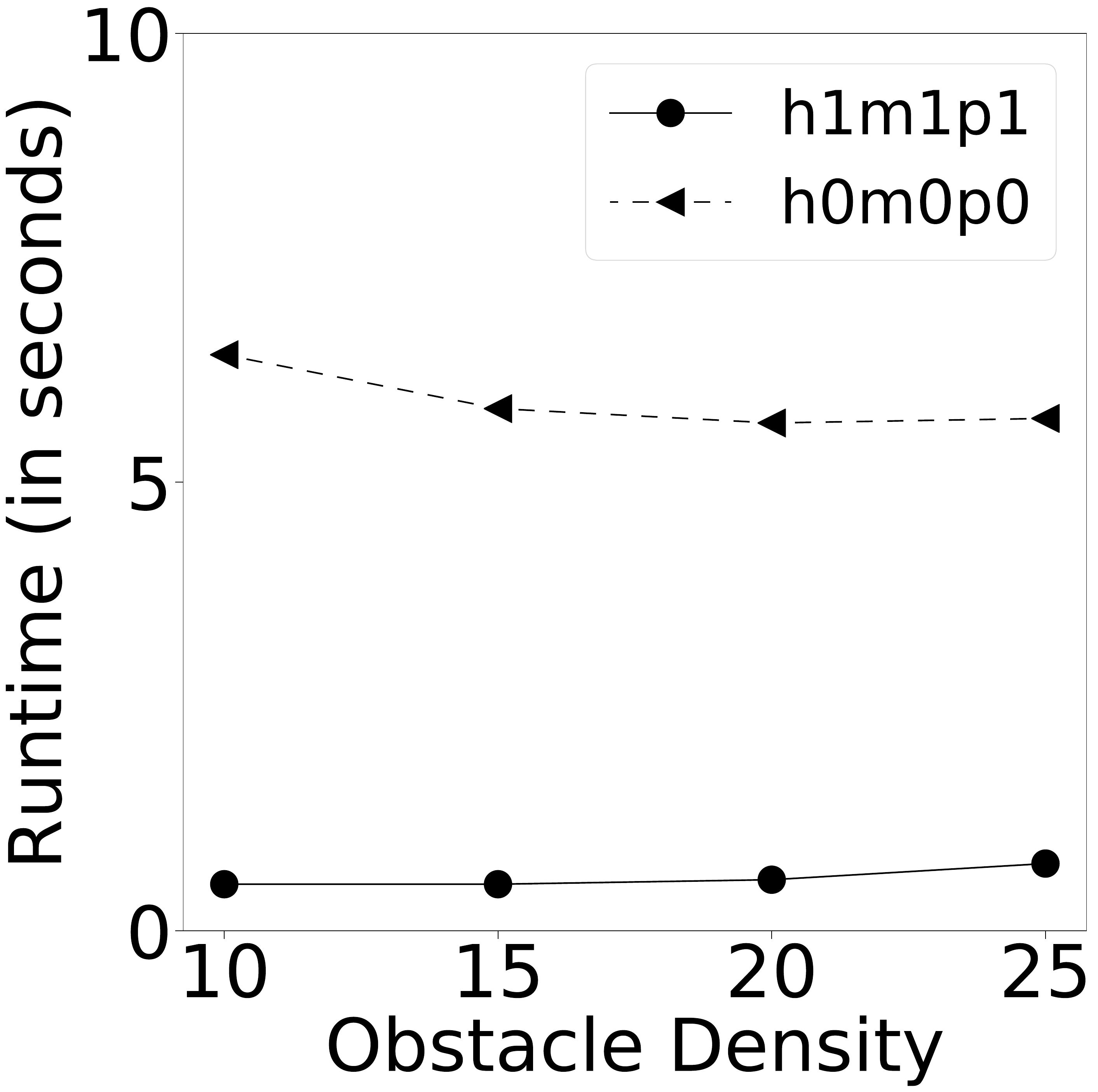}
        
        \caption{Varying $OD$}
        \label{fig:vary-od}
    \end{subfigure}
    \begin{subfigure}{0.32\columnwidth} 
        \centering
        \includegraphics[scale=0.05]{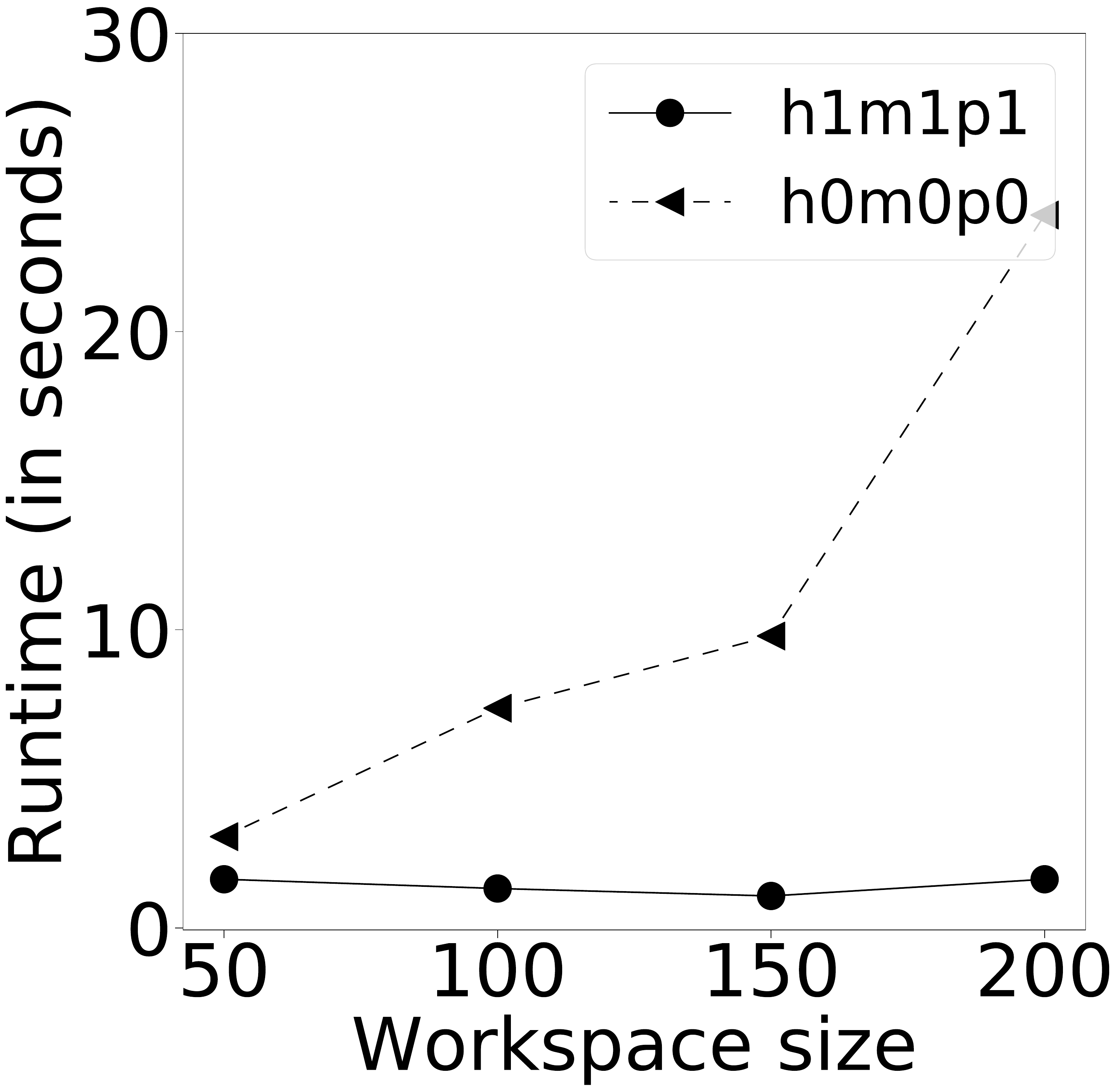}
        
        \caption{Varying $WS$ size}
        \label{fig:vary-ws}
    \end{subfigure}

\caption{Scalability Comparison between Baseline \\ \hspace*{1.2cm} ($\mathtt{h0m0p0}$) and Our Approach ($\mathtt{h1m1p1}$)  
    \\ \hspace*{1.2cm} (a) $OD = 20\%$, $WS: 200\times200$
   \\ \hspace*{1.2cm} (b) $R = 50$, $WS: 100\times100$
    \\ \hspace*{1.2cm} (c) $R = 50$, $OD: 20\%$
    }
    
    \label{fig:vary-param-plots}
\end{figure}

In this section, we present the results obtained from our experimental evaluation of Algorithm~\ref{algo1}.

\subsection{Experimental Setup}
\label{exp:setup}


To assess Algorithm~\ref{algo1}, we consider the state-of-the-art CBS-TA~\cite{cbsta} as the baseline.
Our approach consists of three pivotal components: 
(a) \textbf{H}euristic distance-based assignment computation, denoted by `$\mathtt{h}$', computes the assignments efficiently by trying to avoid the exhaustive computation of all robot-goal paths (unconstrained). 
(b) Path \textbf{m}emoization, denoted by `$\mathtt{m}$', caches the paths computed under certain constraints for future reuse, and
(c) Assignment computation \textbf{p}ostponement, denoted by `$\mathtt{p}$', uses the encountered conflicts to postpone assignment computations,
%
%
We use the notations $\mathtt{h}$, $\mathtt{m}$, and $\mathtt{p}$ in the forthcoming plots. 
Here are a few examples of what they represent:
$\mathtt{h0m0p0}$: denotes the baseline in which none of the three enhancements are present,
$\mathtt{h0m0p1}$: denotes the partially enhanced version consisting of only the postponement of assignment computation, and
$\mathtt{h1m1p1}$: denotes our approach consisting of all the three enhancements. 
We implement the baseline and our proposed algorithm in Python. 
The source code of the implementation is submitted as supplementary material.

\subsubsection{Benchmarks and Evaluation Metrics.}

We evaluate Algorithm~\ref{algo1} on benchmark workspaces~\cite{sturtevant2012benchmarks, stern2019mapf} as well as a randomly generated workspace.
%
We use runtime to be the evaluation metric.

%

We run all the experiments in a desktop machine with Intel\textregistered \ Core\textsuperscript{TM} i$7$-$8700$ CPU\,@\,$3.20$\,GHz processor, $32$\,GB RAM, and Ubuntu $20.04$ OS. 
We run each experiment for $50$ times to report the results.

\subsection{Experimental Results}
\label{exp:results}

Throughout our experiments, we take $900 \si{\second}$ and $1800 \si{\second}$ as the timeouts for the multi-robot systems having $50$ and $100$ robots, respectively. 

\subsubsection{Algorithm~\ref{algo1} vs. Baseline.}
In Figure~\ref{fig:scatter_plots}, we present the runtime scatter plots for the comparison between different approaches.
The X-axis represents the runtime ($\si{\second}$) of the baseline ($\mathtt{h0m0p0}$), while the Y-axis displays the runtime ($\si{\second}$) of the labeled approach.
The number of robots is denoted by $R$.
The first four scatter plots in each row are for $R = 50$ whereas the last four are for $R = 100$.
The red horizontal and vertical lines indicate the timeout. 
The crossover point of the runtimes is depicted by the red diagonal line.
We observe that across all cases, our approach ($\mathtt{h1m1p1}$) outperforms the baseline ($\mathtt{h0m0p0}$) and the other variations that implement only a single enhancement.


\subsubsection{Ablation Study.}
In Figure~\ref{fig:violin_plots}, we use comparative violin plots to perform an ablation study of the three key enhancements that we incorporate into our approach.
We present the comparative violin plots for two scenarios within each workspace: one with $50$ robots (left side plot) and another with $100$ robots (right side plot).
The X-axis represents the following approaches: $\mathtt{h0m1p1}$, $\mathtt{h1m0p1}$, $\mathtt{h1m1p0}$, and $\mathtt{h1m1p1}$ (see Section~\ref{exp:setup} for their interpretation).
%
%
The Y-axis represents the runtime ($\si{\second}$) of the four approaches.
The red horizontal line marks the timeout. 
The white dot within the embedded box plot in a violin indicates the median. 
The density of the data points in reflected by the width of the violin. 
Across all cases, the median and the width of the violins collectively show that our approach ($\mathtt{h1m1p1}$), without removal of any enhancement, has the most optimized performance.

\subsubsection{Scalability Analysis.}
In Figure~\ref{fig:vary-param-plots}, we vary three parameters, namely, the number of robots $R$, the obstacle density $OD$, which represents the percentage of workspace cells that are occupied by obstacles, and the workspace size to compare the runtime of our approach ($\mathtt{h1m1p1}$) with that of baseline ($\mathtt{h0m0p0}$).

\section{Conclusion}
\label{sec:conclusion}

We have presented a centralized algorithm to solve the multi-robot goal assignment problem while optimizing the total cost of movement of all the robots. 
The paths assigned to the robots by our algorithm are free from collisions. 
We have considered the established CBS-TA as the baseline for the evaluation of our algorithm.
Our experimental results, in particular, the scatter plots, reflect that our approach outperforms the baseline by an order of magnitude for almost all the cases.

Our algorithm can also be applied to optimize makespan while solving the collision-free multi-robot goal assignment problem. 
To achieve this, the assignment computation module, currently focusing on optimizing total cost, needs to be replaced by the assignment computation module that optimizes makespan~\cite{lbap, gross1959bottleneck}. 


\bibliography{references.bib}    

\begin{thebibliography}{23}
\providecommand{\natexlab}[1]{#1}

\bibitem[{Aakash and Saha(2022)}]{our1stPaper}
Aakash; and Saha, I. 2022.
\newblock It Costs to Get Costs! {A} Heuristic-Based Scalable Goal Assignment Algorithm for Multi-Robot Systems.
\newblock In \emph{ICAPS}, 2--10. {AAAI} Press.

\bibitem[{Chegireddy and Hamacher(1987)}]{chegireddy1987algorithms}
Chegireddy, C.~R.; and Hamacher, H.~W. 1987.
\newblock Algorithms for Finding k-Best Perfect Matchings.
\newblock \emph{Discrete applied mathematics}, 18(2): 155--165.

\bibitem[{Chen et~al.(2021)Chen, Alonso-Mora, Bai, Harabor, and Stuckey}]{Chen21}
Chen, Z.; Alonso-Mora, J.; Bai, X.; Harabor, D.~D.; and Stuckey, P.~J. 2021.
\newblock Integrated Task Assignment and Path Planning for Capacitated Multi-Agent Pickup and Delivery.
\newblock \emph{IEEE Robotics and Automation Letters}, 6(3): 5816--5823.

\bibitem[{Das, Nath, and Saha(2021)}]{DasNS21}
Das, S.~N.; Nath, S.; and Saha, I. 2021.
\newblock OMCoRP: An Online Mechanism for Competitive Robot Prioritization.
\newblock In \emph{ICAPS}, 112--121.

\bibitem[{Eppstein(2016)}]{eppstein2016encyclopedia}
Eppstein, D. 2016.
\newblock Encyclopedia of Algorithms, Chapter k-Best Enumeration.
\newblock \emph{Springer}, 680: 1003--1006.

\bibitem[{Fulkerson, Glicksberg, and Gross(1953)}]{lbap}
Fulkerson, D.~R.; Glicksberg, I.~L.; and Gross, O.~A. 1953.
\newblock \emph{A Production-Line Assignment Problem}.
\newblock Santa Monica, California: The Rand Corporation.

\bibitem[{{Gonzalez-de-Santos} et~al.(2017){Gonzalez-de-Santos}, Ribeiro, Fernandez-Quintanilla, Lopez-Granados, Brandstoetter, Tomic, Pedrazzi, Peruzzi, Pajares, Kaplanis, Perez-Ruiz, Valero, del Cerro, Vieri, Rabatel, and Debilde}]{Gonzalez-de-Santos17}
{Gonzalez-de-Santos}, P.; Ribeiro, A.; Fernandez-Quintanilla, C.; Lopez-Granados, F.; Brandstoetter, M.; Tomic, S.; Pedrazzi, S.; Peruzzi, A.; Pajares, G.; Kaplanis, G.; Perez-Ruiz, M.; Valero, C.; del Cerro, J.; Vieri, M.; Rabatel, G.; and Debilde, B. 2017.
\newblock Fleets of robots for environmentally-safe pest control in agriculture.
\newblock \emph{Precision Agriculture}, 18: 574--614.

\bibitem[{Grippa et~al.(2019)Grippa, Behrens, Wall, and Bettstetter}]{GrippaBWB19}
Grippa, P.; Behrens, D.~A.; Wall, F.; and Bettstetter, C. 2019.
\newblock Drone delivery systems: job assignment and dimensioning.
\newblock \emph{Auton. Robots}, 43(2): 261--274.

\bibitem[{Gross(1959)}]{gross1959bottleneck}
Gross, O. 1959.
\newblock The Bottleneck Assignment Problem.
\newblock Technical Report P-1620, The Rand Corporation, Santa Monica, California.

\bibitem[{H{\"o}nig et~al.(2018)H{\"o}nig, Kiesel, Tinka, Durham, and Ayanian}]{cbsta}
H{\"o}nig, W.; Kiesel, S.; Tinka, A.; Durham, J.; and Ayanian, N. 2018.
\newblock Conflict-Based Search with Optimal Task Assignment.
\newblock In \emph{AAMAS}, 757--765.

\bibitem[{Kuhn(1955)}]{kuhn1955hungarian}
Kuhn, H.~W. 1955.
\newblock The Hungarian Method for the Assignment Problem.
\newblock \emph{Naval research logistics quarterly}, 2(1-2): 83--97.

\bibitem[{Li et~al.(2021)Li, Tinka, Kiesel, Durham, Kumar, and Koenig}]{TKDKK21}
Li, J.; Tinka, A.; Kiesel, S.; Durham, J.~W.; Kumar, T. K.~S.; and Koenig, S. 2021.
\newblock Lifelong Multi-Agent Path Finding in Large-Scale Warehouses.
\newblock In \emph{AAAI}, 11272--11281.

\bibitem[{Ma and Koenig(2016)}]{ma2016optimal}
Ma, H.; and Koenig, S. 2016.
\newblock Optimal Target Assignment and Path Finding for Teams of Agents.
\newblock In \emph{AAMAS}, 1144--1152.

\bibitem[{Murty(1968)}]{murty1968algorithm}
Murty, K.~G. 1968.
\newblock An Algorithm for Ranking all the Assignments in Order of Increasing Cost.
\newblock \emph{Operations Research}, 16(3): 682--687.

\bibitem[{Sharon et~al.(2012)Sharon, Stern, Felner, and Sturtevant}]{cbs_conference}
Sharon, G.; Stern, R.; Felner, A.; and Sturtevant, N.~R. 2012.
\newblock Conflict-Based Search For Optimal Multi-Agent Path Finding.
\newblock In \emph{AAAI}, 563--569. {AAAI} Press.

\bibitem[{Sharon et~al.(2015)Sharon, Stern, Felner, and Sturtevant}]{cbs_journal}
Sharon, G.; Stern, R.; Felner, A.; and Sturtevant, N.~R. 2015.
\newblock Conflict-Based Search for Optimal Multi-Agent Pathfinding.
\newblock \emph{Artificial Intelligence}, 219: 40--66.

\bibitem[{Stern et~al.(2019)Stern, Sturtevant, Felner, Koenig, Ma, Walker, Li, Atzmon, Cohen, Kumar, Boyarski, and Bartak}]{stern2019mapf}
Stern, R.; Sturtevant, N.~R.; Felner, A.; Koenig, S.; Ma, H.; Walker, T.~T.; Li, J.; Atzmon, D.; Cohen, L.; Kumar, T. K.~S.; Boyarski, E.; and Bartak, R. 2019.
\newblock Multi-Agent Pathfinding: Definitions, Variants, and Benchmarks.
\newblock \emph{SoCS}, 151--158.

\bibitem[{Sturtevant(2012)}]{sturtevant2012benchmarks}
Sturtevant, N. 2012.
\newblock Benchmarks for Grid-Based Pathfinding.
\newblock \emph{Transactions on Computational Intelligence and AI in Games}, 4(2): 144 -- 148.

\bibitem[{Tian et~al.(2009)Tian, Yang, Qi, and Yang}]{Tian09}
Tian, Y.-T.; Yang, M.; Qi, X.-Y.; and Yang, Y.-M. 2009.
\newblock Multi-robot task allocation for fire-disaster response based on reinforcement learning.
\newblock In \emph{2009 International Conference on Machine Learning and Cybernetics}, volume~4, 2312--2317.

\bibitem[{Turpin, Michael, and Kumar(2013)}]{6630671}
Turpin, M.; Michael, N.; and Kumar, V. 2013.
\newblock Concurrent assignment and planning of trajectories for large teams of interchangeable robots.
\newblock In \emph{ICRA}, 842--848.

\bibitem[{Turpin et~al.(2013)Turpin, Mohta, Michael, and Kumar}]{Turpin-RSS-13}
Turpin, M.; Mohta, K.; Michael, N.; and Kumar, V. 2013.
\newblock Goal Assignment and Trajectory Planning for Large Teams of Aerial Robots.
\newblock In \emph{RSS}. Berlin, Germany.

\bibitem[{Turpin et~al.(2014)Turpin, Mohta, Michael, and Kumar}]{DBLP:journals/arobots/TurpinMMK14}
Turpin, M.; Mohta, K.; Michael, N.; and Kumar, V. 2014.
\newblock Goal assignment and trajectory planning for large teams of interchangeable robots.
\newblock \emph{Auton. Robots}, 37(4): 401--415.

\bibitem[{Yu and LaValle(2013)}]{Yu_LaValle_2013}
Yu, J.; and LaValle, S. 2013.
\newblock Structure and Intractability of Optimal Multi-Robot Path Planning on Graphs.
\newblock \emph{Proceedings of the AAAI Conference on Artificial Intelligence}, 27(1): 1443--1449.

\end{thebibliography}


\end{document}